\newcommand{\mc}{\mathcal}
\newcommand{\mb}{\mathbf}
\newcommand{\bd}{\boldsymbol}
\newcommand{\Ex}{\mathbb{E}}
\newcommand{\Vx}{\mathbb{V}}
\newcommand{\Px}{\mathbb{P}}
\newcommand{\Qx}{\mathbb{Q}}
\newcommand{\Gx}{\mathbb{G}}
\newcommand{\Ix}{\mathbb{I}}
\newcommand{\glbEst}{\widehat\theta^{\text{CCOD}}_n(t,a)}
\newcommand{\tgtEst}{\widehat\theta^0_n(t,a)}
\newcommand{\fedEst}{\widehat\theta^{\text{fed}}_n(t,a)}
\newcommand{\fedEstfixed}{\widehat\theta^{\text{fed}}_n(t,a;\bd\eta_{t,a})}
\newcommand{\skEst}{\widehat\theta^{k,0}_n(t,a)}
\newcommand{\estimand}{\theta^0(t,a)}
\newcommand{\bigCI}{\perp\mkern-10mu\perp}
\theoremstyle{plain}
\newtheorem{condition}{Condition}[section]
\newtheorem{theorem}{Theorem}[section]
\newtheorem{lemma}[theorem]{Lemma}
\theoremstyle{definition}
\newtheorem{assumption}[theorem]{Assumption}
\newtheorem{remark}[theorem]{Remark}
  \let\oldparagraph\paragraph
  \renewcommand{\paragraph}{
    \@ifstar
      \xxxParagraphStar
      \xxxParagraphNoStar
  }
  \newcommand{\xxxParagraphStar}[1]{\oldparagraph*{#1}\mbox{}}
  \newcommand{\xxxParagraphNoStar}[1]{\oldparagraph{#1}\mbox{}}
  \let\oldsubparagraph\subparagraph
  \renewcommand{\subparagraph}{
    \@ifstar
      \xxxSubParagraphStar
      \xxxSubParagraphNoStar
  }
  \newcommand{\xxxSubParagraphStar}[1]{\oldsubparagraph*{#1}\mbox{}}
  \newcommand{\xxxSubParagraphNoStar}[1]{\oldsubparagraph{#1}\mbox{}}
\patchcmd\longtable{\par}{\if@noskipsec\mbox{}\fi\par}{}{}
\def\maxwidth{\ifdim\Gin@nat@width>\linewidth\linewidth\else\Gin@nat@width\fi}
\def\maxheight{\ifdim\Gin@nat@height>\textheight\textheight\else\Gin@nat@height\fi}
\def\fps@figure{htbp}
  \renewcommand*\contentsname{Table of contents}
  \newcommand\contentsname{Table of contents}
  \renewcommand*\listfigurename{List of Figures}
  \newcommand\listfigurename{List of Figures}
  \renewcommand*\listtablename{List of Tables}
  \newcommand\listtablename{List of Tables}
  \renewcommand*\figurename{Figure}
  \newcommand\figurename{Figure}
  \renewcommand*\tablename{Table}
  \newcommand\tablename{Table}
\newcommand{\anon}{1}
\begin{document}

\def\spacingset#1{\renewcommand{\baselinestretch}%
{#1}\small\normalsize} \spacingset{1}


\if1\anon
  {\title{\bf\Large Targeted Data Fusion for Region-Specific Survival Effects in the AMP HIV Prevention Trials}
\author{Yi Liu$^{1,2}$,
    \hspace{.2cm} 
    Alexander W. Levis$^3$, Ke Zhu$^{1,4}$, Shu Yang$^1$, \\ Peter B. Gilbert$^5$, and Larry Han$^{*6}$ \bigskip \\
$^1$Department of Statistics, North Carolina State University\\
$^2$Duke Clinical Research Institute\\
$^3$Department of Biostatistics, Epidemiology and Informatics, \\University of Pennsylvania \\
$^4$Department of Biostatistics and Bioinformatics,\\ Duke University School of Medicine\\
$^5$Vaccine and Infectious Disease and Public Health Sciences Divisions,\\ Fred Hutch Cancer Center \\
\medskip
$^6$Department of Biostatistics and Data Science Institute, Brown University \\
$^*$Corresponding author: Larry Han, \texttt{larry\_han@brown.edu}
}
  \maketitle
} \fi

\if0\anon
{
  \bigskip
  \bigskip
  \bigskip
    \title{\Large\bf Targeted Data Fusion for Region-Specific Survival Effects in the AMP HIV Prevention Trials}
    \maketitle
  \medskip
} \fi

\bigskip
\begin{abstract}
The Antibody Mediated Prevention (AMP) trials opened a new scientific frontier by {showing} that passively administered monoclonal broadly neutralizing antibodies (bnAbs) could prevent HIV-1 acquisition. Conducted across multiple geographic regions, including the United States, Brazil, Peru, Switzerland, and sub-Saharan Africa, the AMP trials revealed substantial regional heterogeneity in treatment efficacy. These differences, together with privacy and regulatory limits on central data pooling, call for methods that borrow strength across regions without sharing individual-level data. To estimate region- and treatment–specific survival curves under distributional heterogeneity,  we develop a federated learning approach that {combines} site-specific estimators via an $\ell_1$-regularized {criterion that downweights} data sources not aligned with the target. {We further extend the framework to} a general class of causal contrasts, including the risk difference (RD), survival ratio (SR), and restricted mean survival time (RMST) difference. Through extensive simulations and an analysis of the AMP trials under different target populations, we show that the proposed approach provides privacy-preserving, region-adaptive inference with improved precision.  
\end{abstract}

\noindent%
{\it Keywords:} AMP HIV-1 prevention trials; Time-to-event outcome; Distribution shift; Semiparametric efficiency theory; Federated learning

\newpage
\spacingset{1.8} 

\section{Introduction}\label{sec:intro}

Four decades into the global HIV pandemic, more than 1.5 million people continue to acquire HIV-1 each year, including 150,000 infant infections worldwide. Although risk-reduction counseling, treatment-as-prevention, and daily oral PrEP have reduced incidence in some settings, longer-acting biomedical prevention agents are needed to improve adherence and achieve broader coverage across populations. Broadly neutralizing antibodies (bnAbs) have emerged as a promising strategy for HIV-1 prevention, particularly in settings where longer-acting biomedical agents may improve adherence and expand coverage beyond what is achievable with daily oral prophylaxis alone. 

VRC01, an IgG1 bnAb targeting the CD4-binding site on the HIV-1 envelope, exhibited broad in vitro activity and favorable pharmacokinetics in early clinical studies. Motivated by these results, the Antibody Mediated Prevention (AMP) program initiated the first large-scale test of whether a monoclonal bnAb can prevent sexual acquisition of HIV-1, and thus opened an important new direction for HIV prevention research \citep{corey2021two}. More broadly, AMP illustrates recurring challenges in modern clinical research: \textit{how to draw scientifically meaningful inference from multi-region trials in which populations differ, event rates are low, and direct pooling of individual-level data may be restricted.} 

These challenges are especially acute for right-censored time-to-event outcomes. Classical methods such as the Kaplan--Meier estimator and the Cox proportional hazards model \citep{kaplan1958nonparametric, cox1972regression} were developed primarily for single-site settings and can be inadequate when applied to heterogeneous multi-source data without strong assumptions. In particular, simple pooled analyses may target the wrong population when the scientific goal is inference for a specific region, while purely region-specific analyses can be valid but inefficient. Existing data-fusion and federated-learning methods for causal inference have focused largely on binary or continuous outcomes \citep{han2025federated, han2024privacy, yang2019combining, liu2024multi, han2023multiply, li2023targeting, makhija2024federated}, and do not directly address the combination of right censoring, cross-region heterogeneity, semiparametric efficiency, and target-region causal survival inference. In the AMP trials (and other studies where time-to-event outcomes are observed), restricting attention to binary event indicators discards valuable timing information about event dynamics and risk evolution over time. Recent advances in semiparametrically efficient and doubly robust survival estimation within a single source provide important building blocks, but they do not by themselves solve the multi-region problem \citep{westling2024inference, bai2013doubly, wolock2024framework}.

To handle the challenges arising in the AMP trials, we develop data-fusion methods that target a scientifically chosen region while adaptively borrowing information from other regions. Our work accommodates heterogeneity in covariate, outcome, and censoring distributions; avoids exchange of individual-level data; and supports clinically interpretable causal contrasts, including the risk difference (RD), survival ratio (SR), and restricted mean survival time (RMST). We first study an efficiency benchmark estimator under a common conditional outcome distribution (CCOD) assumption with full data sharing, and then develop a federated estimator that does not rely on this assumption and data sharing. Prior privacy-preserving survival analysis \citep{liuprivacy} did not address either the efficiency benchmark or the clinically meaningful causal contrasts considered here.

The remainder of the paper is organized as follows. Section \ref{sec:setup} introduces the AMP trials in greater detail, states the motivating scientific and clinical questions, and presents the statistical setup. Section \ref{sec:framework} develops the proposed estimators and their theory. Section \ref{sec:estimands} extends the framework to other causal contrasts. Section \ref{sec:data} presents the AMP data analysis, Section \ref{sec:simu} reports simulation studies, and Section \ref{sec:conclude} concludes the paper. 

We acknowledge the use of ChatGPT-5.4 for language polishing, grammar editing, and assistance with R plotting and results summary. No large language model was used for developing or coding the methodology, or conducting the simulations and data analyses in this manuscript.

\section{{AMP Trials Setting and Scientific Questions}}\label{sec:setup}

\subsection{AMP trials data and basic set-up}

We begin by describing the AMP trials, the scientific/clinical questions that motivate our analysis, and the statistical setup tailored to this application. The AMP trials, which included HVTN 704/HPTN 085 in the Americas and Europe, and HVTN 703/HPTN 081 in sub-Saharan Africa, enrolled 4,611 participants across four epidemiologically distinct regions and measured time to HIV-1 diagnosis with centralized endpoint adjudication and viral neutralization phenotyping. The prespecified primary analyses found that VRC01 did not significantly reduce overall HIV-1 acquisition, with estimated reductions of 26.6\% {(95\% confidence interval [CI]: –11.7\% to 51.8\%)} in the Americas/Europe and 8.8\% in sub-Saharan Africa {(95\% CI: –45.1\% to 42.6\%)}. The four regions: (i) South Africa (women), (ii) other sub-Saharan African countries (women), (iii) Brazil/Peru (men and transgender persons), and (iv) the United States/Switzerland (men and transgender persons), differed in demographic composition, baseline HIV risk, HIV-1 subtype distribution, viral susceptibility, and patterns of right-censoring. 

Based on the AMP trials data structure, we now introduce the tailored notation, causal estimands, and identifying assumptions used throughout the paper. A complete description of all notation is provided in Appendix \ref{subapp:notation}. Consider $K$ data sources (or ``sites'' or ``regions''), where covariate and outcome heterogeneity may arise either by chance or from structural differences. This setting reflects the reality of large international trials, in which participant characteristics, outcome and censoring patterns may vary across sites. {In the AMP setting, heterogeneity in viral resistance is driven by the evolution of HIV that balances retaining replicative fitness with evading effective immune responses, which has resulted in about 10\% increase in neutralization resistance to VRC01 and other antibodies per decade of evolution \citep{mkhize2023neutralization}. Heterogeneity in baseline risk is caused by several inter-related factors including local prevalence of HIV-1, the amount of exposure to HIV-1 that depends on demographics, the transmissibility of HIV-1 exposures that depends on antiretroviral therapy use and viral load, and differential use of HIV-1 prevention tools such as pre-exposure prophylaxis. }

For each participant, let $\mb X$ denote baseline covariates (age, weight, and a machine-learning [ML] risk score in AMP trials) and $A\in\{0,1\}$ denote the treatment assignment ($A=1$ for the combined high- and low-dose bnAb group, $A=0$ for placebo). Therefore, our analysis targets the causal contrast between any bnAb treatment and placebo, which follows the primary analysis by \cite{corey2021two}. Under the potential outcomes framework \citep{Rubin1974}, we define potential event times $T^{(1)}$ and $T^{(0)}$, representing the time to HIV-1 diagnosis by week 80 under bnAb and placebo, respectively, and potential censoring times $C^{(1)}$ and $C^{(0)}$. The observed event time is 
$T = T^{(A)} = A T^{(1)} + (1 - A) T^{(0)},$ and the observed censoring time is 
$C = C^{(A)} = A C^{(1)} + (1 - A) C^{(0)},$ where we adopt the stable unit treatment value assumption (SUTVA; see \cite{rosenbaum1983central}). Under right censoring, we observe the censored outcome $Y = \min(T, C)$ and event indicator $\Delta = \Ix(T \le C)$, where $\Ix(\cdot)$ is the indicator function. We denote an arbitrary observation as $\mc O=(\mb X, A, Y, \Delta, R)$, where $R\in\{0,1,\ldots,K-1\}$ indexes the study site. The combined dataset from all regions is $\{\mc O_i, i=1,\ldots,n\}$, with $R=0$ indicating the \textit{target site} and $R=1,\ldots,K-1$ the \textit{source sites}. {Our sampling framework is under a single superpopulation for $\mc O$, and the target population is the subpopulation corresponding to $R=0$.} 

\begin{table}[ht]
\footnotesize
\singlespacing
\centering
    \begin{tabular}{rccccc}
    \toprule
    & \multicolumn{5}{c}{\textbf{Treated (bnAb) group}} \\
    & \makecell[c]{\textbf{Total} \\ ($n=3,076$)} & \makecell[c]{\textbf{SA (women)} \\ ($n = 679$)} &  \makecell[c]{\textbf{OA (women)} \\ ($n = 608$)} & \makecell[c]{\textbf{BP (men, TG)} \\ ($n = 846$)}  & \makecell[c]{\textbf{US (men, TG)} \\ ($n = 943$)} \\
    \midrule
        Age (years) & 25.9 (4.60) & 27.0 (5.19) & 25.4 (4.59) & 25.1 (3.70) & 26.2 (4.68) \\
        Weight (kg) & 72.8 (15.64) & 68.8 (14.24) & 65.2 (12.63) & 70.9 (12.42) & 82.3 (16.43) \\
        ML risk score & 0.0 (1.00) & -0.01 (1.00) & 0.02 (1.00) & 0.76 (0.67) & -0.68 (0.71) \\
        HIV-1 diagnosis & 107 (3.48\%) & 27 (3.98\%) & 20 (3.29\%) & 46 (5.44\%) & 14 (1.49\%) \\
    \midrule
    & \multicolumn{5}{c}{\textbf{Control (placebo) group}} \\
    & \makecell[c]{\textbf{Total} \\ ($n=1,535$)} & \makecell[c]{\textbf{SA (women)} \\ ($n = 340$)} &  \makecell[c]{\textbf{OA (women)} \\ ($n = 297$)} & \makecell[c]{\textbf{BP (men, TG)} \\ ($n = 428$)}  & \makecell[c]{\textbf{US (men, TG)} \\ ($n = 470$)} \\
    \midrule
        Age (years) & 25.9 (4.72) & 26.6 (5.28) & 25.4 (4.78) & 25.2 (3.94) & 26.1 (3.79) \\
        Weight (kg) & 72.5 (16.35) & 67.6 (14.77) & 65.1 (13.64) & 71.1 (12.84) & 81.8 (17.5) \\
        ML risk score & 0.0 (1.00) & 0.02 (0.92) & -0.02 (0.98) & 0.75 (0.67) & -0.68 (0.73) \\
        HIV-1 diagnosis & 67 (4.36\%) & 16 (4.71\%) & 13 (4.38\%) & 29 (6.78\%) & 9 (1.91\%) \\
        \bottomrule
    \end{tabular}
    \begin{tablenotes}\scriptsize
    \item The ML risk score is a baseline standardized machine-learned score predictive of HIV acquisition \citep{corey2021two}. Age, ML risk score, and weight are measured at baseline and summarized by mean (standard deviation), while HIV-1 diagnosis is measured by week 80 and is summarized by count (percentage). TG: transgender. 
    \end{tablenotes}
    \caption{Summary statistics of AMP trial data by treatment group and region. }\label{tab:AMP-sumstat}
\end{table}

To illustrate the regional heterogeneity in AMP trials, Table~\ref{tab:AMP-sumstat} summarizes baseline covariates and HIV-1 outcomes by region and treatment group. HIV acquisition was uniformly rare but varied across populations; baseline risk scores and body weight differed markedly; and subtype distributions aligned with known regional epidemiology.

\subsection{Scientific question, estimand and identification assumptions}

Because HIV acquisition was rare in all AMP regions, the original AMP statistical analysis plan was not powered to estimate region-specific prevention efficacy, nor to assess whether causal survival effects varied across regions. However, each region represents a scientifically meaningful target population because HIV epidemics differ in scale, subtype, and risk structure, and next-generation bnAb deployment is likely to require region-tailored strategies. 

Further practical constraints complicate statistical inference. Privacy, regulatory, and data-use agreements prohibit pooling individual-level data across regions, preventing joint estimation of survival models. Purely stratified analyses, although unbiased, have limited precision and cannot borrow strength adaptively. These considerations motivate the key scientific question of interest: \emph{How can we obtain valid, causal, region-specific survival estimates of bnAb prevention efficacy in multi-region trials when individual-level data cannot be centralized, covariate distributions differ across sites, and event rates are low?}

{With this scientific question in mind,} we define the estimand of interest as the target-site, treatment-specific survival function:
$$
\estimand = \Px(T^{(a)}>t\mid R=0),  a\in\{0,1\}, t\in[0,\tau],
$$ 
for a fixed follow-up horizon $\tau<\infty$. In the AMP trials, follow-up was conducted for up to $\tau = 601$ days (approximately 80 weeks) post enrollment. Here, $\Px$ denotes the underlying data-generating distribution and the superscript ``0'' indicates that the attached quantity is specific to the target site. The estimand $\theta^0(t,a)$ represents the probability that an individual in the target region remains HIV-infection-free up to time $t$ if assigned treatment $a$. 


Several \textit{nuisance functions} play a key role in our estimators. For each site $k=0,1,\dots,K-1,$ define the conditional survival function $S^k(t\mid a, \mb X)=\Px(T > t\mid A=a, \mb X, R=k).$ To handle both continuous- and discrete-time settings, we adopt the Riemann–Stieltjes product integral notation $\prodi$ \citep{gill1990survey}:
\begin{align*}
    S^k(t\mid a, \mb X) = \Prodi_{(0,t]}\{1-\Lambda^k(du\mid a,\mb X)\}, \quad\text{ where }
    \Lambda^k(t\mid a,\mb X) = \int_0^t\frac{N^k_{1}(du\mid a,\mb X)}{D^k(u\mid a,\mb X)}
\end{align*}
is the conditional cumulative hazard function under $R=k$, with $N^k_{\delta}(t\mid a,\mb X) = \Px(Y\leq t,\Delta=\delta\mid A=a,\mb X, R=k)$ the conditional cumulative incidence function of event $(\delta=1)$ or censoring $(\delta=0)$, and $D^k(t\mid a,\mb X)=\Px(Y\geq t\mid A=a, \mb X, R=k)$. Under discrete time the product integral $\prodi$ reduces to an ordinary product $\prod$, and under continuous time to $\exp\{-\Lambda^k(t\mid a,\mb X)\}$. 


To interpret the causal estimand $\theta^0(t,a)$ as the treatment-specific survival probability in the target region, several  assumptions for identification (in addition to SUTVA) are made.

\begin{assumption}[Unconfoundedness]\label{asp:unconf}
$A \bigCI T^{(a)} \mid \mb X, R$ and $A \bigCI C^{(a)} \mid \mb X, R$, $a \in \{0,1\}$.
\end{assumption}

This assumption states that, within each region, participants with identical baseline covariates had equal probabilities of receiving each of the treatments bnAb low-dose/high-dose or placebo. In our analysis, we merge the low- and high-dose groups into one active treatment group. In fact, under complete, uniform randomization in the AMP trials {in every region}, $A \bigCI (T^{(a)}, C^{(a)}, \mb X, R)$ holds because every participant had an equal chance to receive any of the three treatments {(and so the binary treatment $A$ we defined), which substantially mitigates confounding concerns. In this randomized trial setting, covariate adjustment is therefore used primarily to improve efficiency and to accommodate covariate-dependent censoring, rather than to remove treatment confounding.} We adopt the weaker form in Assumption~\ref{asp:unconf} to accommodate more general settings where randomization probabilities may differ across regions. In such cases, covariate adjustment can effectively remove residual biases arising from random or structural imbalances. 

\begin{assumption}[Non-informative censoring]\label{asp:inde-cen} 
$C^{(a)}\bigCI T^{(a)}\mid A=a, \mb X, R$, for $a \in \{0,1\}$.
\end{assumption}

This means that, after accounting for baseline covariates, the reason a participant is censored (e.g., from loss to follow-up or study withdrawal) is unrelated to their underlying HIV acquisition risk. {In the AMP trials, retention was high and censoring was largely administrative, reflecting the implementation science preparedness research that supported high acceptability of infusions, and rigorous community engagement and trial conduct, making this assumption more plausible.} 

\begin{assumption}[Positivity]\label{asp:positivity}
    There exists an $\eta\in(0,\infty)$, such that $\Px(R=k)\geq 1/\eta$ and for $\Px$-almost all $\mb X$, 
    $\min\cup_{k=0}^{K-1}\{\pi^k(a\mid\mb X), G^k(t\mid a,\mb X)\}\geq 1/\eta, \text{ and }\min_k S^k(t\mid a,\mb X)>0.$ Here, $\pi^k(a\mid\mb X) = \Px(A=a\mid \mb X, R=k)$ is the propensity score of the treatment $A=a$ and $G^k(t\mid a,\mb X)=\Px(C>t\mid A=a,\mb X, R=k)$ is the conditional survival function of the censoring time in site $k$, for $k=0,1,\dots,K-1$, $a\in\{0,1\}$ and $t\in[0,\tau]$. 
\end{assumption}

This assumption ensures that all relevant covariate patterns appear with non-negligible probability in every region and that both treatment groups and censoring distributions are well represented. In the AMP data, both bnAb and placebo recipients were enrolled from each region {and randomized at baseline}, censoring before week 80 was rare, {and the retention rate was high in each region,} supporting this assumption. The additional condition $\Px(R=k)\ge1/\eta$ guarantees that no site contributes an extremely small sample share to the pooled analysis \citep{lee2022doubly, cao2024transporting}, and the four regions in the AMP trials each contribute sufficiently large sample sizes to satisfy this requirement.

\subsection{Estimation by region-specific data}\label{subsec:tgtonlyest}


The region-level results can be formally characterized leveraging the \textit{semiparametric efficient estimator} developed by \cite{westling2024inference}. For later use, define $\mc H_{t,a}$ that appears in the augmented term of the EIFs: 
\begin{align}\label{def:H}
    \mc H_{t,a}(\mc O;S,G) = \frac{\Ix(Y\leq t,\Delta=1)}{S(Y\mid a,\mb X)G(Y\mid a, \mb X)} - \int_0^{t\wedge Y}\frac{\Lambda(du\mid a,\mb X)}{S(u\mid a, \mb X)G(u\mid a,\mb X)}. 
\end{align}
{This term is a survival-process residual: it compares the observed event by time $t$ with its cumulative expectation under treatment $a$, after adjusting for survival and censoring.} Since $S$ and $\Lambda$ have a one-to-one mapping, we omit $\Lambda$ as an input to $\mc H_{t,a}(\mc O;S,G)$. Under a completely nonparametric model, the EIF $\varphi^{*0}_{t,a}(\mc O;\Px)$ for each $(t,a)\in[0,\tau]\times\{0,1\}$ only uses data from the target site $(R=0)$ \citep{westling2024inference} is given by $\varphi^{*0}_{t,a}(\mc O;\Px) =$
\begin{align*}
    \frac{\Ix(R=0)}{\Px(R=0)}\{S^0(t\mid a, \mb X)-\theta^0(t,a)\}-\frac{\Ix(R=0)}{\Px(R=0)}\frac{\Ix(A=a)}{\pi^0(a\mid\mb X)}\mc H_{t,a}(\mc O;S^0,G^0)S^0(t\mid a, \mb X). 
\end{align*}
To simplify notation, we use $\varphi^{*0}_{t,a}(\mc O;\Px)$ as shorthand for $\varphi^{*0}_{t,a}(\mc O; S^0, G^0, \pi^0)$, where $\Px$ means that the EIF is evaluated by the true nuisance functions $(S^0, G^0, \pi^0)$. $\widehat\Px$ denotes the same functional with the nuisance functions replaced by their estimators $(\widehat S^0, \widehat G^0, \widehat\pi^0)$. This should not be confused with $\Px_n$, where $\Px_n[f(\mc O)] = n^{-1}\sum_{i=1}^n f(\mc O_i)$ denotes the empirical average. 

The above EIF motivates a semiparametrically efficient estimator $\tgtEst$ by solving an estimating equation based on target-only data: $\Px_n[\widehat\varphi^{*0}_{t,a}(\mc O;\widehat\Px)] = 0.$ {We implement this estimator for each region with results shown in the following Figure \ref{fig:AMP_siteSurvs}. Nuisance functions are fitted by ensemble learning \citep{westling2024inference}. These curves show that VRC01 reduced HIV-1 risk across all regions, but with heterogeneity in magnitude consistent with differences in circulating viral susceptibility and baseline risk.}
\begin{figure}[ht]
    \centering
    \includegraphics[width=0.85\textwidth]{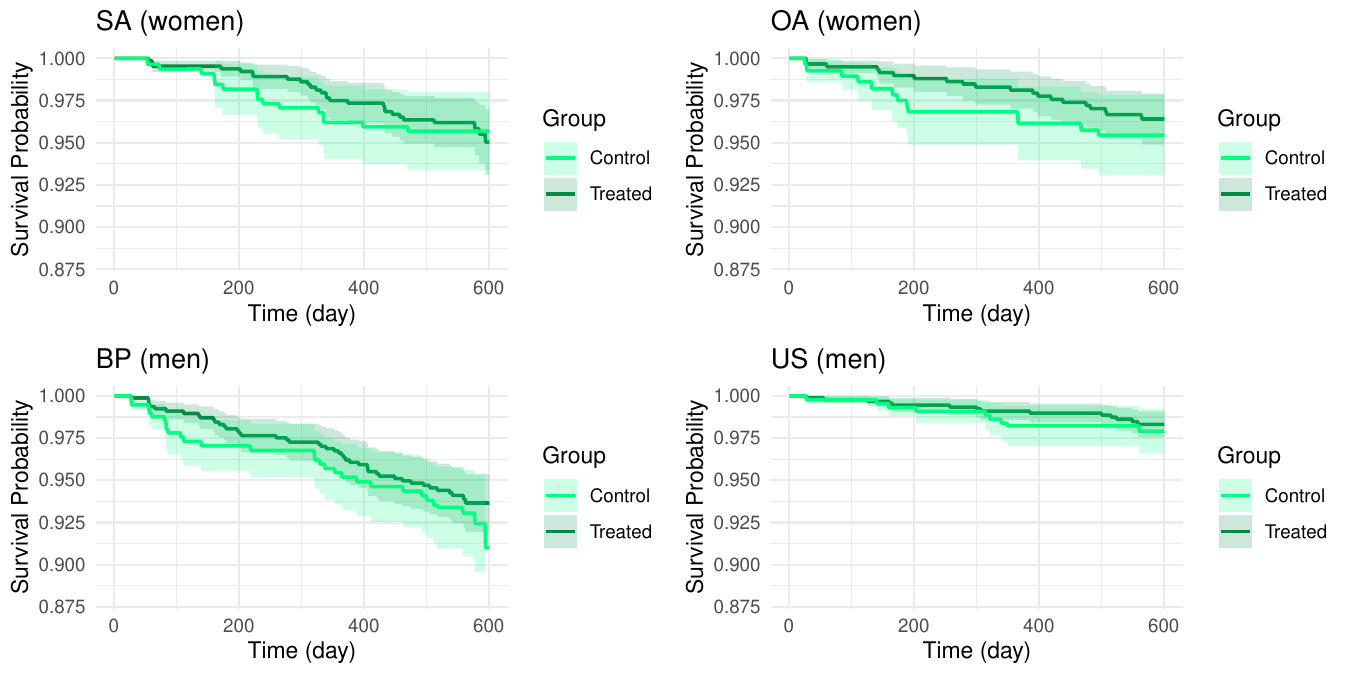}
    \caption{Estimated region-specific survival curves of the HVTN 704/HPTN 085 and HVTN 703/HPTN 081 trials.}
    \label{fig:AMP_siteSurvs} 
\end{figure}

{To further improve precision of each region-specific estimate, we develop targeted data-fusion approaches in the next section. Challenges in AMP trials require} methodology that (i) treats each region as its own target population; 
{(ii) adaptively borrows information across regions through locally computed site-level estimates and shared summary-level quantities, without requiring individual-level data to be exchanged;} (iii) supports semiparametric and algorithmic estimators for right-censored time-to-event outcomes; (iv) accommodates modern causal survival contrasts; and {(v) respects privacy-preserving settings in which only summary-level data can be shared.} This need extends beyond AMP: future bnAb programs will increasingly rely on multi-region, multi-population clinical trials where scientific inference depends on contextually appropriate, region-specific causal estimation.




\section{Targeted Data Fusion for the AMP Trials}\label{sec:framework}

\subsection{The efficiency benchmark}\label{subsec:CCOD}

Consider defining one of the sites, e.g., South Africa (SA), as the target $R=0$. Leveraging other data sources such as other sub-Saharan African (OA), Brazil/Peru (BP), and United States/Switzerland (US) regions ($R=1,2,3$) as sources in the AMP trials can potentially improve the estimation efficiency of the SA's survival curves. A common assumption in data fusion literature is that the conditional distribution of the potential event time is identical across regions, i.e., CCOD, which we formalize as follows.  

\begin{assumption}[Common conditional outcome distribution]\label{asp:ccod}
$T^{(a)} \bigCI R \mid \mb X$ for $a \in \{0,1\}$.
\end{assumption}

Assumption \ref{asp:ccod} states that the conditional distributions of the potential event times $T^{(a)}$ are homogeneous across {AMP regions SA, OA, US and BP}, given baseline covariates. Equivalently, this implies
$S^k(\,\cdot\mid a,\mb X) = \Px(T>\cdot\mid A=a,\mb X, R=k) = \Px(T>\cdot\mid A=a,\mb X) \eqqcolon \bar S(\,\cdot\mid a,\mb X)$
for all regions $k=0,1,\dots,3$, while allowing arbitrary shifts in the covariate distributions across regions. Under this restriction, a semiparametrically efficient  estimator for the target estimand $\estimand$ can be constructed by pooling all AMP regions. 

Consider the following ``global'' nuisance functions {for the entire AMP trial population} (each defined without conditioning on region indicator $R$): the global propensity score $\bar\pi(a\mid\mb X) = \Px(A=a\mid\mb X)$, the global conditional survival function for the event time $\bar S(t\mid a,\mb X)=\Px(T>t\mid A=a,\mb X)$, and the global conditional survival function for the censoring time $\bar G(t\mid a,\mb X)=\Px(C>t\mid A=a,\mb X)$. Moreover, let the target site propensity be given by $q^0(\mb X) = \Px(R=0\mid\mb X)$. {To identify the target-region survival curve under CCOD, we require positivity conditions for the global treatment, conditional censoring and survival function over the time window of interest, stated in Assumption \ref{asp:positivity-ccod} below.} 

\begin{assumption}\label{asp:positivity-ccod}
    There exists $\eta\in(0,\infty)$ such that for $\Px$-almost all $\mb X$, $\min\{\bar\pi(a\mid\mb X), \bar G(t\mid a,\mb X)\}\geq 1/\eta$ and $\bar S(t\mid a,\mb X)>0$, for $a\in\{0,1\}$ and $t\in[0,\tau]$.  
\end{assumption}

\begin{theorem}\label{thm:CCOD-EIF}
     Under Assumptions \ref{asp:unconf}, \ref{asp:inde-cen}, \ref{asp:ccod} and \ref{asp:positivity-ccod}, the semiparametric EIF of $\estimand$ for $t\in[0,\tau]$ and $a\in\{0,1\}$ is given by $ \varphi^{*\text{CCOD}}_{t,a}(\mc O;\Px) = $
\begin{align*}
   \frac{\Ix(R=0)}{\Px(R=0)}\{\bar S(t\mid a, \mb X)-\estimand\} - \frac{q^0(\mb X)}{\Px(R=0)}\frac{\Ix(A=a)}{\bar\pi(a\mid\mb X)}\mc H_{t,a}(\mc O;\bar S,\bar G)\bar S(t\mid a, \mb X). 
\end{align*}
\end{theorem}
The proof of Theorem \ref{thm:CCOD-EIF} is provided in Appendix \ref{app:theory-CCOD}.
{Heuristically, this EIF has the familiar structure of EIFs in causal inference: an anchor term based on the target-region conditional survival function $\bar S(t\mid a,\mb X)$, together with an augmentation term involving the nuisance functions. The augmentation term corrects for treatment assignment and censoring, while the target-site propensity score $q^0(\mb X)$ plays a transport role, ensuring that the EIF is centered on the target-region estimand.}

A semiparametrically efficient and doubly robust ``CCOD estimator'' $\glbEst$ is given by solving the estimating equation $\Px_n[\varphi^{*\text{CCOD}}_{t,a}(\mc O;\widehat\Px)] = 0.$ {Note that the term $\Px(R=0)$ is a constant probability and therefore does not affect the solution to the estimating equation. However, it is needed in the EIF as a scaling factor to ensure the correct normalization and semiparametric efficiency interpretation in this multi-source setting. } 

{Furthermore, we estimate these global nuisance parameters using pooled data from all AMP regions.} In practice, we use $M$-fold cross-fitting for $\glbEst$: the entire sample is split into $M$ folds, where a popular choice is $M=5$ \citep{chernozhukov2018double}, nuisance functions are fitted on training folds, and estimating equations are evaluated on the corresponding validation fold. The results are then combined across folds to obtain $\glbEst$. 

\begin{remark}\label{rmk:global-funcs}
Theorem \ref{thm:CCOD-EIF} reveals a somewhat counterintuitive result. Although we do not impose conditional homogeneity assumptions on the treatment $A \mid \mb X$ or the censoring $C \mid (\mb X, A)$ across study sites $R$, the resulting EIF involves only global models for these nuisance functions rather than site-specific ones. This occurs because the EIF is derived from the identification formula for $\theta^0(t,a)$ (see Appendix \ref{app:theory-CCOD}), which, under the CCOD assumption, does not depend on the site indicator $R$. As a result, site-specific models for treatment and censoring are not required for achieving semiparametric efficiency. In addition, because the global nuisance functions are estimated using pooled data, they can benefit from a larger sample size, which improves finite-sample prediction accuracy.
\end{remark}

To establish the regular and asymptotically linear (RAL) property of the CCOD estimator, we require three  types of conditions from causal inference literature: consistency of the estimated nuisance functions, positivity of the estimated treatment, censoring, and survival components, and sufficiently small second-order remainder terms. Together, these conditions justify the EIF representation derived in Theorem \ref{thm:CCOD-EIF} and imply the asymptotic normality used for Wald-type inference in the AMP analysis later in the paper. For readability, we defer their precise technical statements to Appendix \ref{app:theory-CCOD} (Conditions \ref{cond:nuisance-ccod}--\ref{cond:prod-error-ccod}).

\begin{theorem}\label{thm:RAL-ccod}
    If Conditions \ref{cond:nuisance-ccod}--\ref{cond:prod-error-ccod} in Appendix \ref{app:theory-CCOD} hold,  $\glbEst=\estimand+\Px_n(\varphi^{*\text{CCOD}}_{t,a}) + o_p(n^{-1/2})$. In particular, $n^{1/2}(\glbEst-\estimand)$  converges in distribution to a Gaussian limit with mean zero and variance $\sigma^2 = \Px[(\varphi^{*\text{CCOD}}_{t,a})^2]$.
\end{theorem}

\begin{remark}[Double robustness of the CCOD estimator]\label{rmk:DR-ccod}
If we are only concerned with consistency of the CCOD estimator, we do not require all nuisance functions to be correctly specified at all time points. For $\Px$-almost all $\mb X$, we only require that there exist measurable sets $\bar{\mc S}_x, \bar{\mc G}_x \subseteq [0,t]$ such that $\bar{\mc S}_x \cup \bar{\mc G}_x = [0,t]$ and
$\bar\Lambda(u\mid a,\mb X)$ is the probability limit of its estimator for all $u \in \bar{\mc S}_x$, and $\bar G(u\mid a,\mb X)$ is the probability limit of its estimator for all $u \in \bar{\mc G}_x$.
In addition, if $\bar{\mc S}_x$ is a strict subset of $[0,t]$, then $\bar\pi(a\mid\mb X)$ and $q^0(\mb X)$ should be the probability limits of their estimators. Under these conditions, $\glbEst$ is consistent.
\end{remark}

Remark \ref{rmk:DR-ccod} can be interpreted as follows: at a given time $t$, if either (i) the conditional survival model $\bar S$ or (ii) all other nuisance functions $\bar G$, $\bar\pi$, and $q^0$ are correctly specified, then $\glbEst$ remains consistent. A formally technical version of Remark \ref{rmk:DR-ccod} and its proof can be found in Appendix \ref{app:theory-CCOD}. 

\subsection{Federated estimation under distribution shifts}\label{subsec:FED}

{In the AMP trials, regional heterogeneity (e.g., in background HIV-1 incidence as impacted by use of various HIV-1 prevention tools such as antiviral pre-exposure prophylaxis) and privacy rules limit central data pooling.} That is, the CCOD assumption is often too strong. Nonetheless, several AMP regions can still be informative for the target region. Motivated by this, we develop a data-adaptive, privacy-preserving \textit{federated estimation} strategy that borrows strength across AMP regions using only summary statistics, not individual-level data. {Here, by ``privacy-preserving,'' we mean that the federated procedure does not require sharing individual-level participant data across sites. Instead, each site computes the required nuisance estimators and summary-level quantities locally, and only these aggregated summaries are communicated for downstream estimation and inference. The shared summaries are not assumed to be bias-free by construction. Rather, their main role is to provide potentially useful auxiliary information from source sites, and the extent to which they improve efficiency depends on the extent to which corresponding local estimators target the target-site estimand under the stated conditions. }

\subsubsection{Estimation by each local site}\label{subsec:local}

We first construct robust site-specific survival curve estimates using data from site $k\in\{0,1,\dots,K-1\}$, making a working ``site-$k$ CCOD'' assumption that potential event times have the same conditional distribution in site $k$ as in the target site when $k\not=0$. Note this site-$k$ CCOD assumption is used only for deriving the EIF form, serving as the first step of our approach; to aggregate information from source sites, we derive federated weights to account for possible violations of site-$k$ CCOD in Section \ref{subsec:aggregate}. The EIF for local estimands under this assumption is derived in the following theorem. 

\begin{theorem}\label{thm:site-k-EIF}
    For $k\in\{0,1,\dots,K-1\}$, $t\in[0,\tau]$ and $a\in\{0,1\}$, the semiparametric EIF $\varphi^{*k,0}_{t,a}(\mc O;\Px)$ under a working assumption that $S^k(t\mid a,\mb X) = S^0(t\mid a,\mb X)$ almost surely is given as $\varphi^{*k,0}_{t,a}(\mc O;\Px) = $
\begin{align*}
\frac{\Ix(R=0)}{\Px(R=0)}\{S^0(t\mid a, \mb X)-\theta^0(t,a)\} -\frac{\Ix(R=k)}{\Px(R=k)}\omega^{k,0}(\mb X)S^k(t\mid a, \mb X) \frac{\Ix(A=a)}{\pi^k(a\mid\mb X)}\mc H_{t,a}(\mc O;S^k,G^k),
\end{align*}
where $\omega^{k,0}(\mb X)=\dfrac{\Px(\mb X\mid R=0)}{\Px(\mb X\mid R=k)}$ is a density ratio function of $\mb X$ under target site to site $k$. 
\end{theorem}
{The first term in the EIF is the same anchor term that appears in the EIF for the target-only estimator, reflecting the contribution from the target site. The second term is a density-ratio-weighted augmentation term that transports information from source site $k$ to the target site under site-$k$ CCOD, which correctly targets the target-site estimand.} 

A local estimator can be motivated from the EIF above, namely $\skEst$ by solving $\Px_n[\widehat\varphi^{*k,0}_{t,a}(\mc O;\widehat\Px)]=0$. Of note regarding $\skEst$: (i) the ``anchor'' term in the EIF uses target-site data ($R=0$), while the ``augmentation'' term leverages site-$k$ data ($R=k$); (ii) all information is obtained independently from each site except for the density ratio $\omega^{k,0}(\mb X)$, which is estimated by sharing only coarse summary statistics {(see Remark \ref{rmk:dens-ratio} below)}; and 
{(iii) for $S^k$ in the augmentation term, we train the target-site model $S^0$ on data from $R=0$ and apply its predictions to site $k$, rather than fitting a separate local model at site $k$. This is necessary for detecting violations of site-$k$ CCOD in the federated weighting step. If site-$k$ CCOD holds, then $S^0=S^k$, so the target-site model remains appropriate and the resulting discrepancy between $\skEst$ and $\tgtEst$ is close to zero. If site-$k$ CCOD fails, then $S^0\neq S^k$, and this mismatch is reflected in a nonzero discrepancy, which causes site $k$ to be downweighted in our adaptive weighting procedure specified later. By contrast, using a locally trained $S^k$ would largely mask this heterogeneity, since the augmented term is mean zero when all nuisance functions are evaluated on its own site. }

\begin{remark}[Density ratio model]\label{rmk:dens-ratio}

    To estimate the density ratio under data-sharing constraints, a common approach is the exponential tilt model of \cite{han2025federated}: $\omega^{k,0}(\mb X) = \exp(\bd\gamma_k'\psi(\mb X))$, where $\bd\gamma_k$ is a parameter vector and $\psi(\mb X)$ is a chosen set of basis functions, for example $\psi(\mb X)=\mb X$. Higher-order terms can be included to capture nonlinearities. Estimation of $\bd\gamma_k$ by maximum likelihood requires only the target-site sample mean of $\psi(\mb X)$ to be shared with source sites. More flexible nonparametric or machine learning approaches are also possible, but they generally require sharing higher-dimensional summaries, such as covariance information.
\end{remark}

\subsubsection{Adaptive aggregation across sites}\label{subsec:aggregate}

To aggregate information from the target and source sites, for a fixed time $t$ and treatment $a$, we use an $\ell_1$-penalization procedure to calculate federated weights for each site. We define the site-specific discrepancy measure as  $\widehat\chi_{n,t,a}^{k,0}=\skEst-\tgtEst,$ and write the treatment- and time-specific federated weights vector as 
$\bd\eta_{t,a}=(\eta^0_{t,a},\eta^1_{t,a},\dots,\eta^{K-1}_{t,a}).$ 

To obtain the vector of weights $\bd\eta_{t,a}$, we minimize the $\ell_1$-penalized objective function: 
\begin{align}\label{eq:fed-obj}
    Q(\bd\eta_{t,a}) = \Px_n\left[\left\{\widehat\varphi^{*0}_{t,a}(\mc O;\widehat\Px) - \sum_{k=1}^{K-1}\eta^k_{t,a}\widehat\varphi^{*k,0}_{t,a}(\mc O;\widehat\Px)\right\}^2\right] + \frac1n\lambda\sum_{k=1}^{K-1}\vert\eta^k_{t,a}\vert(\widehat\chi_{n,t,a}^{k,0})^2,
\end{align}
subject to $0\leq\eta^k_{t,a}\leq 1$ for all $k=0,1,\dots,K-1$ and $\sum_{k=0}^{K-1}\eta^k_{t,a}=1$, and $\lambda$ is a tuning parameter that controls the bias-variance trade-off and is chosen by cross-validation.

\begin{figure}[ht]
    \centering
    \includegraphics[width=\textwidth]{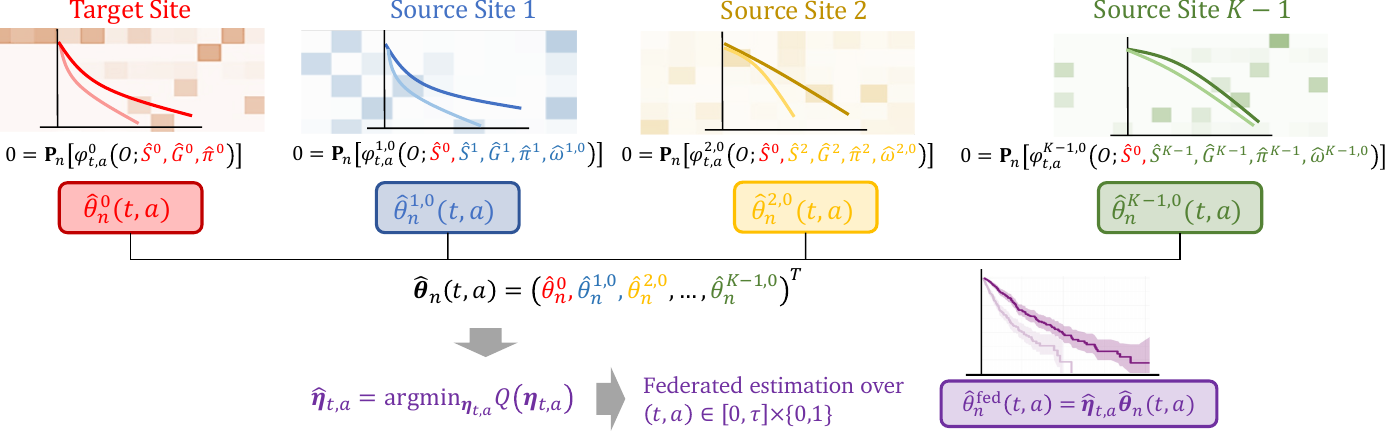}
    \caption{Federated method flow. Each region learns a site-specific survival estimate using local data and receives summaries; treatment- and time-specific weights are learned; the final estimate is a weighted average targeting the AMP region of interest.}
    \label{fig:FED}
\end{figure}

The objective function prioritizes sites that are well aligned with the target survival distribution while asymptotically excluding uninformative ones. We employ the $\ell_1$ penalty because it induces sparsity, driving the weights of misaligned sites exactly to zero, in contrast to the $\ell_2$ penalty which merely shrinks them. As a result, only relevant sites contribute to the federated estimator. Our approach anchors on the target-site estimate $\tgtEst$ and assigns weight to site $k$ only when it is sufficiently similar to the target estimand {(see Theorem \ref{thm:RAL-fed} for a more formal characterization)}. The federated estimator $\fedEst$ is then obtained as a weighted average of the site-specific survival estimates: 
$$
\fedEst=\sum_{k=0}^{K-1}\widehat\eta^k_{t,a}\widehat\theta^{k,0}_n(t,a).
$$
Figure \ref{fig:FED} depicts the federated procedure for the AMP trials. Appendix \ref{app:theory-FED} presents the variance estimator for $\fedEst$, derived from its influence function. 

\subsubsection{Theoretical guarantees}\label{subsec:FED-theory}

We establish the asymptotic properties and efficiency gain of the federated estimator in this section. 
{For each local estimator, we establish similar regularity conditions to those of the CCOD estimator, with details deferred to Appendix \ref{app:theory-FED}. These regularity conditions include (i) local nuisance estimators to converge to general limiting functions; (ii) positivity by bounding nuisance functions (including the density ratio) away from extreme values (0 or infinity) such that the nuisance function estimations are stable and their probability limits are well-defined; and (iii) three product-type estimation errors.} In Theorem \ref{thm:RAL-fed} below, these conditions are required for the consistency and asymptotic normality of the federated estimator. However, compared to Theorem \ref{thm:RAL-ccod} for the CCOD estimator where all general limits of nuisance estimators need to be their true counterparts, we only need some target-site nuisances to match the truth. 

\begin{theorem}\label{thm:RAL-fed}
At a given time point $t\in[0,\tau]$ and treatment $a\in\{0,1\}$, if {Conditions \ref{cond:nuisance}--\ref{cond:prod-error} in Appendix \ref{app:theory-FED}} for local estimators hold with $S^0_\infty = S^0$ or $(\pi^0_\infty,G^0_\infty) = (\pi^0,G^0)$, and if oracle selection conditions for the federated weights $\bd\eta_{t,a}$ hold (see Appendix \ref{subapp:theory-fed}), the estimator $\fedEst$ has the asymptotically normal distribution 
\begin{align*}
    \sqrt{n/\widehat{\mc V}_{t,a}^{\text{fed}}}\left\{\fedEst-\estimand\right\}\to_d\mc N(0,1),
\end{align*}
where $\widehat{\mc V}_{t,a}^{\text{fed}}$ is an influence-function-based consistent estimator for the underlying asymptotic variance of $\fedEst$, which is no greater than that of the target-only estimator, $\tgtEst$. Moreover, if any source site provides a consistent estimate of $\estimand$, then the asymptotic variance of $\fedEst$ is strictly smaller than that of $\tgtEst$. The exact form of $\widehat{\mc V}_{t,a}^{\text{fed}}$ is presented in Appendix \ref{app:theory-FED}.
\end{theorem}

The proof of Theorem \ref{thm:RAL-fed} is provided in Appendix \ref{app:theory-FED}. Here, we sketch the techniques and some takeaways. Following \citet{han2025federated, han2023multiply}, we can quantify the efficiency gain afforded by $\fedEst$ relative to the target-only estimator. Letting $\mc S=\{1,\dots,K-1\}$ be the set of source sites, we define the oracle selection space for $\bd\eta_{t,a}$ and the corresponding weight space as, respectively, $ \mc S^*_{t,a} = \{k\in\mc S:\theta^k(t,a)=\estimand\}$ and $\mathbb R^{S^*_{t,a}} = \{\bd\eta_{t,a}\in\mathbb R^{K-1}:\eta^j_{t,a}=0,\forall j\not\in\mc S^*_{t,a}\}$. Appendix \ref{app:theory-FED} delineates useful lemmata and conditions under which our federated approach can recover the following optimal weights by solving \eqref{eq:fed-obj}:
$$ 
\bar{\bd\eta}_{t,a} = \underset{\eta^k_{t,a}=0,\forall k\not\in\mc S^*_{t,a}}{\text{arg min}} \mc V_{t,a}^{\text{fed}}(\bd\eta_{t,a}),
$$
where $\mc V_{t,a}^{\text{fed}}(\bd\eta_{t,a})$ is the asymptotic variance of the federated estimator under a fixed weight vector $\bd\eta_{t,a}$. Since the target-only estimator corresponds to a special case $\bd\eta_{t,a}=(1,0,\dots,0)$, its asymptotic variance cannot be exceeded by that of any federated estimator. Moreover, if the bias term $\widehat\chi_{n,t,a}^{k,0}$ is asymptotically non-zero, the corresponding weight $\eta^k_{t,a}$ converges to zero (dropping site $k$). Note that the asymptotic behavior of $\fedEst$ relies on selection consistency of the comparable source sites under the oracle criteria above, ensuring that post-selection inference with the variance estimator $\widehat{\mc V}_{t,a}^{\text{fed}}$ is valid.

{
\begin{remark}[Double robustness of the local estimator]\label{rmk:DR-fed}
    For consistency of the local estimator $\skEst$ under site-$k$ CCOD, we show that it enjoys a double robustness property: at any fixed time point $t$, $\skEst$ is consistent if either (i) the conditional survival model $S^k$ is correctly specified, or (ii) the other nuisance functions $G^k$, $\pi^k$, and $\omega^{k,0}$ are correctly specified. Therefore, by Theorem \ref{thm:RAL-fed}, the federated estimator can borrow information from site $k$ and achieve efficiency gains whenever this condition holds at site $k$, without requiring all nuisance functions to be correctly specified. A technical version and proof of this remark can be found in Remark D.6 of Appendix \ref{subapp:source-est}. 
\end{remark}
}
\section{Extension to Other Causal Contrasts}\label{sec:estimands}

While our primary focus has been on estimating the treatment-specific survival functions of a target region in the AMP trials, the proposed framework readily extends to other causal estimands of clinical relevance. In the context of the AMP trials, these include several widely used and interpretable contrasts: the survival risk difference (RD), the survival ratio (SR) at a fixed follow-up time, and the restricted mean survival time (RMST) difference. These functionals provide complementary perspectives on treatment efficacy and enhance the clinical interpretability of our results.

At a fixed $t \in [0,\tau]$, define the RD of the target site at time $t$ as $\delta^0(t) = \theta^0(t,1) - \theta^0(t,0).$ This estimand represents the difference in survival probabilities at time $t$ between the treatment and control groups in the target site. Similarly, define the SR of the target site at time $t$ as $\rho^0(t) = {\theta^0(t,1)}/{\theta^0(t,0)},$ which captures the treatment effect on the survival functions on a ratio scale.  

The target-site RMST under treatment $a$ with a pre-specified horizon $\tau<\infty$ is 
$\Theta^0(\tau,a) = \displaystyle\int_0^\tau \theta^0(t,a)\, dt,$
which represents the expected event-free time up to $\tau$ in the target site under treatment $a$. The causal effect of treatment can be summarized by the RMST difference: $\Delta^0(\tau) = \Theta^0(\tau,1) - \Theta^0(\tau,0).$

Both RD and RMST are linear functionals of the treatment-specific survival curves $(\theta^0(t,0), \theta^0(t,1))$. Hence, their EIFs are linear functionals of the corresponding treatment-specific EIFs for the survival curves. Specifically, let $\varphi^{*\theta}_{t,a}(\mc O;\Px)$ denote the EIF of $\theta^0(t,a)$ under a given setting (e.g., the target-only estimator). The EIFs of RD and RMST, are given by, respectively,
\begin{align}\label{eq:eif-RD} 
\varphi^{*\delta}_t(\mc O;\Px) = \varphi^{*\theta}_{t,1}(\mc O;\Px)-\varphi^{*\theta}_{t,0}(\mc O;\Px),\text{ and } \varphi^{*\Theta}_{\tau,a}(\mc O;\Px) = \int_0^\tau \varphi^{*\theta}_{t,a}(\mc O;\Px) dt. 
\end{align} 
The EIF of the RMST difference is then given by
\begin{align}\label{eq:eif-RMST-diff} 
\varphi^{*\Delta}_\tau(\mc O;\Px) & = \int_0^\tau \{\varphi^{*\theta}_{t,1}(\mc O;\Px)-\varphi^{*\theta}_{t,0}(\mc O;\Px)\} dt = \int_0^\tau \varphi^{*\delta}_t(\mc O;\Px) dt. 
\end{align}
These representations naturally extend RD and RMST estimation to the proposed CCOD and federated learning frameworks. Furthermore, for SR, while it is non-linear, by Taylor's expansion to linearly approximate it to $(\theta^0(t,0), \theta^0(t,1))$, one can get its EIF naturally as
\begin{align}\label{eq:eif-SR}
    \varphi^{*\rho}_t(\mc O;\Px) & = \varphi^{*\theta}_{t,1}(\mc O;\Px)\frac{1}{\theta^0(t,0)}-\varphi^{*\theta}_{t,0}(\mc O;\Px)\frac{\theta^0(t,1)}{\theta^0(t,0)^2}, 
\end{align}
Therefore, for the CCOD estimators of the RD, RMST, RMST difference and SR, the corresponding EIFs can be obtained by substituting the treatment-specific EIFs from Theorem~\ref{thm:CCOD-EIF} into \eqref{eq:eif-RD}--\eqref{eq:eif-SR}. Solving the estimating equations defined by these EIFs yields the CCOD estimators for these estimands. Variance estimates can then be gotten as the empirical mean squares of the corresponding estimated EIFs. 

For the federated estimators of these estimands, one can first compute the semiparametrically efficient density-ratio–adjusted estimates at each local site by substituting the site-specific EIFs from Theorem~\ref{thm:site-k-EIF} into \eqref{eq:eif-RD}--\eqref{eq:eif-SR} to obtain the corresponding local EIFs. These local EIFs are then incorporated into the federated weighting scheme \eqref{eq:fed-obj} to derive the site-specific weights. Finally, the federated estimators for RD, SR, RMST, and RMST difference are obtained as the weighted averages of their corresponding local estimates. 

\begin{remark}
    For each causal estimand, we recompute its federated weights from \eqref{eq:fed-obj}. Although reusing the influence functions of the federated treatment-specific survival curves would avoid recalculating weights, this distinction is important: as shown in Section~\ref{subsec:FED-theory}, the federated weights derived from \eqref{eq:fed-obj} are specifically optimized to minimize the asymptotic variance of a given estimand, ensuring efficiency within the class of weighted local estimators. Consequently, the optimal weights for one estimand (e.g., the survival curve) may not be optimal for others (e.g., RMST, RD, or SR). Moreover, the survival-curve weights are time-varying, making them difficult to directly adapt for alternative estimands. 
\end{remark}

\section{Application to the AMP Trials}\label{sec:data}

We revisit the two coordinated AMP trials, HVTN 704/HPTN 085 and HVTN 703/HPTN 081 \citep{corey2021two}. The primary analyses, conducted separately for each trial and in a pooled population, showed that VRC01 (the bnAb treatment combining the low- and high-dose groups) did not significantly reduce overall HIV-1 acquisition. However, the pooled results provided compelling proof-of-concept, demonstrating approximately 75\% {(95\% CI: 45.5\% to 88.9\%)} protection against HIV-1 isolates highly sensitive to the antibody ($\text{IC}_{80}<1\,\mu$g/mL). 

What remains missing from the primary analysis is precise, region-specific causal survival inference that can (i) estimate treatment effects across geographic regions with heterogeneous circulating strains without pooling individual-level data subject to privacy or regulatory restrictions, and (ii) provide causal contrasts such as the SR and RMST difference. 

\subsection{Pre-analysis diagnosis and model checking}\label{subsec:data-diag}

We first conduct pre-analysis diagnostics to assess covariate balance and positivity of the estimated nuisance functions. 
Conditional survival and censoring functions were estimated using an ensemble of Kaplan-Meier, Cox regression, and survival random forest models implemented via the \texttt{survSuperLearner} package \citep{westling2024inference}. The propensity score and density ratio models (for the federated method) were fitted using {logistic regression (GLM)}. 
All nuisance models included age (years), weight (kg), and the standardized ML risk score as predictors, and were fitted using a 5-fold cross-fitting. To estimate the survival curves, we used a time grid from day 1 to day 601, with increments of 1 day. 

Figure \ref{fig:ps-love}(A) shows that, after weighting by the estimated site-specific treatment propensity scores, all candidate propensity score models achieve excellent covariate balance, with absolute standardized mean differences well below the conventional threshold of 0.1 \citep{austin2015moving} and, in most cases, close to 0. Among them, GLM and GLM.interaction perform best. Figure \ref{fig:ps-love}(B) shows that, within each region, the estimated treatment propensity score distributions for the treatment and control groups exhibit good overlap across all fitted models, with no strong evidence of propensity scores near 0 or 1. These results suggest that treatment positivity is not a major concern in this randomized trial setting. Based on the covariate balance results and the stability of the estimated propensity score distributions, we use GLM as the treatment propensity score model in the main analysis. 

\begin{figure}[ht]
    \centering
    \includegraphics[width=\linewidth]{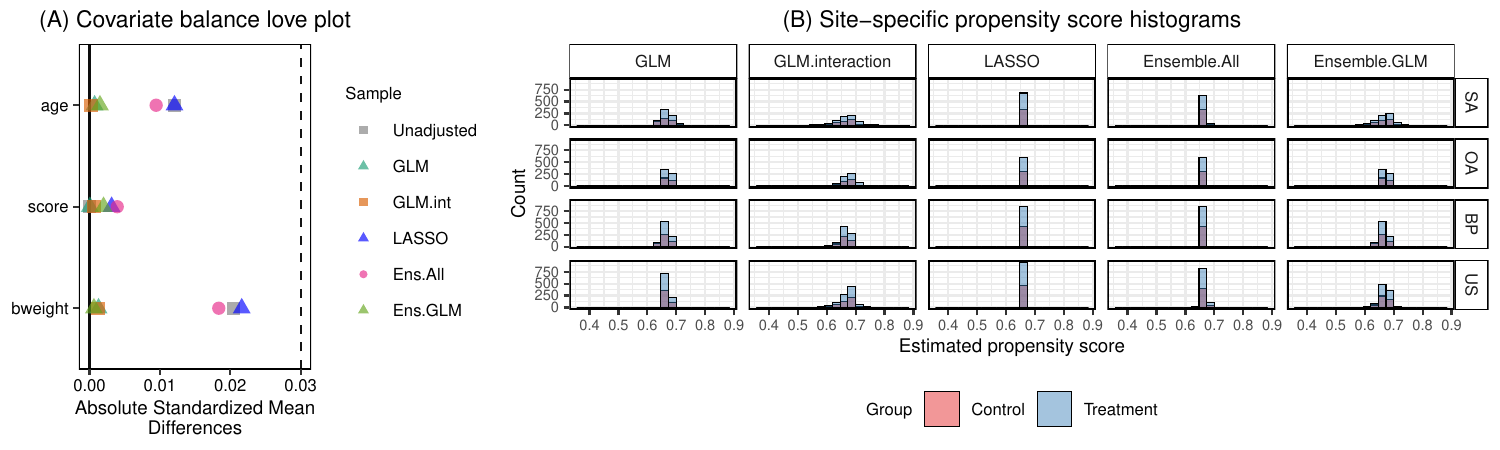}
    \caption{Covariate balance and estimated site-specific treatment propensity score distributions by different models.}
    \label{fig:ps-love}
\end{figure}

Because treatment was randomized in the AMP trials, unmeasured confounding of treatment assignment is likewise not a concern. A more relevant threat to identification is informative censoring. To assess sensitivity to informative censoring, we computed an E-value \citep{vanderweele2017sensitivity} of 1.78, meaning that an unmeasured factor would need to be associated with both censoring and the outcome by a risk ratio of at least 1.78 each, beyond the measured covariates, to fully explain away the observed treatment effect. For calibration, the strongest observed association with the outcome among measured covariates was for weight (hazards ratio distance from 1:  about 1.58), followed by the ML risk score (about 1.38); associations with censoring were negligible (strongest: ML risk score, ~1.07). Since the E-value exceeds the strength of any measured predictor of the outcome, and censoring itself was only weakly predicted by observed covariates, the observed effect is robust to all but an unmeasured factor stronger than any we measured.

We next examine the distributions of the estimated conditional survival and censoring probabilities at day 600, together with the estimated density ratios (Figure \ref{fig:nuis-posit}). We evaluate survival and censoring at day 600 rather than day 601 because day 601 is the administrative end of follow-up in the AMP trials, where both functions may exhibit boundary behavior by design. Since both survival and censoring functions are non-increasing over time, assessment at day 600 is sufficient for practical positivity checking. 

Figure \ref{fig:nuis-posit}(A) shows that the individual predicted survival probabilities at day 600 remain high and are clearly bounded away from 0 across regions, which is consistent with the relatively low HIV incidence in the AMP trials. Figure \ref{fig:nuis-posit}(B) shows that the predicted censoring survival probabilities at day 600 are also bounded away from 0, suggesting that instability due to near-zero censoring probabilities is unlikely to be severe in our data. Finally, Figure \ref{fig:nuis-posit}(C) shows that the estimated source-to-target density ratios are right-skewed, especially for BP and US, indicating that some source individuals from these regions are less comparable to the target region. However, the estimated density ratios remain within a moderate range (all $\leq 20$). Overall, these diagnostics support the plausibility of the positivity assumption. 

\begin{figure}[ht]
    \centering
    \includegraphics[width=\textwidth]{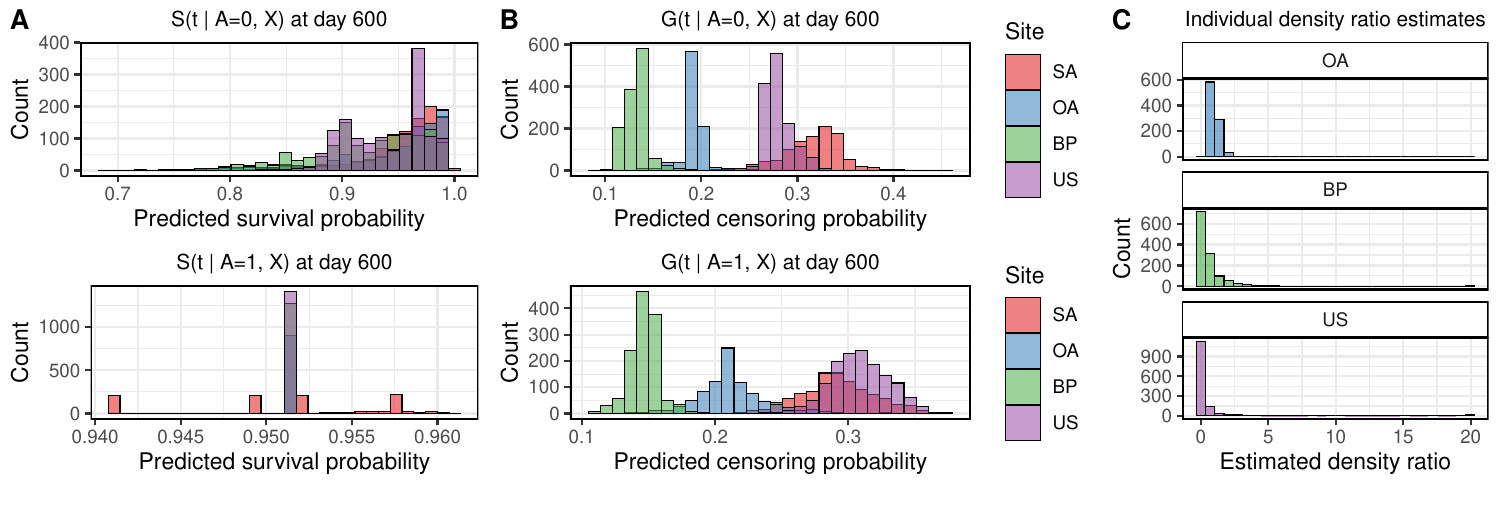}
    \caption{Histograms of individual predicted survival, censoring, and density-ratio nuisance functions for positivity assessment in the AMP analysis. }
    \label{fig:nuis-posit}
\end{figure}

\subsection{{Main analysis}}\label{subsec:data-main}

We compare the main federated estimator (FED) with several alternatives: the target-only estimator (TGT), the CCOD estimator (Section \ref{subsec:CCOD}), pooling (POOL), inverse-variance weighting (IVW), {and clustered Cox PH model (CLCOX)}. The TGT estimator uses only the target-site data ($R=0$; see also Figure \ref{fig:AMP_siteSurvs}) and is defined in Section \ref{subsec:tgtonlyest}. The CCOD estimator serves as an efficiency benchmark, representing the maximal gain attainable under ideal conditions where CCOD holds and full data pooling is allowed. POOL na\"ively aggregates all site data and applies the TGT estimator to the combined dataset. IVW aggregates site-specific estimators using weights proportional to the inverse of their estimated variances \citep{burgess2013mendelian}. {Finally, CLCOX fits a pooled Cox PH model stratified by region and uses the resulting model-based treatment-specific survival estimates as a simple benchmark.}

\begin{figure}[ht]
    \centering
    \includegraphics[width=0.95\textwidth]{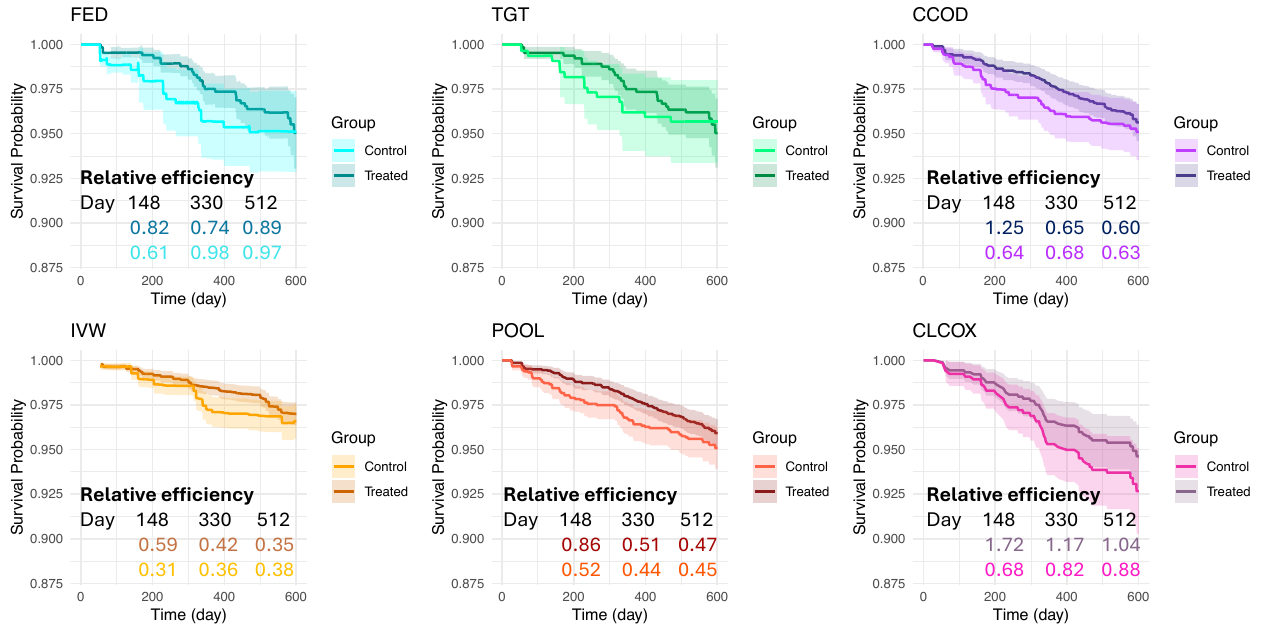}\\
    \textbf{\small (A) Estimated treatment-specific survival curves}
    \includegraphics[width=0.75\textwidth]{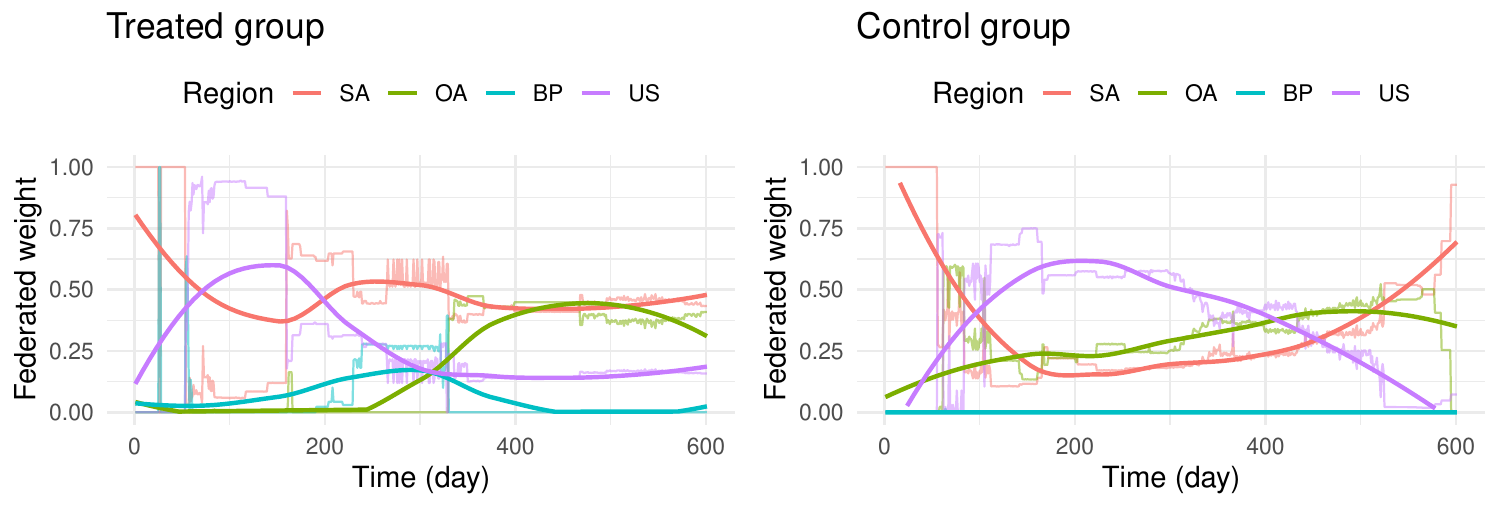}\\
    \textbf{\small (B) Treatment- and time-specific federated weights} 
    \caption{Data analysis results when treating women in South Africa (SA, women) as the target region. {The smoothed weight curves in panel (B) are obtained by locally weighted regression, which is only a visualization tool (not a part of our methodology).} }
    \label{fig:AMP-SA}
\end{figure}

In the main text, we focus on the SA (women in South Africa) region as our target site (Figure \ref{fig:AMP-SA}). Additional results where other regions are taken as the target are provided in Appendix \ref{app:add-data}. Panel (A) shows that the TGT and FED methods yield nearly identical survival curves. However, TGT can have wider CIs, reflecting lower estimation efficiency. 
{It also fails to yield valid intervals at certain early time points due to unstable or unavailable variance estimates, driven by the insufficient sample size of individuals who experience the event at those times.} In contrast, FED recovers interval estimates at many of these time points {by borrowing external information}, with CIs that are generally narrower than those from TGT. CCOD also generates comparable curves with strictly narrower CIs.

To assess the efficiency gains of FED and CCOD, we computed a ``relative efficiency'' metric at selected time points, defined as the ratio of the estimated standard error of each method to that of TGT. We evaluated the relative efficiency at days 148, 330, and 512, as shown in Panel (A). The results are consistent with our expectations from theory in Sections \ref{subsec:CCOD} and \ref{subsec:FED}, where FED achieves moderate efficiency gains and CCOD exhibits strictly higher efficiency gains. By comparison, the IVW and POOL methods, while exhibiting higher nominal efficiency (lower relative efficiencies), exhibit a clear discrepancy from TGT, reflecting bias that can arise due to distribution shifts when targeting the SA population. 

{Finally, we note that POOL and IVW are included as commonly used reference approaches, although they target the overall population rather than the target-site estimand. Thus, differences in their results are expected when regional heterogeneity is present. Together with TGT, they help illustrate the trade-off between efficiency and validity for target-site inference: TGT is consistent for the target-site estimand but may be less efficient, whereas POOL and IVW may be more efficient but are not valid for the target-site estimand. This trade-off also motivates the proposed FED approach, which is designed to preserve consistency for the target-site estimand while achieving efficiency gains via adaptive borrowing.}

\begin{table}[ht]
\centering
\singlespacing
\small
\caption{Estimated risk RD and SR at days 148, 330, and 512.}
\label{tab:RD-AMP}
\begin{threeparttable}
\begin{tabular}{rrcccccc}
\toprule
Day & Method 
& RD Est. (95\% CI) & SE(RD) & p-value
& SR Est. (95\% CI) & SE(SR) & p-value \\
\midrule
\multirow{3}{*}{148}
& TGT  
& 0.004 (-0.009, 0.018) & 0.007 & 0.528
& 1.005 (0.990, 1.019) & 0.007 & 0.529 \\
& FED  
& 0.010 (0.002, 0.018) & 0.004 & 0.014
& 1.010 (1.002, 1.017) & 0.004 & 0.011 \\
& CCOD 
& 0.006 (-0.004, 0.015) & 0.005 & 0.226
& 1.006 (0.996, 1.016) & 0.005 & 0.229 \\
\midrule
\multirow{3}{*}{330}
& TGT  
& 0.013 (-0.010, 0.036) & 0.012 & 0.255
& 1.014 (0.990, 1.038) & 0.012 & 0.260 \\
& FED  
& 0.016 (-0.006, 0.038) & 0.011 & 0.151
& 1.017 (0.994, 1.040) & 0.012 & 0.148 \\
& CCOD 
& 0.014 (-0.000, 0.029) & 0.007 & 0.055
& 1.015 (0.999, 1.030) & 0.008 & 0.058 \\
\midrule
\multirow{3}{*}{512}
& TGT  
& 0.007 (-0.021, 0.034) & 0.014 & 0.632
& 1.007 (0.978, 1.036) & 0.015 & 0.634 \\
& FED  
& 0.010 (-0.018, 0.037) & 0.014 & 0.493
& 1.011 (0.982, 1.040) & 0.015 & 0.448 \\
& CCOD 
& 0.009 (-0.008, 0.026) & 0.009 & 0.307
& 1.009 (0.991, 1.027) & 0.009 & 0.310 \\
\bottomrule
\end{tabular}
\begin{tablenotes}\footnotesize
\item Est.: Estimate; SE: Standard error; CI: confidence interval. The p-value is for testing the null hypothesis of no treatment effect in the target region (RD = 0 or SR = 1).  
\end{tablenotes}
\end{threeparttable}
\end{table}

\begin{table}[ht]
\centering
\singlespacing
\small
\caption{Estimated RMST by treatment group and RMST difference up to day 601.}
\label{tab:RMST-AMP}
\begin{tabular}{rrcccc}
\toprule
& Method & RMST Est. & SE & 95\% CI & p-value \\
\midrule
& TGT  & 585.33 & 4.76 & (575.99, 594.66) & -- \\
Control group
& FED  & 583.12 & 4.53 & (574.24, 591.99) & -- \\
& CCOD & 583.93 & 4.10 & (575.89, 591.97) & -- \\
\midrule
& TGT  & 589.81 & 2.39 & (585.13, 594.49) & -- \\
Treated group
& FED  & 589.85 & 2.11 & (585.72, 593.98) & -- \\
& CCOD & 589.26 & 3.24 & (582.90, 595.62) & -- \\
\midrule
& TGT  & 4.49 & 5.31 & (-5.93, 14.90) & 0.398 \\
RMST difference
& FED  & 5.70 & 5.23 & (-4.55, 15.95) & 0.276 \\
& CCOD & 5.33 & 3.47 & (-1.47, 12.14) & 0.125 \\
\bottomrule
\end{tabular}
\begin{tablenotes}\footnotesize
\item RMST: restricted mean survival time; Est.: Estimate; SE: Standard error; CI: confidence interval. The p-value is for testing the null hypothesis of no treatment effect in the target region (RMST difference = 0). 
\end{tablenotes}
\end{table}

Finally, we computed the RD, SR (at days 148, 330, and 512), RMST, and RMST difference up to day $\tau=601$ (the time horizon of the study), as shown in Tables \ref{tab:RD-AMP} and \ref{tab:RMST-AMP}, using transformation techniques proposed in Section \ref{sec:estimands}. From the RD and SR results, we observe that, at each assessed day, both FED and CCOD yield narrower 95\% CIs compared to TGT for both RD and SR. Except for the {FED at day 148,} all CIs of RD cover 0 and those of SR cover 1, indicating that the effect of bnAb is overall not significant at level 0.05. {The RMST difference results show that the SEs and p-values from CCOD are the smallest, followed by FED and then TGT, which is consistent with our theoretical and simulation results.}


Consistent with the conclusion of \cite{corey2021two} that VRC01 offers limited overall protection, our estimated survival curves, RD, SR, and RMST differences likewise show no strong evidence of benefit in this population. However, our federated approach delivers more stable and efficient estimates than the target-only analysis, and helps to reveal where na\"ive pooling or IVW may introduce bias under regional heterogeneity.

\begin{remark}\label{rmk:MLscore}
    One potential limitation of the main analysis is that the ML risk score included in our set of baseline covariates is an internally learned prognostic score using the control arm of the AMP trials data, rather than a naturally observed baseline covariate or externally trained variable. Although this score was included to align with the primary analysis \citep{corey2021two} and may improve efficiency, its formal use requires additional care when the score is trained from the trial data, especially when highly data-adaptive learners are involved \citep{hojbjerre2026within}. To assess whether our conclusions depend materially on this variable, we conducted a sensitivity analysis excluding the ML risk score and repeating the analysis using only fixed baseline covariates (age and weight). The results are similar to those from the main analysis and did not lead to meaningful changes in the substantive conclusions. Full results are reported in Appendix \ref{subapp:supp-sensitivity}. 
\end{remark}

\section{Simulation Studies}\label{sec:simu}

To evaluate our methods empirically, we conducted simulations comparing FED with the other alternatives considered in Section \ref{sec:data}: TGT, CCOD, POOL, IVW, and CLCOX. We simulated time-to-event outcomes over a one-year horizon (365 days), with administrative censoring at day $\tau=200$ and covariate-dependent right censoring to represent dropout before $\tau$. In all simulation settings, data were generated from $K=5$ sites under five distribution-shift scenarios: Homogeneous (identical data-generating process [DGP] across sites), Covariate Shift, Outcome Shift, Censoring Shift, and All Shifts. Full details are provided in Appendix \ref{app:experiments}. Specifically, Sections E.1--E.3 describe the DGP, performance metrics, and simulation results for treatment-specific survival curves; Section E.4 considers an additional setting with poorer treatment propensity score overlap in the target site; and Section E.5 reports results for other causal contrasts on selected settings, confirming the validity of the transformation from survival curves to RD, SR, and RMST.


Our simulation results show a clear and consistent pattern across evaluation times, source-site sample sizes, and distribution-shift scenarios. FED is most useful in heterogeneous multi-region settings where the target-site analysis (TGT) is valid but potentially noisy, and where na\"ive borrowing (IVW and POOL) may introduce bias. Across all scenarios, FED remains nearly unbiased, whereas simpler data-fusion alternatives POOL and IVW can show noticeable bias under covariate, outcome, and all-shift settings. At the same time, FED improves precision relative to TGT. In the main settings, FED achieves up to 24\% lower root mean squared error (RMSE) than TGT, and in the poorer-overlap setting, the efficiency gain is even larger, with nearly 68\% lower RMSE. By contrast, TGT is generally the least efficient because it uses only target-site data, whereas CCOD can be more efficient when the CCOD assumption holds for some source sites, but loses validity when that assumption fails. Overall, the simulations suggest that FED offers the greatest advantage when some source sites are informative, in the sense of providing consistent estimates or satisfying site-$k$ CCOD. In such settings, FED delivers a more reliable bias-variance trade-off than either target-only analysis or simpler pooling-based alternatives. Moreover, CLCOX is generally more sensitive to heterogeneity than both FED and TGT, and does not show a clear pattern of efficiency gain or loss.

We also note that censoring shift does not impact the finite-sample performance of most methods except for CLCOX, because in this setting the CCOD assumption still holds without covariate shift. As a result, most estimators remain consistent under their double robustness properties, whereas CLCOX is still sensitive to this shift. The above conclusions persist in additional scenarios under poorer target-site overlap in Section E.4.

Finally, in the main simulation setting, larger source-site sample sizes do not substantially increase the efficiency gain by FED. This is because FED is intentionally conservative: source-site data contribute only through augmented local EIF terms used for adaptive weighting, while the estimator remains anchored on the target-site, especially the target-site conditional survival model. In addition, without pooling individual-level data across sites, nuisance estimation precision still depends on each site's local sample size. As a result, increasing source-site sample sizes yields only limited additional efficiency gain for FED.

\section{Concluding Remarks}\label{sec:conclude}

To more precisely estimate target survival curves while maintaining {consistent estimation} in the AMP trials, which involve external source regions with heterogeneous distributions, we developed data-fusion approaches for causal survival analysis. Our methods aim to improve estimation efficiency for survival curves and their contrast functionals (RD, SR, RMST, etc.) in the target region while accommodating cross-region heterogeneity and respecting privacy constraints. 

To evaluate the maximal efficiency attainable in this setting, we first consider a natural extension of the doubly robust single-site estimator \citep{westling2024inference}, namely, the CCOD estimator. The CCOD estimator is consistent and semiparametrically efficient when the CCOD assumption holds, data can be pooled, and nuisance functions converge to their true limits. Recognizing that these conditions are often too restrictive, we develop a federated (FED) approach that relaxes these requirements while maintaining valid inference across heterogeneous regions. The proposed FED estimator accommodates shifts in covariate, outcome, and censoring distributions, preserves data privacy, and achieves strict efficiency gains under oracle site selection and some regularity conditions. {Importantly, the FED framework is not designed to determine borrowing entirely a priori through a fixed or prespecified rule. Rather, borrowing is assessed adaptively from the data: FED anchors on the target-site estimator and assigns weight to a source site only when that source site provides a consistent estimate of the target-site estimand.} Compared with the target-only estimator, FED attains efficiency gains without imposing any additional assumptions, if at least some source regions provide consistent estimates. 

{As a practical guideline, our method is best suited for multi-site or multi-region time-to-event studies in which inference is desired for a specific target population, the target site has adequate sample size, and the number of clinically meaningful regions is fixed and moderate. This is important because the federated estimator is anchored on the target-site analysis, and external sites are used primarily to improve precision and stability rather than to replace the target-site signal. Potential applications include multi-regional phase 3 clinical trials, pragmatic trials across health systems, and observational studies with survival outcomes subject to right-censoring. When meaningful regional heterogeneity is anticipated, future studies will benefit from collecting baseline covariates and the study outcome consistently across regions, so that investigators can better assess cross-region comparability and determine whether adaptive borrowing is appropriate. More broadly, when pooled analyses may obscure important heterogeneity, it is beneficial to prespecify region-specific estimands in statistical analysis plans. Practical implementation also involves a trade-off between robustness and reliability, especially in small regional samples: more flexible learners may reduce bias from model misspecification in larger samples, but they can be unstable when the sample size or number of observed events is limited, and cross-validation risk estimates may be noisy. Extending the method and its guarantees to settings with a substantially larger number of regions is an important direction for future work. }

Several limitations highlight directions for future research. First, although we demonstrate clear efficiency gains from FED, additional improvements may be obtained through more efficient covariate-adaptive weighting strategies \citep{li2023efficient}. 
{Second, even when individual-level data sharing is feasible, it remains unclear under CCOD violation (or without certain cross-site homogeneity assumptions) how to use individual-level source-site information to construct an estimator that improves efficiency over both the target-only and FED estimators while preserving estimation consistency for the target-site estimand. Even when de-identified individual-level data are available, sharing them across regions may still raise privacy, regulatory, or data-use concerns. The proposed FED framework requires only summary-level quantities to be exchanged, offering an added benefit over methods that rely on access to individual-level data across sites, while remaining anchored on the target-site estimand.} Third, the time-specific FED weights, while flexible, may yield non-smooth trajectories and higher computational cost in continuous-time settings; developing smoothing procedures to better capture temporal patterns would enhance stability and scalability. 
Considering subject-specific weighting for borrowing subjects within each source is another promising direction, as individuals, even within the same source, may exhibit substantial heterogeneity and therefore contribute to the target population in different ways \citep{gao2025improving, zhu2025enhancing}. Fourth, extending the framework to incorporate time-varying covariates {(including the censoring process)} could further improve efficiency by exploiting dynamic predictive information beyond baseline, but existing methods remain largely limited to discrete-time survival settings \citep{fisher1999time}. Unifying these ideas for continuous time remains an open challenge. {Fifth, our analysis relies on the standard no-interference assumption, namely that one participant’s treatment assignment does not affect another participant’s outcome. Although this assumption is commonly invoked in causal inference, it may be questionable in infectious disease prevention settings, where spillover or contamination effects could arise through changes in exposure networks or indirect protection. Extending the proposed framework to allow for interference would be an important direction for future work. Finally, although the proposed framework avoids transfer of individual-level data, it does not provide a formal privacy guarantee, e.g., in the sense of differential privacy. In particular, summary-level quantities may still carry some disclosure risk. Developing federated survival procedures with formal privacy guarantees is an important direction for future research. }

Lastly, while our framework naturally extends to several analytically tractable causal contrasts of the treatment-specific survival curves, including the RD, SR, and RMST difference, further work is needed to handle more complex nonlinear functionals. Examples include the causal quantile treatment effects on survival time, which are generally non-smooth or non-collapsible \citep{boughdiri2025unified}. These features complicate the derivation of EIFs, the establishment of asymptotic linearity, and the implementation of semiparametric estimators. Extending our framework to derive and characterize EIFs for such nonlinear estimands under the multi-source setting will be an important next step.



\subsection*{Software}

A user-friendly R package \texttt{FuseSurv} implementing the proposed methods is available at \url{https://github.com/yiliu1998/FuseSurv}.  

\subsection*{Acknowledgments} 

Research reported in this publication was supported by the National Heart, Lung, And Blood Institute of the National Institutes of Health under Award Number T32HL079896. The content is solely the responsibility of the authors and does not necessarily represent the official views of the National Institutes of Health. 


\bibliography{_refs}

\clearpage

\begin{center}

\LARGE Supplemental Material

\end{center}

\appendix
\numberwithin{figure}{section}
\numberwithin{table}{section}

Appendix \ref{app:add-data} presents additional analyses of treatment-specific survival curves when treating OA, BP, and US as the target sites. Appendix \ref{subapp:notation} provides a summary of all notation used in the main paper and this supplemental material. Appendix \ref{app:theory-CCOD} provides theoretical and technical details for the CCOD estimator, while Appendix \ref{app:theory-FED} outlines the theory and implementation of the federated estimator. Finally, Appendix \ref{app:experiments} reports additional simulation results. 

\section{Additional Data Analysis Results}\label{app:add-data}

\subsection{Results for treating other three regions (OA, BP and US) as target regions}\label{subapp:other-targets}

In Figures \ref{fig:AMP-OA}--\ref{fig:AMP-US}, we present the results, including survival curve estimations and federated weights, using three regions other than South Africa (SA) as the target population. For the federated weights, similar to the results for the SA target in the main text, we applied locally weighted regression \citep{cleveland1988locally} to smooth the observed weights over the study period, providing a clearer visualization of temporal trends. Note that this locally weighted regression tool is not a part of our methodology development, but only a visualization tool for the federated weights. 

From Figures \ref{fig:AMP-OA}--\ref{fig:AMP-US}, we observe that for each region, the FED method yields results similar to the TGT estimator, while also recovering some interval estimations at earlier time points. This finding is consistent with the observations made in Figure \ref{fig:AMP-SA} in the main text. In contrast, the IVW and POOL methods deviate noticeably from the TGT and FED results, especially for the BP and US regions, indicating potential biases introduced by site heterogeneity. The CCOD in each region produces curves that more closely align with the corresponding TGT and FED methods.

Finally, regarding federated weights, the results for the OA region resemble those for SA in Figure \ref{fig:AMP-SA} in the main text. However, for the BP and US regions, the federated weights are nearly 1 for the target site and 0 for all other sites. This pattern suggests that when targeting the survival curves of BP or US, other sites contribute substantial biases. 

\begin{figure}[H]
    \centering
    \includegraphics[width=\textwidth]{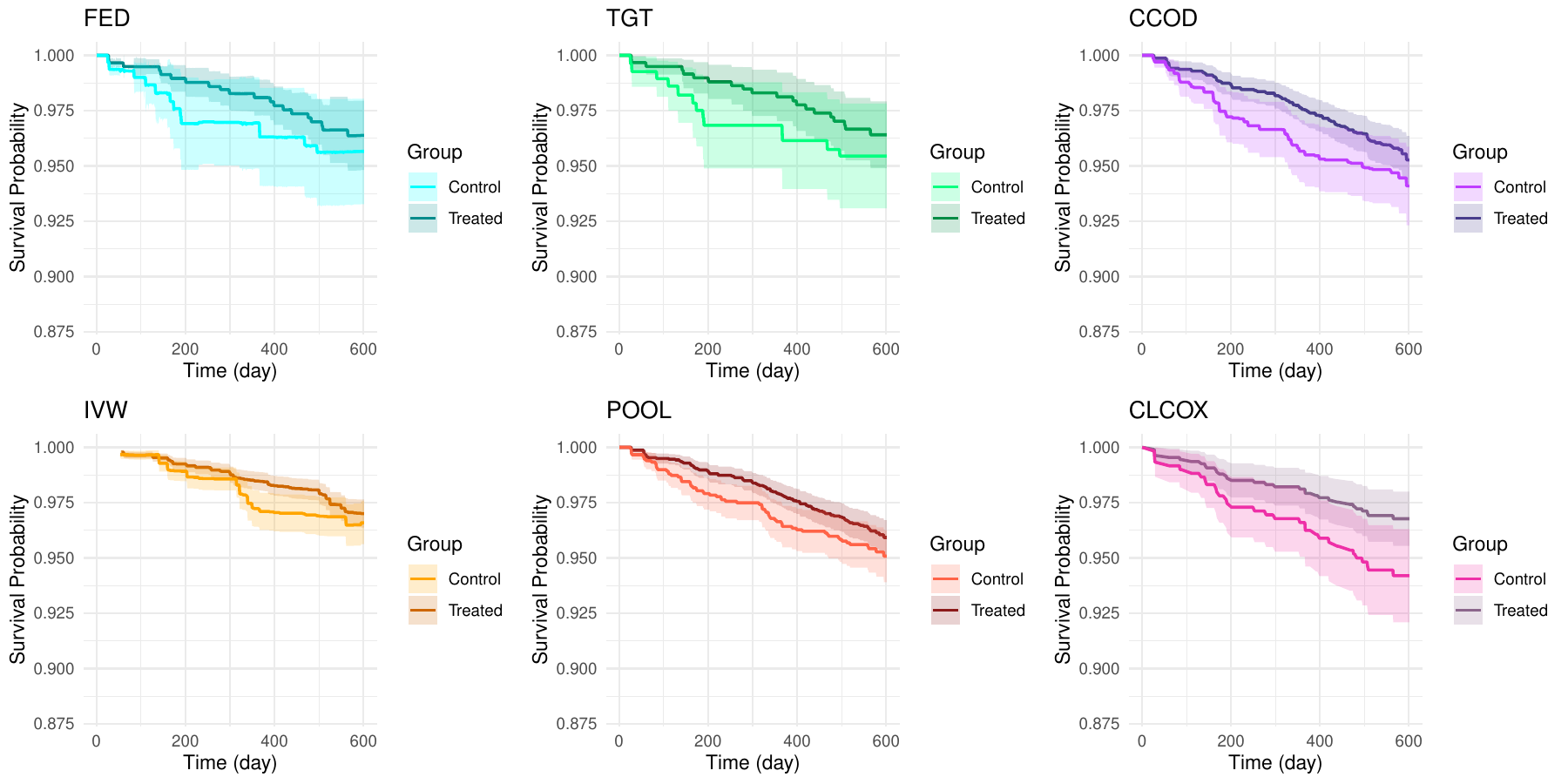}\\
    \textbf{\small (A) Estimated treatment-specific survival curves}\medskip
    
    \includegraphics[width=5in]{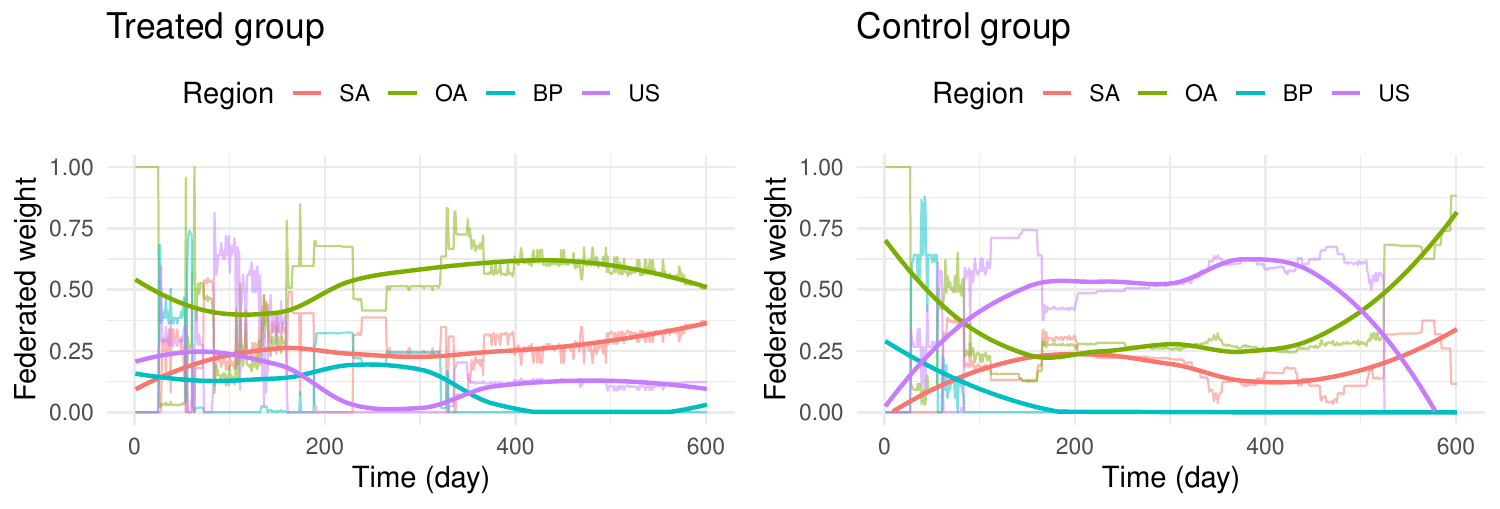}\\
    \textbf{\small (B) Treatment- and time-specific federated weights} 
    \caption{Data analysis results when treating women in other African country other than South Africa (OA, women) as the target region. The smoothed weight curves in panel (B) are obtained by locally weighted regression, which is only a visualization tool (not a part of our methodology).}
    \label{fig:AMP-OA}
\end{figure}

\begin{figure}[H]
    \centering
    \includegraphics[width=\textwidth]{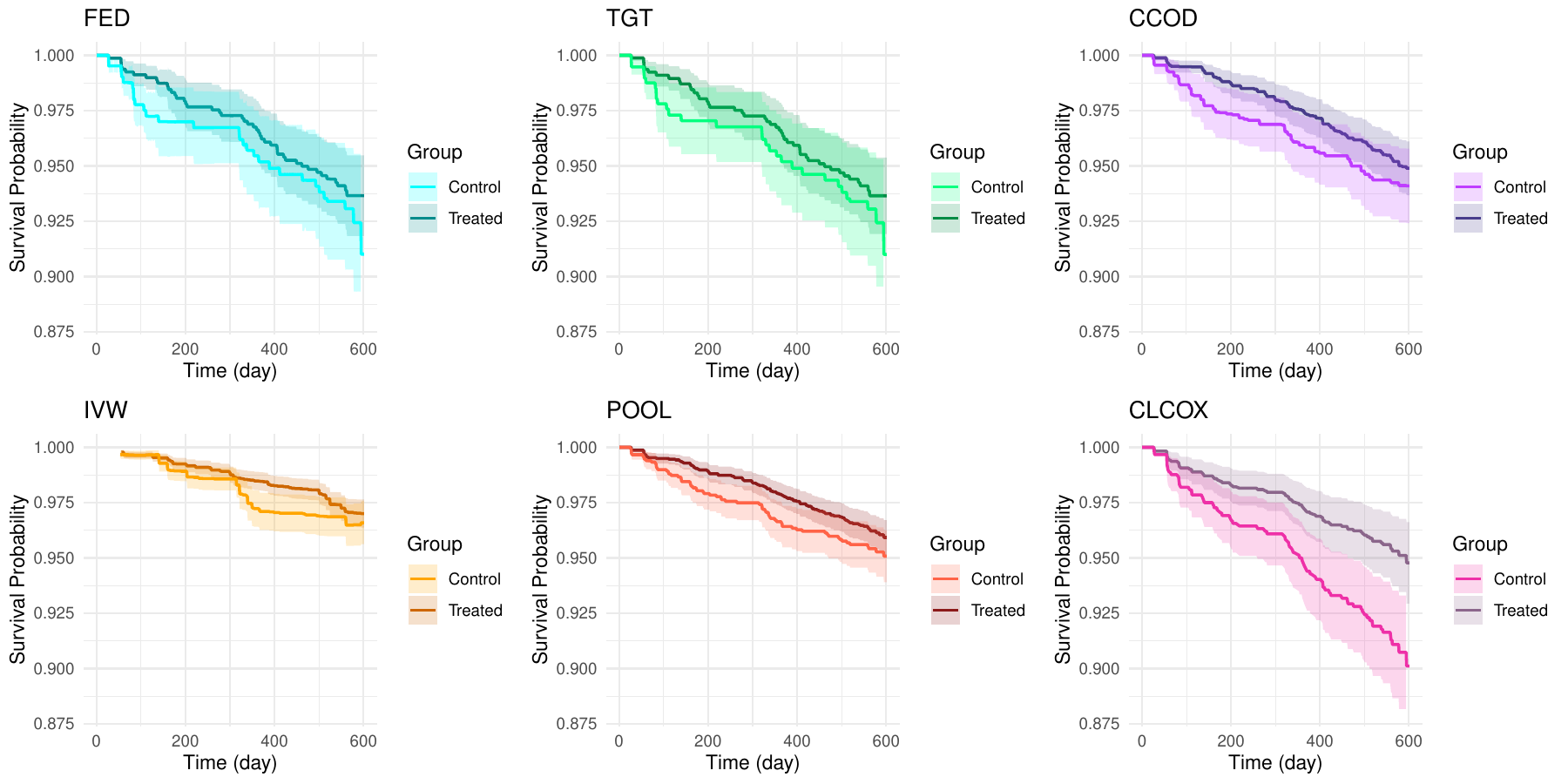}\\
    \textbf{\small (A) Estimated treatment-specific survival curves}\medskip
    
    \includegraphics[width=5in]{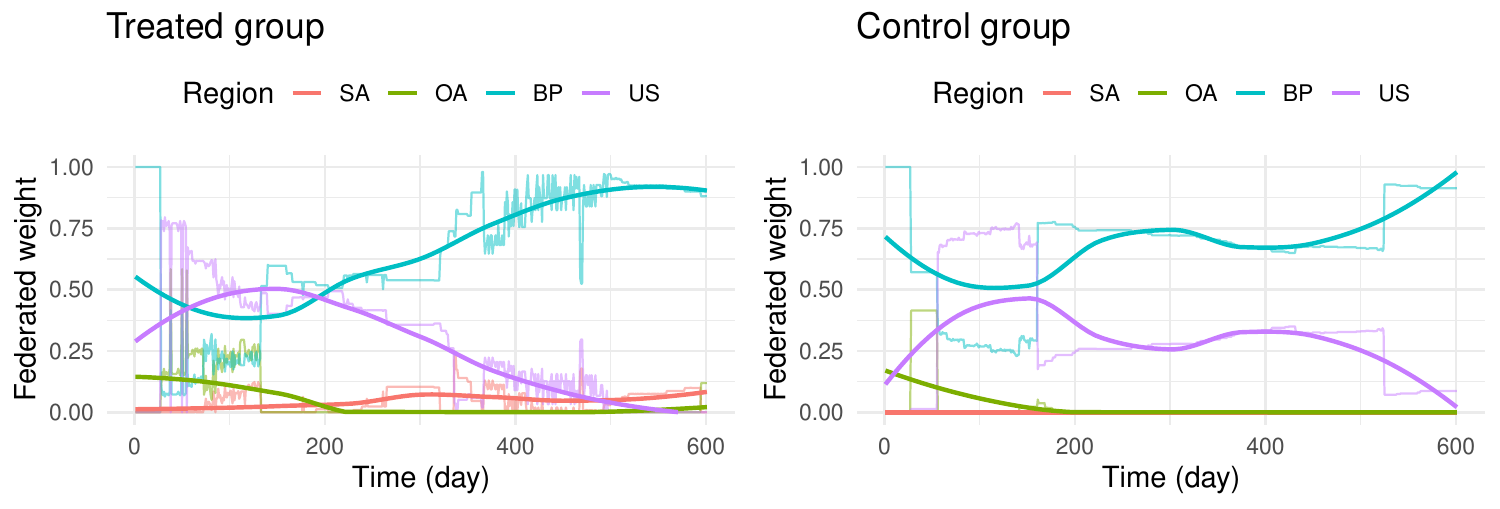}\\
    \textbf{\small (B) Treatment- and time-specific federated weights} 
    \caption{Data analysis results when treating transgender men in Brazil or Peru (BP, men) as the target region. The smoothed weight curves in panel (B) are obtained by locally weighted regression, which is only a visualization tool (not a part of our methodology).} 
    \label{fig:AMP-BP}
\end{figure}

\begin{figure}[H]
    \centering
    \includegraphics[width=\textwidth]{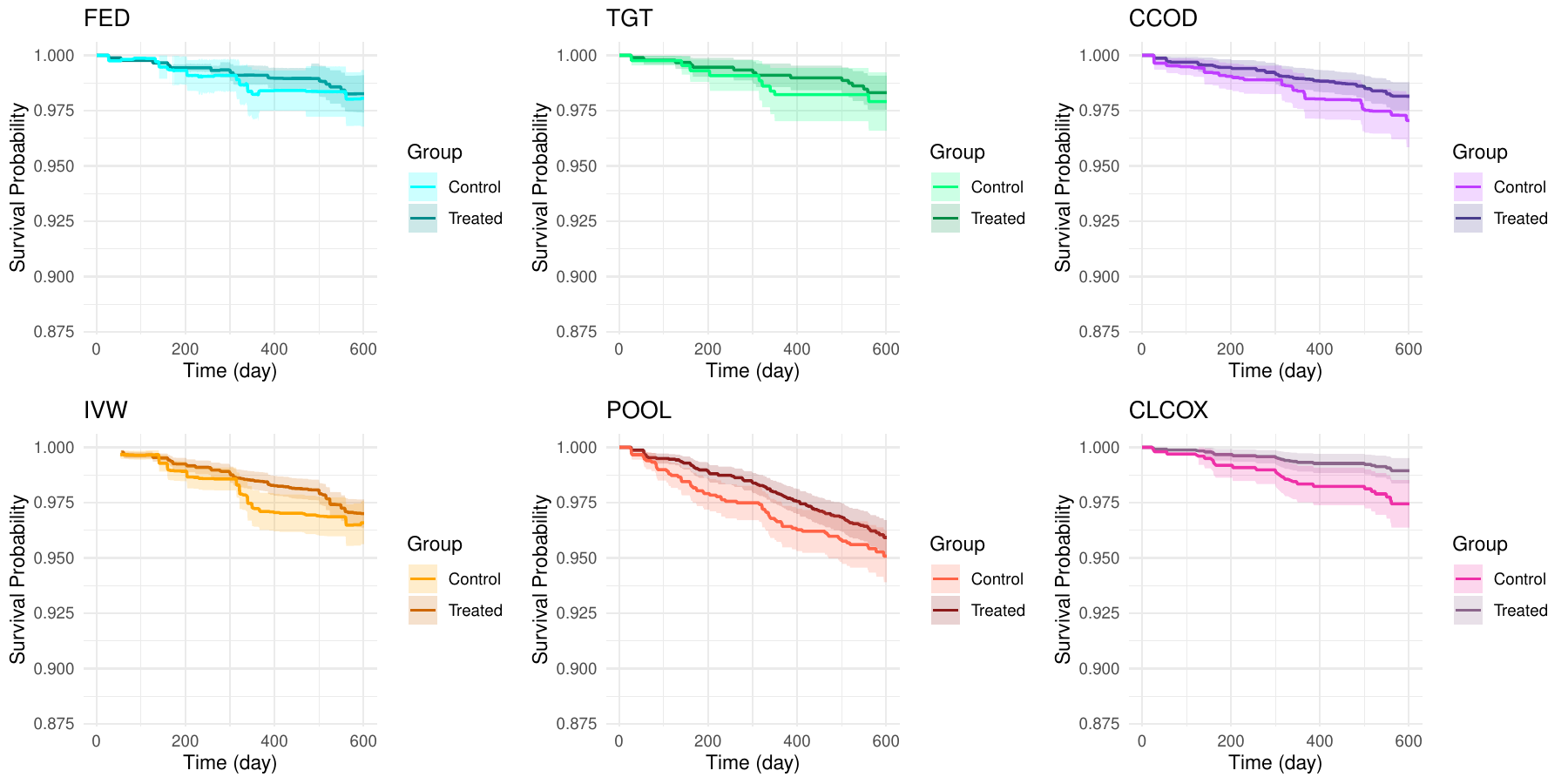}\\
    \textbf{\small (A) Estimated treatment-specific survival curves}\medskip
    
    \includegraphics[width=5in]{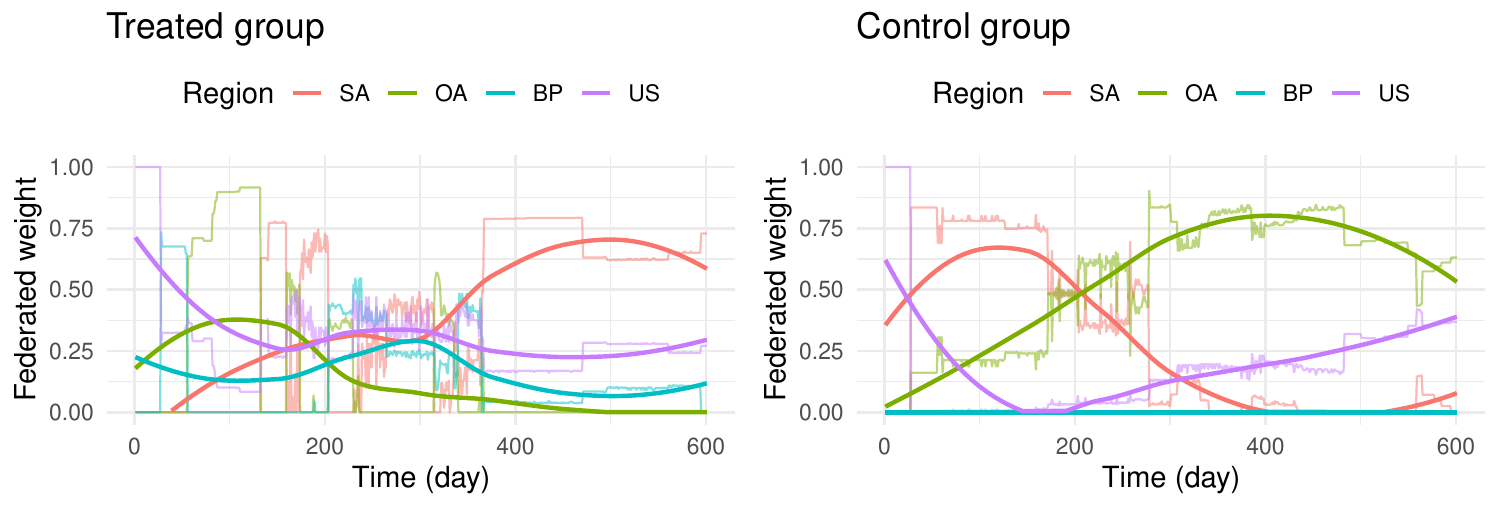}\\
    \textbf{\small (B) Treatment- and time-specific federated weights} 
    \caption{Data analysis results when treating transgender men in United States or Switzerland (US, men) as the target region. The smoothed weight curves in panel (B) are obtained by locally weighted regression, which is only a visualization tool (not a part of our methodology). }
    \label{fig:AMP-US}
\end{figure}

\subsection{Supplementary sensitivity analysis to the main analysis by excluding the baseline ML risk score from covariates}\label{subapp:supp-sensitivity}

In this section, we conduct an additional sensitivity analysis to assess whether excluding the baseline machine learning (ML) risk score from the main analysis materially changes the results. Figure \ref{fig:AMP-SA-supp} presents the treatment-specific survival curves obtained when only baseline age and weight are included as covariates in all nuisance models. We find that the survival curves from all methods have shapes similar to those in the main analysis. However, including the risk score yields overall narrower CI bands, suggesting that the score is predictive of the outcome and improves estimation efficiency. Tables \ref{tab:RD-AMP-supp} and \ref{tab:RMST-AMP-supp} report the corresponding estimates of RD, SR, RMST, and RMST difference, analogous to those in the main text. These estimates are likewise similar to those from the main analysis, and the qualitative findings for FED remain unchanged: it yields smaller SEs, narrower CIs, and smaller p-values for all estimands.

\begin{figure}[H]
    \centering
    \includegraphics[width=\textwidth]{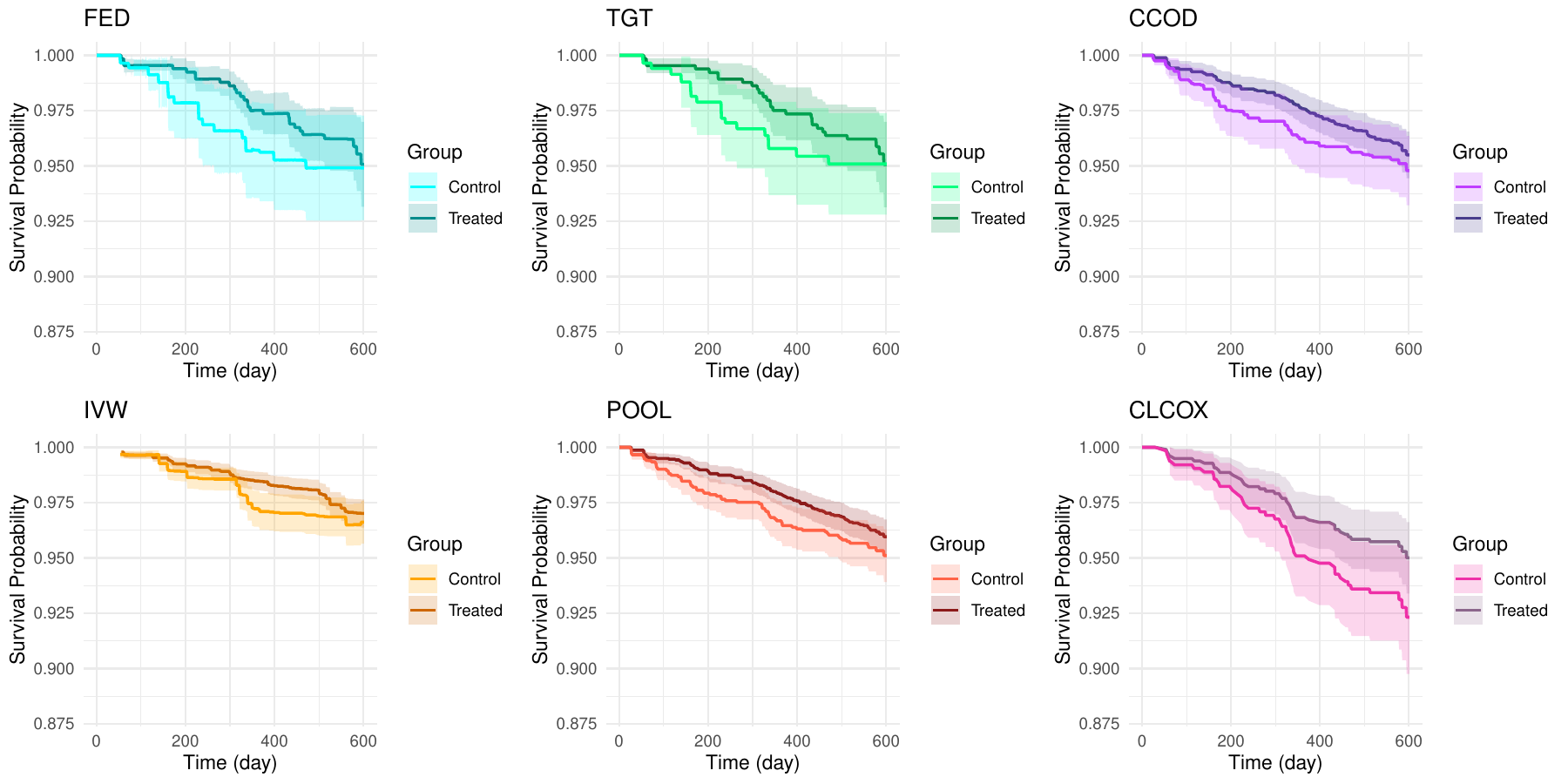}\\
    \textbf{\small (A) Estimated treatment-specific survival curves}\medskip
    
    \includegraphics[width=5in]{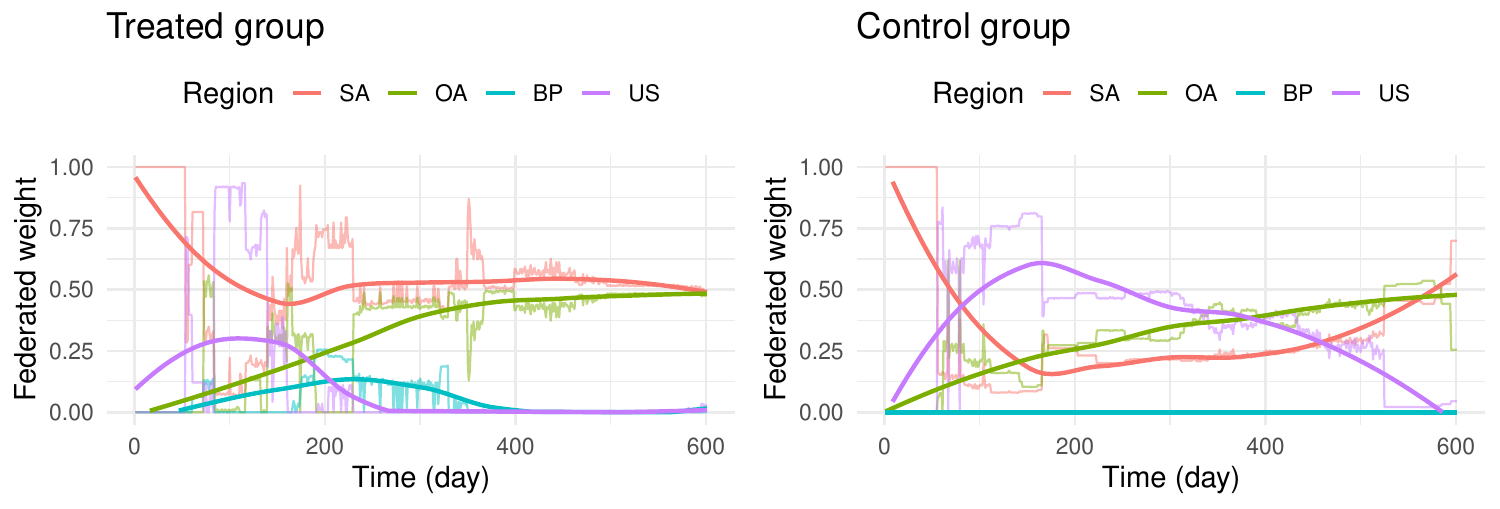}\\
    \textbf{\small (B) Treatment- and time-specific federated weights} 
    \caption{Supplementary results using only age and weight at baseline as covariates in all nuisance models, treating women in South Africa (US, women) as the target region. The smoothed weight curves in panel (B) are obtained by locally weighted regression, which is only a visualization tool (not a part of our methodology). }
    \label{fig:AMP-SA-supp}
\end{figure}

\begin{table}[H]
\centering
\singlespacing
\small
\caption{Estimated RD and SR at days 148, 330, and 512 by excluding ML risk score from covariates.}
\label{tab:RD-AMP-supp}
\begin{threeparttable}
\begin{tabular}{rrcccccc}
\toprule
Day & Method 
& RD Est. (95\% CI) & SE(RD) & p-value
& SR Est. (95\% CI) & SE(SR) & p-value \\
\midrule
\multirow{3}{*}{148}
& TGT  
& 0.007 (-0.006, 0.021) & 0.007 & 0.278
& 1.008 (0.994, 1.021) & 0.007 & 0.282 \\
& FED  
& 0.008 (-0.001, 0.017) & 0.005 & 0.094
& 1.008 (0.999, 1.017) & 0.005 & 0.095 \\
& CCOD 
& 0.006 (-0.003, 0.016) & 0.005 & 0.194
& 1.007 (0.997, 1.017) & 0.005 & 0.196 \\
\midrule
\multirow{3}{*}{330}
& TGT  
& 0.018 (-0.005, 0.040) & 0.012 & 0.129
& 1.018 (0.994, 1.042) & 0.012 & 0.135 \\
& FED  
& 0.019 (-0.003, 0.040) & 0.011 & 0.092
& 1.019 (0.997, 1.041) & 0.011 & 0.090 \\
& CCOD 
& 0.014 (-0.001, 0.028) & 0.008 & 0.070
& 1.014 (0.999, 1.030) & 0.008 & 0.074 \\
\midrule
\multirow{3}{*}{512}
& TGT  
& 0.013 (-0.014, 0.040) & 0.014 & 0.357
& 1.013 (0.985, 1.042) & 0.015 & 0.362 \\
& FED  
& 0.015 (-0.012, 0.042) & 0.014 & 0.281
& 1.016 (0.988, 1.044) & 0.014 & 0.275 \\
& CCOD 
& 0.009 (-0.008, 0.026) & 0.009 & 0.304
& 1.009 (0.991, 1.027) & 0.009 & 0.308 \\
\bottomrule
\end{tabular}
\begin{tablenotes}\footnotesize
\item Est.: Estimate; SE: Standard error; CI: confidence interval. The p-value is for testing the null hypothesis of no treatment effect in the target region (RD = 0 or SR = 1). 
\end{tablenotes}
\end{threeparttable}
\end{table}

\begin{table}[H]
\centering
\singlespacing
\small
\caption{Estimated RMST by treatment group and RMST difference up to day 601 by excluding ML risk score from covariates.}
\label{tab:RMST-AMP-supp}
\begin{tabular}{rrcccc}
\toprule
& Method & RMST Est. & SE & 95\% CI & p-value \\
\midrule
& TGT  & 583.03 & 4.82 & (573.97, 592.09) & -- \\
Control group
& FED  & 582.47 & 4.62 & (573.03, 591.91) & -- \\
& CCOD & 583.54 & 4.15 & (575.40, 591.67) & -- \\
\midrule
& TGT  & 589.86 & 2.39 & (585.18, 594.54) & -- \\
Treated group
& FED  & 589.88 & 2.11 & (585.76, 594.01) & -- \\
& CCOD & 588.89 & 3.25 & (582.51, 595.26) & -- \\
\midrule
& TGT  & 6.83 & 5.17 & (-3.32, 16.97) & 0.187 \\
RMST difference
& FED  & 7.46 & 5.07 & (-2.47, 17.39) & 0.141 \\
& CCOD & 5.35 & 3.52 & (-1.55, 12.25) & 0.128 \\
\bottomrule
\end{tabular}
\begin{tablenotes}\footnotesize
\item RMST: restricted mean survival time; Est.: Estimate; SE: Standard error; CI: confidence interval. The p-value is for testing the null hypothesis of no treatment effect in the target region (RMST difference = 0). 
\end{tablenotes}
\end{table}

\newpage

\section{Summary of Notation}\label{subapp:notation}

We first summarize all notation used in the main text (Table \ref{tab:notation-main}) and this supplemental material (Table \ref{tab:notation-supp}) for readers' convenience and for consistency.

\begin{center}
\small
\setlength{\LTleft}{0pt}
\setlength{\LTright}{0pt}
\begin{longtable}{p{0.25\linewidth} p{0.7\linewidth}}
\caption{Summary of notation used in the main text.}
\label{tab:notation-main} \\
\toprule
\textbf{Notation} & \textbf{Meaning} \\
\midrule
\endfirsthead

\multicolumn{2}{l}{\textit{Table \thetable\ (continued)}} \\
\toprule
\textbf{Notation} & \textbf{Meaning} \\
\midrule
\endhead

\midrule
\multicolumn{2}{r}{\textit{Continued on next page}} \\
\endfoot

\bottomrule
\endlastfoot

\multicolumn{2}{l}{\textbf{Observed data and setup}} \\
$K$ & Total number of data sources/sites/regions. \\
$n$ & Total sample size across all sites. \\
$n_k$ & Sample size from site $k$ ($k=0,1,\dots,K-1$). \\
$R$ & Site indicator; $R=0$ denotes the target site, and $R=1,\dots,K-1$ denote source sites. \\
$\mb X$ & Baseline covariates. \\
$\Ix(\cdot)$ & Indicator function. \\
$A$ & Treatment indicator; $A=1$ for active bnAb treatment and $A=0$ for placebo. \\
$T^{(a)}$ & Potential event time under treatment $a$. \\
$C^{(a)}$ & Potential censoring time under treatment $a$. \\
$T=T^{(A)}$ & Observed event time under the received treatment. \\
$C=C^{(A)}$ & Observed censoring time under the received treatment. \\
$Y=\min(T,C)$ & Observed follow-up time. \\
$\Delta=\Ix(T\le C)$ & Event indicator. \\
$\mc O=(\mb X,A,Y,\Delta,R)$ & Observed data vector for one participant. \\
$\Px$, $\Ex$, $\Vx$ & Probability measure, expectation, and variance under the data-generating distribution. \\
$\widehat\Px$ & Plug-in version of a functional obtained by replacing nuisance functions with their estimators. \\
$\Px_n$ & Full-sample empirical average, $n^{-1}\sum_{i=1}^n (\cdot)$. \\
$\bigCI$ & Statistical independence. \\
$\estimand$ & Target-site treatment-specific survival probability, $\Px(T^{(a)}>t \mid R=0)$. \\

\addlinespace
\multicolumn{2}{l}{\textbf{Site-specific nuisance functions}} \\
$S^k(t\mid a,\mb X)$ & Site-$k$ conditional survival function for the event time, $\Px(T>t\mid A=a,\mb X,R=k)$. \\
$\Lambda^k(t\mid a,\mb X)$ & Site-$k$ conditional cumulative hazard function for the event time. \\
$N^k_{\delta}(t\mid a,\mb X)$ & Site-$k$ conditional cumulative incidence function for event type $\delta$, where $\delta=1$ denotes the event and $\delta=0$ denotes censoring. \\
$D^k(t\mid a,\mb X)$ & Site-$k$ conditional at-risk probability, $\Px(Y\ge t\mid A=a,\mb X,R=k)$. \\
$\pi^k(a\mid\mb X)$ & Site-$k$ propensity score, $\Px(A=a\mid \mb X,R=k)$. \\
$G^k(t\mid a,\mb X)$ & Site-$k$ conditional survival function of the censoring time, $\Px(C>t\mid A=a,\mb X,R=k)$. \\
$\prodi$ & Product-integral operator \citep{gill1990survey}. \\

\addlinespace
\multicolumn{2}{l}{\textbf{Target-only and CCOD estimation}} \\
$\varphi^{*0}_{t,a}(\mc O;\Px)$ & EIF for the target-only estimator of $\theta^0(t,a)$. \\
$\tgtEst$ & Target-only estimator of $\theta^0(t,a)$. \\
$\bar S(t\mid a,\mb X)$ & Global conditional survival function under the CCOD assumption. \\
$\bar\Lambda(t\mid a,\mb X)$ & Global conditional cumulative hazard under the CCOD assumption. \\
$\bar\pi(a\mid\mb X)$ & Global propensity score, $\Px(A=a\mid \mb X)$. \\
$\bar G(t\mid a,\mb X)$ & Global censoring survival function, $\Px(C>t\mid A=a,\mb X)$. \\
$q^0(\mb X)$ & Target-site selection probability, $\Px(R=0\mid \mb X)$. \\
$\varphi^{*\text{CCOD}}_{t,a}(\mc O;\Px)$ & EIF of $\theta^0(t,a)$ under the CCOD model. \\
$\glbEst$ & CCOD estimator of $\theta^0(t,a)$. \\
$M$ & Number of folds used in cross-fitting. \\
$\widehat{\bar\pi}_m$, $\widehat{\bar G}_m$, $\widehat{\bar S}_m$, $\widehat{\bar\Lambda}_m$, $\widehat q^0_m$ & Fold-$m$ estimators of the corresponding global nuisance functions. \\
$\bar\pi_\infty$, $\bar G_\infty$, $\bar S_\infty$, $\bar\Lambda_\infty$, $q^0_\infty$ & Generic probability limits of the estimated global nuisance functions. \\

\addlinespace
\multicolumn{2}{l}{\textbf{Federated estimation}} \\
$\omega^{k,0}(\mb X)$ & Density ratio between the target site and site $k$, $\Px(\mb X\mid R=0)/\Px(\mb X\mid R=k)$. \\
$\varphi^{*k,0}_{t,a}(\mc O;\Px)$ & Site-$k$ EIF for estimating the target-site survival function under a working partial CCOD assumption. \\
$\skEst$ & Site-$k$ covariate density ratio-adjusted local estimator targeting $\theta^0(t,a)$. \\
$\widehat\chi_{n,t,a}^{k,0}$ & Site-specific discrepancy measure, $\widehat\theta_n^{k,0}(t,a)-\widehat\theta_n^0(t,a)$. \\
$\bd\eta_{t,a}=(\eta^0_{t,a},\dots,\eta^{K-1}_{t,a})$ & Vector of treatment- and time-specific federated weights. \\
$Q(\bd\eta_{t,a})$ & Penalized objective function used to estimate federated weights. \\
$\lambda$ & Tuning parameter in the $\ell_1$-penalized federated objective. \\
$\fedEst$ & Federated estimator of $\theta^0(t,a)$. \\
$\widehat{\mc V}_{t,a}^{\text{fed}}$ & Estimated asymptotic variance of the federated estimator. \\
$\mc S=\{1,\dots,K-1\}$ & Set of source sites. \\
$\mc S^*_{t,a}$ & Oracle set of informative source sites at $(t,a)$, i.e., sites satisfying $\theta^k(t,a)=\theta^0(t,a)$. \\
$\bar{\bd\eta}_{t,a}$ & Oracle federated weight vector minimizing asymptotic variance. \\
\addlinespace
\multicolumn{2}{l}{\textbf{Additional causal contrasts}} \\
$\delta^0(t)$ & Target-site survival risk difference, $\theta^0(t,1)-\theta^0(t,0)$. \\
$\rho^0(t)$ & Target-site survival ratio, $\theta^0(t,1)/\theta^0(t,0)$. \\
$\Theta^0(\tau,a)$ & Target-site RMST under treatment $a$, $\int_0^\tau \theta^0(t,a)\,dt$. \\
$\Delta^0(\tau)$ & Target-site RMST difference, $\Theta^0(\tau,1)-\Theta^0(\tau,0)$. \\
$\varphi^{*\theta}_{t,a}(\mc O;\Px)$ & Generic EIF for $\theta^0(t,a)$ under a given estimation framework. \\
$\varphi^{*\delta}_t(\mc O;\Px)$ & EIF for the risk difference $\delta^0(t)$. \\
$\varphi^{*\rho}_t(\mc O;\Px)$ & EIF for the survival ratio $\rho^0(t)$. \\
$\varphi^{*\Theta}_{\tau,a}(\mc O;\Px)$ & EIF for the RMST $\Theta^0(\tau,a)$. \\
$\varphi^{*\Delta}_{\tau}(\mc O;\Px)$ & EIF for the RMST difference $\Delta^0(\tau)$. \\

\end{longtable}
\end{center}



\begin{center}
\small
\setlength{\LTleft}{0pt}
\setlength{\LTright}{0pt}
\begin{longtable}{p{0.28\linewidth} p{0.71\linewidth}}
\caption{Additional notation used in the supplemental material.}
\label{tab:notation-supp} \\
\toprule
\textbf{Notation} & \textbf{Meaning} \\
\midrule
\endfirsthead

\multicolumn{2}{l}{\textit{Table \thetable\ (continued)}} \\
\toprule
\textbf{Notation} & \textbf{Meaning} \\
\midrule
\endhead

\midrule
\multicolumn{2}{r}{\textit{Continued on next page}} \\
\endfoot

\bottomrule
\endlastfoot
$\Px_\infty$ & Generic probability limit of estimated nuisance functions. \\
$\Px_n^m$ & Empirical average over the $m$th validation fold. \\
$\Gx_n^m$ & Empirical process indexed by the $m$th validation fold. \\
$\mc V_m$ & Validation set in fold $m$ of cross-fitting. \\
$\mc I_m$ & Index sets associated with the $m$th validation fold. \\
$\mc T_m$ & Training sample associated with the $m$th fold. \\
$\Ex_\Px(\cdot)$ & Expectation taken under distribution $\Px$ in the usual probabilistic sense. \\
$\dot\ell(\mc O)$ & Score function of a regular parametric submodel. \\
$\dot\ell_{V\mid W}$ & Conditional score function of $V$ given $W$. \\
$\mu(\cdot)$ & Distribution of $\mb X$ induced by $\Px$. \\
$\mu^*(\cdot)$ & Dominating measure used in Radon-Nikodym arguments. \\
$\lambda^*(\cdot\mid a,\mb x)$ & Radon-Nikodym derivative of a cumulative hazard with respect to $\mu^*$. \\
$\Qx$ & Generic probability measure used in entropy calculations. \\
$N(\varepsilon,\mc F,\|\cdot\|)$ & Covering number of a function class. \\
$N_{[]}(\varepsilon,\mc F,\|\cdot\|)$ & Bracketing number of a function class. \\
$F$ & Envelope function for a function class. \\
$\bar N_\delta(t\mid a,\mb x)$ & Global conditional cumulative incidence function for type $\delta$ in the CCOD proof. \\
$\bar D(t\mid a,\mb x)$ & Global at-risk probability in the CCOD proof. \\
$\bar H(t,a,\mb x)$ & Auxiliary function introduced in the derivation of the CCOD EIF. \\
$\varphi^{\text{CCOD}}_{t,a}(\mc O;\Px)$ & CCOD influence function before centering to the mean. \\
$\varphi^{\text{CCOD}}_{\infty,t,a}(\mc O;\Px_\infty)$ & Limiting CCOD influence function evaluated at nuisance limits. \\
$\widehat\varphi^{\text{CCOD}}_{n,m,t,a}(\mc O;\widehat\Px)$ & Estimated fold-specific CCOD influence function. \\
$\mc V^{\text{CCOD}}$ & Asymptotic variance of the CCOD estimator. \\
$\mc V^{\text{TGT}}$ & Asymptotic variance of the target-only estimator. \\ 
$\varphi^{k,0}_{t,a}(\mc O;\Px)$ & Local site-$k$ influence function before emphasizing efficiency or centering notation. \\
$\varphi^{k,0}_{\infty,t,a}(\mc O;\Px_\infty)$ & Limiting local influence function evaluated at nuisance limits. \\
$\widehat\varphi^{k,0}_{n,m,t,a}(\mc O;\widehat\Px)$ & Estimated fold-specific local influence function.  \\ 
$\widehat{\bd\eta}_{t,a}$ & Estimated federated weight vector. \\
$\mathbb R^{S^*_{t,a}}$ & Weight space supported only on oracle-selected source sites. \\
\end{longtable}
\end{center}

\section{Technical Details of the CCOD estimator}\label{app:theory-CCOD}

\subsection{Proof of Theorem \ref{thm:CCOD-EIF}}\label{subsubapp:proof-ccod-EIF}

Theorem \ref{thm:CCOD-EIF} establishes the EIF for the CCOD estimator. The proof combines semiparametric efficiency theory from \cite{bickel1993efficient} with product-integral arguments for survival functionals from \cite{gill1990survey}. Specifically, we characterize the pathwise derivative of the target estimand under a regular parametric submodel and then exploit the product-integral representation of the survival curve to derive the EIF. 

\begin{proof}
Recall the estimand $\estimand=\Ex\{S^0(t\mid a,\mb X)\mid R=0\}=\Ex\{
\bar S(t\mid a,\mb X)\mid R=0\}$ that follows from Assumption \ref{asp:ccod}. We write $$
\bar\Lambda(t\mid a,\mb x) = \displaystyle\int_0^t\frac{\bar N_1(du\mid a,\mb x)}{\bar D(u\mid a,\mb x)},
$$
where $\bar N_\delta(du\mid a,\mb x) = \Px(Y\leq t,\Delta=\delta\mid A=a, \mb X=\mb x)$ and $\bar D(t\mid a,\mb x)=\Px(Y\geq t\mid A=a,\mb X=\mb x)$.

Recall that a mean zero, finite variance function $\varphi^{*0}_{t,a}(\mc O; \Px)$ is called an \textit{influence function} of the target estimand (a functional) $\theta^0(t,a)=\theta^0(t,a; \Px)$ at $\Px$ if, for any one-dimensional regular parametric submodel $\{\Px_{\epsilon} : \epsilon \in [0,1)\}$ through $\Px \equiv \Px_0$, 
$$
\left.\frac{\partial}{\partial\epsilon} \theta^0(t,a; \Px_{\epsilon})\right|_{\epsilon = 0} = \Px_{\Px}[\varphi^{*0}_{t,a}(\mc O; \Px)\dot{\ell}(\mc O)],
$$
where $\dot{\ell}(\mc O)$ is the score function of the submodel at $\epsilon = 0$ (i.e., typically, $\dot{\ell}(\mc O) = {\partial}\log{\{p_{\epsilon}(\mc O)\}}/{\partial\epsilon}~|_{\epsilon = 0}$), where $p_\epsilon(\cdot)$ denotes the probability density (likelihood) function under submodel $\Px_\epsilon$ \citep{bickel1993efficient}.
Thus, to find the EIF, we begin by writing the equation
\begin{align}\label{eq:barEIF-1}
   0 & = \frac{\partial}{\partial\epsilon}\theta^0(t,a; \Px_{\epsilon})\bigg|_{\epsilon=0} = \frac{\partial}{\partial\epsilon}\Px_{\Px_{\epsilon}}\{\bar S_{\epsilon}(t\mid a,\mb X)\mid R=0\}\bigg|_{\epsilon=0} \nonumber \\ 
   & = \Ex\{[\bar S(t\mid a, \mb X)-\estimand]\dot{\ell}_{\mb X\mid R=0}\mid R=0\} +  \Ex\left\{\int\frac{\partial}{\partial\epsilon}\bar S_{\epsilon}(t\mid a,\mb x)\bigg|_{\epsilon=0}\mu(d\mb x)~\bigg|~ R=0\right\},
\end{align}
where $\mu(\cdot)$ denotes the distribution of $\mb X$ induced by $\Px$ and, for any sets of variables $V$ and $W$, $\dot{\ell}_{V \mid W}$ denotes the conditional score function of $V$ given $W$, i.e., typically $\partial\log{\{p_{\epsilon}(V \mid W)\}}/{\partial\epsilon}~|_{\epsilon = 0}$. Note that such scores always satisfy $\Px_{\Px}(\dot{\ell}_{V \mid W} \mid W) = 0$ \citep{bickel1993efficient}. 

The derivation of Equation \eqref{eq:barEIF-1} follows from the decomposition of the pathwise derivative with respect to $\epsilon$ (chain rule). In particular, $\epsilon$ appears both in the outer expectation $\Ex_{\Px_{\epsilon}}$ and in the conditional survival function $\bar S_{\epsilon}(t \mid a, \mb X)$ inside the expectation. Applying the chain rule, the derivative therefore separates into two parts. 

The first term,
$\Ex\{[\bar S(t\mid a, \mb X)-\estimand]\dot{\ell}_{\mb X\mid R=0}\mid R=0\}$, arises from differentiating the conditional distribution of $\mb X \mid R=0$ while treating $\bar S(t \mid a, \mb X)$ as fixed. It is the standard score representation for the pathwise derivative of a conditional expectation. Particularly, note that 
\[
\Ex_{\Px_{\epsilon}}\{\bar S(t\mid a, \mb X)\mid R=0\}
= \int \bar S(t\mid a, \mb x)\,f_{\epsilon}(\mb x\mid R=0)\,\mu(d\mb x),
\]
we differentiate under the integral sign to obtain
\[
\frac{\partial}{\partial\epsilon}\Ex_{\Px_{\epsilon}}\{\bar S(t\mid a, \mb X)\mid R=0\}\bigg|_{\epsilon=0}
=
\int \bar S(t\mid a, \mb x)\frac{\partial}{\partial\epsilon}f_{\epsilon}(\mb x\mid R=0)\bigg|_{\epsilon=0}\mu(d\mb x).
\]
Using
\[
\frac{\partial}{\partial\epsilon}f_{\epsilon}(\mb x\mid R=0)\bigg|_{\epsilon=0}
=
f(\mb x\mid R=0)\dot{\ell}_{\mb X\mid R=0}(\mb x),
\]
it follows that
\[
\frac{\partial}{\partial\epsilon}\Ex_{\Px_{\epsilon}}\{\bar S(t\mid a, \mb X)\mid R=0\}\bigg|_{\epsilon=0}
=
\Ex\{\bar S(t\mid a, \mb X)\dot{\ell}_{\mb X\mid R=0}\mid R=0\}.
\]
Furthermore, because $\Ex[\estimand \dot{\ell}_{\mb X\mid R=0}] = \estimand\Ex\{\dot{\ell}_{\mb X\mid R=0}\} = 0$ by property of the score function, we have 
$$
\frac{\partial}{\partial\epsilon}\Ex_{\Px_{\epsilon}}\{\bar S(t\mid a, \mb X)\mid R=0\}\bigg|_{\epsilon=0} = \Ex\{[\bar S(t\mid a, \mb X)-\estimand]\dot{\ell}_{\mb X\mid R=0}\mid R=0\}.
$$
The second term arises from differentiating $\bar S{\epsilon}(t \mid a, \mb X)$ with respect to $\epsilon$ while keeping the distribution of $\mb X \mid R=0$ fixed, where we derive more details below. Together, these two components yield Equation \eqref{eq:barEIF-1}. 

For the derivative of $\bar S_\epsilon$ w.r.t. $\epsilon$, again by the chain rule, we decompose it as $(\partial\bar S_\epsilon/\partial\bar\Lambda_\epsilon)\times (\partial\bar \Lambda_\epsilon/\partial\epsilon)$. For the first part $\partial\bar S_\epsilon/\partial\bar\Lambda_\epsilon$, we leverage Theorem 8 in \cite{gill1990survey}. Specifically, the mapping $H\mapsto\bar S(t; H):=\prodi_{(0,t]}\{1+H(du)\}$ is Hadamard differentiable relative to the supremum norm with derivative 
$$
\alpha\mapsto\bar S(t; H)\int_0^t\frac{\bar S(u-; H)}{\bar S(u; H)}\alpha(du)
$$
at $H$. 

Thus, by letting $H(t)=\bar\Lambda_\epsilon(t\mid a,\mb x)$ and the chain rule, the integrand in the second term becomes
\begin{align*}
\frac{\partial}{\partial\epsilon}\Prodi_{(0,t]}\{1-\bar \Lambda_{\epsilon}(du\mid a,\mb x)\}\bigg|_{\epsilon=0} & = -\bar S(t\mid a,\mb x)\int_0^t\frac{\bar S(u-\mid a,\mb x)}{\bar S(u\mid a,\mb x)}\frac{\partial}{\partial\epsilon}\bar\Lambda_\epsilon(du\mid a,\mb x)\bigg|_{\epsilon=0}.
\end{align*}
Furthermore,
\begin{align*}
    \frac{\partial}{\partial\epsilon}\bar \Lambda_\epsilon(du\mid a,\mb x)\bigg|_{\epsilon=0} & = \dfrac{\frac{\partial}{\partial\epsilon}\bar N_{1,\epsilon}(du\mid a,\mb x)\mid_{\epsilon=0}}{\bar D(u\mid a,\mb x)} - \dfrac{\frac{\partial}{\partial\epsilon}\bar D_{\epsilon}(u\mid a,\mb x)\mid_{\epsilon=0}\bar N_{1,\epsilon}(du\mid a,\mb x)}{\bar D(u\mid a,\mb x)^2}.
\end{align*}
In addition,
\begin{align*}
    \frac{\partial}{\partial\epsilon}\bar N_{1,\epsilon}(du\mid a,\mb x)\bigg|_{\epsilon=0} 
    & = \frac{\partial}{\partial\epsilon}\Px_\epsilon(Y\leq u,\Delta=1\mid A=a, \mb X=\mb x)\bigg|_{\epsilon=0} \\ 
    & = \frac{\partial}{\partial\epsilon}\iint\Ix(y\leq u,\delta=1)\Px_\epsilon(dy,d\delta\mid a,\mb x)\bigg|_{\epsilon=0}\\
    & = \iint\Ix(y\leq u,\delta=1)\dot{\ell}(y,\delta\mid a,\mb x)\Px(dy,d\delta\mid a,\mb x) \\
    & = \int_\delta\Ix(\delta=1)\dot{\ell}(u,\delta\mid a,\mb x)\Px(du, d\delta\mid a,\mb x), 
\end{align*}
and 
\begin{align*}
    \frac{\partial}{\partial\epsilon}\bar D_{\epsilon}(u\mid a,\mb x)\bigg|_{\epsilon=0} & = \frac{\partial}{\partial\epsilon}\Px_\epsilon(Y\geq u\mid A=a, \mb X=\mb x)\bigg|_{\epsilon=0} = \frac{\partial}{\partial\epsilon}\iint\Ix(y\geq u)\Px_\epsilon(dy, d\delta\mid a,\mb x)\bigg|_{\epsilon=0}\\
    & = \iint\Ix(y\leq u)\dot{\ell}(y,\delta\mid a,\mb x)\Px(dy,d\delta\mid a,\mb x).
\end{align*}
Therefore, plugging-in the above expressions, 
\begin{align*}
    & \frac{\partial}{\partial\epsilon}\iint \Prodi_{(0,t]}\{1-\bar \Lambda_{\epsilon}(du\mid a,\mb x)\}\mu(d\mb x)\bigg|_{\epsilon=0} \\
    & = \iiint -\Ix(y\leq t, \delta=1)\frac{\bar S(t\mid a, \mb x)\bar S(y-\mid a, \mb x)}{\bar S(y\mid a, \mb x)\bar D(y\mid\mb x)}\dot{\ell}(y,\delta\mid a,\mb x)\Px(dy,d\delta\mid a,\mb x)\mu(d\mb x) \\ 
    & \quad + \iiiint\Ix(u\leq t,u\leq y)\frac{\bar S(t\mid a, \mb x)\bar S(u-\mid a, \mb x)}{\bar S(u\mid a, \mb x)\bar D(u\mid\mb x)}\\
    & \quad\qquad \times \dot{\ell}(y,\delta\mid a,\mb x)\Px(dy,d\delta\mid a,\mb x)\bar N_1(du\mid a,\mb x)\mu(d\mb x) \\
    & = \iiint -\Ix(y\leq t, \delta=1)\frac{\bar S(t\mid a, \mb x)\bar S(y-\mid a, \mb x)}{\bar S(y\mid a, \mb x)\bar D(y\mid\mb x)}\dot{\ell}(y,\delta\mid a,\mb x)\Px(dy,d\delta\mid a,\mb x)\mu(d\mb x) \\ 
    & \quad + \iiint \bar S(t\mid a, \mb x)\int_0^{t\wedge y}\frac{\bar S(u-\mid a, \mb x)}{\bar S(u\mid a, \mb x)\bar D(u\mid\mb x)^2}\bar N_1(du\mid a,\mb x)\\
    & \quad\qquad \times \dot{\ell}(y,\delta\mid a,\mb x)\Px(dy,d\delta\mid a,\mb x)\mu(d\mb x) \\
    & = \Ex\left[\bar S(t\mid a,\mb X)\frac{\Ix(A=a)}{\bar\pi(a\mid\mb X)}\left\{\bar H(t\wedge Y, a,\mb X)-\frac{\Ix(Y\leq t,\Delta=1)\bar S(Y-\mid a,\mb X)}{\bar S(Y\mid a,\mb X)\bar D(Y\mid a,\mb X)}\right\}\dot{\ell}(Y,\Delta\mid a,\mb X)\right],
\end{align*}
where
\begin{align*}
    \bar H(t,a,\mb x) = \int_0^t\frac{\bar S(u-\mid a, \mb x)\bar N_1(du\mid a,\mb x)}{\bar S(u\mid a, \mb x)\bar D(u\mid a,\mb x)^2}. 
\end{align*}
Now, we note that 
\begin{align*}
    \Ex\left[\frac{\Ix(Y\leq t,\Delta=1)\bar S(Y-\mid A,\mb X)}{\bar S(Y\mid A,\mb X)\bar D(Y\mid A,\mb X)}~\bigg|~ A=a,\mb X=\mb x\right] = \int_0^t\frac{\bar S(y-\mid a, \mb x)\bar N_1(dy\mid a,\mb x)}{\bar S(y\mid a, \mb x)\bar D(y\mid a,\mb x)},
\end{align*}
and 
\begin{align*}
    & \Ex\{\bar H(t\wedge Y,A,\mb X)\mid A=a,\mb X=\mb x\} \\
    & = \iint^t \Ix(u\leq y)\frac{\bar S(u-\mid a, \mb x)\bar N_1(du\mid a,\mb x)}{\bar S(u\mid a, \mb x)\bar D(u\mid a,\mb x)^2}\Px(dy\mid a,\mb x) \\
    & = \int_0^t\Px(Y\geq u\mid A=a, \mb X=\mb x)\frac{\bar S(u-\mid a,\mb x)\bar N_1(du\mid a,\mb x)}{\bar S(u\mid a, \mb x)\bar D(u\mid a,\mb x)^2}\Px(dy\mid a,\mb x)\\
    & = \int_0^t\frac{\bar S(u-\mid a, \mb x)\bar N_1(du\mid a,\mb x)}{\bar S(u\mid a, \mb x)\bar D(u\mid a,\mb x)}.
\end{align*}
Therefore, 
\begin{align*}
    \Ex\left[\bar H(t\wedge Y,A,\mb X) - \frac{\Ix(Y\leq t,\Delta=1)\bar S(Y-\mid A,\mb X)}{\bar S(Y\mid A,\mb X)\bar D(Y\mid A,\mb X)}~\bigg|~ A,\mb X\right] = 0
\end{align*}
almost surely. By the fact that score functions have (conditional) mean zero, and by the tower property of expectation, the above calculations implies that
\begin{align*}
    & \frac{\partial}{\partial\epsilon}\iint \Prodi_{(0,t]}\{1-\bar \Lambda_{\epsilon}(du\mid a,\mb x)\}\mu(d\mb x)\bigg|_{\epsilon=0} \\
    & =  \Ex\left[\bar S(t\mid a, \mb X)\frac{\Ix(A=a)}{\bar \pi(a\mid\mb X)}\left\{\bar H(t\wedge Y,A,\mb X) - \frac{\Ix(Y\leq t,\Delta=1)\bar S(Y-\mid A,\mb X)}{\bar S(Y\mid A,\mb X)\bar D(Y\mid A,\mb X)}\right\}\dot{\ell}(\mc O)\right].
\end{align*}
Combining these results with the facts that $\bar N_1(du\mid a,\mb x)/\bar D(u\mid a,\mb x) = \bar \Lambda(du\mid a,\mb x)$ and $\bar D(u\mid a,\mb x)=\bar S(u-\mid x)\bar G(u\mid a,\mb x)$, we can rewrite \eqref{eq:barEIF-1} at the beginning as follows:
\begin{align*}
    & \frac{\partial}{\partial\epsilon}\theta^0(t,a; \Px_{\epsilon})\bigg|_{\epsilon=0} \\ 
    & = \Ex\bigg[\frac{\Ix(R=0)}{\Px(R=0)}[\bar S(t\mid a, \mb X)-\estimand]\dot{\ell}(\mc O) - \frac{\Ix(R=0)}{\Px(R=0)}\Ex\bigg\{\bar S(t\mid a, \mb X)\frac{\Ix(A=a)}{\bar \pi(a\mid\mb X)} \\
    & \quad\quad \times \left\{\frac{\Ix(Y\leq t,\Delta=1)}{\bar S(y\mid\mb X)\bar G(y\mid a,\mb X)}-\int_0^{t\wedge y}\frac{\bar \Lambda(du\mid a,\mb X)}{\bar S(u\mid\mb X)\bar G(u\mid a,\mb X)}\right\}\dot{\ell}(\mc O)~\bigg|~\mb X\bigg\}\bigg]\\
    & = \Ex\left[\frac{\Ix(R=0)}{\Px(R=0)}[\bar S(t\mid a, \mb X)-\estimand]\dot{\ell}(\mc O)\right] - \Ex\bigg[\frac{\Px(R=0\mid\mb X)}{\Px(R=0)}\bar S(t\mid a, \mb X)\frac{\Ix(A=a)}{\bar \pi(a\mid\mb X)} \\
    & \quad\quad \times \left\{\frac{\Ix(Y\leq t,\Delta=1)}{\bar S(y\mid\mb X)\bar G(y\mid a,\mb X)}-\int_0^{t\wedge y}\frac{\bar \Lambda(du\mid a,\mb X)}{\bar S(u\mid\mb X)\bar G(u\mid a,\mb X)}\right\}\dot{\ell}(\mc O)\bigg].
\end{align*}

Therefore, the EIF of the target estimand $\estimand$ at $\Px$ is found as 
\begin{align*}
    \varphi^{\text{CCOD}}_{t,a}(\mc O; \bar S, \bar G, \bar \pi) & = \varphi^{\text{CCOD}}_{t,a}(\mc O;\Px) \\ & = \frac{\Ix(R=0)}{\Px(R=0)}\{\bar S(t\mid a, \mb X)-\estimand\} -\frac{q^0(\mb X)}{\Px(R=0)}\frac{\Ix(A=a)}{\bar\pi(a\mid\mb X)}\bar S(t\mid a, \mb X)
    \\
    & \quad\times
    \left[\frac{\Ix(Y\leq t,\Delta=1)}{\bar S(Y\mid a, \mb X)\bar G(Y\mid a,\mb X)}-\int_0^{t\wedge Y}\frac{\bar \Lambda(du\mid a,\mb X)}{\bar S(u\mid a, \mb X)\bar G(u\mid a,\mb X)}\right],
\end{align*}
where $q^0(\mb X) = \Px(R=0\mid\mb X)$ is the target site propensity.
\end{proof}

\subsection{Efficiency comparison with the target-only estimator}\label{subsubsec:effcomp}

The difference between the asymptotic variances by the CCOD and target-only estimators could be characterized by the difference of mean squares of the two EIFs, which we express it as follows. Denote $\mc V^{\text{CCOD}}$ and $\mc V^{\text{TGT}}$ as the asymptotic variances of the CCOD and target-only estimators, respectively. 

Note that because the EIF of the target-only estimator is \citep{westling2024inference}
\begin{align*}
    \varphi^{*0}_{t,a}(\mc O;\Px) = \frac{\Ix(R=0)}{\Px(R=0)}\left[\left\{1-\frac{\Ix(A=a)}{\pi^0(a\mid\mb X)}\mc H_{t,a}(\mc O;S^0,G^0)\right\}S^0(t\mid a, \mb X)-\theta^0(t,a)\right], 
\end{align*}
we have 
\begin{align*}
    \mc V^{\text{TGT}} & = \Vx\left\{\frac{\Ix(R=0)}{\Px(R=0)}S^0(t\mid a,\mb X)\right\} + \Vx\left\{\frac{\Ix(R=0)}{\Px(R=0)}S^0(t\mid a,\mb X)\frac{\Ix(A=a)}{\pi^0(a\mid\mb X)}\mc H_{t,a}(\mc O;S^0, G^0)\right\}. 
\end{align*}
It is straightforward to verify that the second term of the right-hand side above equals to 
\begin{align*}
    & \frac{1}{\Px(R=0)^2}\Ex\left[S^0(t\mid a,\mb X)^2\frac{q^0(\mb X)}{\pi^0(a\mid\mb X)}\Ex\{\mc H_{t,a}(\mc O;S^0, G^0)^2\mid\mb X\}\right] \\
    & = \frac{1}{\Px(R=0)^2}\Ex\left[S^0(t\mid a,\mb X)^2\frac{q^0(\mb X)}{\pi^0(a\mid\mb X)}\Vx\{\mc H_{t,a}(\mc O;S^0, G^0)\mid\mb X\}\right]. 
\end{align*}
Similarly,  
\begin{align*}
    \mc V^{\text{CCOD}} & = \Vx\left\{\frac{\Ix(R=0)}{\Px(R=0)}S^0(t\mid a,\mb X)\right\} \\
    & \quad + \frac{1}{\Px(R=0)^2}\Ex\left[S^0(t\mid a,\mb X)^2\frac{q^0(\mb X)^2}{\bar\pi(a\mid\mb X)}\Vx\{\mc H_{t,a}(\mc O;\bar S, \bar G)\mid\mb X\}\right],
\end{align*}
since under CCOD assumption, $S^0(t\mid a,\mb X)=\bar S(t\mid a,\mb X)$. Therefore, the difference of the asymptotic variance equals to 
\begin{align}
    & \mc V^{\text{TGT}} - \mc V^{\text{CCOD}}\nonumber \\ 
    & = \frac{1}{\Px(R=0)^2}\Ex\bigg[S^0(t\mid a,\mb X)^2q^0(\mb X)\nonumber\\
    & \qquad\times\left\{\frac{1}{\pi^0(a\mid\mb X)}\Vx\{\mc H_{t,a}(\mc O;S^0, G^0)\mid\mb X\} - \frac{q^0(\mb X)}{\bar\pi(a\mid\mb X)}\Vx\{\mc H_{t,a}(\mc O;\bar S, \bar G)\mid\mb X\}\right\}\bigg]\nonumber\\
    & = \frac{1}{\Px(R=0)^2}\Ex\left[S^0(t\mid a,\mb X)^2q^0(\mb X)\{1-q^0(\mb X)\}\frac{\Vx\{\mc H_{t,a}(\mc O;S^0, G^0)\mid\mb X\}}{\pi^0(a\mid\mb X)}\right] \label{eq:eifdiff-I} \\
    & \quad + \frac{1}{\Px(R=0)^2}\Ex\left[S^0(t\mid a,\mb X)^2q^0(\mb X)^2\left\{\frac{\Vx\{\mc H_{t,a}(\mc O;S^0, G^0)\mid\mb X\}}{\pi^0(a\mid\mb X)}-\frac{\Vx\{\mc H_{t,a}(\mc O;\bar S, \bar G)\mid\mb X\}}{\bar\pi(a\mid\mb X)}\right\}\right] \label{eq:eifdiff-II}. 
\end{align}
We can see that the first term \eqref{eq:eifdiff-I} is always non-negative, and for given $S^0(t\mid a,\mb X)$ and $\mc H_{t,a}(\mc O;S^0,G^0)$, when $q^0(\mb X)$ and $\pi^0(a\mid\mb X)$ are closer to $0.5$, the larger the \eqref{eq:eifdiff-I}. 

When $\bar G=G^0$ and $\bar\pi=\pi^0$ with probability 1, \eqref{eq:eifdiff-II} is clearly zero, hence in this special case the efficiency gain is fully characterized by \eqref{eq:eifdiff-I}.

\subsection{RAL and uniform RAL properties of the CCOD estimator}\label{subapp:uniformRAL-CCOD}

In this section, we formally state the RAL and uniform RAL properties of the CCOD estimator. The RAL result has been stated in Theorem \ref{thm:RAL-ccod} in the main text. Below, we state regularity conditions needed for both RAL and uniform RAL results of the CCOD estimator. 

Recall that the subscript $m$ denotes nuisance functions estimated in the $m$-th fold of an $M$-fold cross-fitting procedure, with $m=1,\dots,M$. Denote $(\bar\pi_\infty,\bar G_\infty,\bar\Lambda_\infty,\bar S_\infty,q^0_\infty)$ some general probability limits of the estimated nuisance functions $(\widehat{\bar\pi}_m,\widehat{\bar G}_m,\widehat{\bar S}_m,\widehat{\bar\Lambda}_m,\widehat q^0_m)$. 

\begin{condition}\label{cond:nuisance-ccod}
    There exists $\bar\pi_\infty$, $\bar G_\infty$, $\bar\Lambda_\infty$, $\bar S_\infty$ and $q^0_\infty$ such that for $a\in\{0,1\}$, 
    \begin{align*}
       & \max_m \Px\Bigg[
        \left(\frac{1}{\widehat{\bar\pi}_m(a\mid\mb X)}-\frac{1}{\bar\pi_\infty(a\mid\mb X)}\right)^2
        + \left(\widehat q^0_m(\mb X)-q^0_\infty(\mb X)\right)^2
        + \\
        & \qquad \sup_{u\in[0,t]}\Bigg\{
        \left\vert\frac{1}{\widehat{\bar G}_m(u\mid a,\mb X)}-\frac{1}{\bar G_\infty(u\mid a,\mb X)}\right\vert^2
        + \left\vert\frac{\widehat{\bar S}_m(t\mid a,\mb X)}{\widehat{\bar S}_{m}(u\mid a,\mb X)}-\frac{\bar S_\infty(t\mid a,\mb X)}{\bar S_\infty(u\mid a,\mb X)}\right\vert^2
        \Bigg\}
        \Bigg] \to_p 0.
    \end{align*}
\end{condition}

\begin{condition}\label{cond:bound-pi-G-ccod}
    There exists $\eta\in(0,\infty)$ such that for $\Px$-almost all $\mb X$, $\min\{\widehat{\bar\pi}_m(a\mid\mb X), \bar\pi_\infty(a\mid\mb x), \widehat q^0_m(\mb X), q^0_\infty(\mb X), \widehat{\bar G}_m(t\mid a,\mb X), \bar G_\infty(t\mid a,\mb X)\}\geq 1/\eta$ 
    with probability tending to 1. 
\end{condition}

\begin{condition}\label{cond:prod-error-ccod}
Define
\begin{align*}
    \bar r_{n,t,a,1} & = \max_m\Px\vert\{\widehat{\bar\pi}_{m}(a\mid\mb X)-\bar\pi_\infty(a\mid\mb X)\}\cdot\{\widehat{\bar S}_{m}(t\mid a,\mb X)-\bar S_\infty(t\mid a,\mb X)\}\vert, \\
    \bar r_{n,t,a,2} & = \max_m\Px\vert\{\widehat q^0_{m}(\mb X)-q^0_\infty(\mb X)\}\cdot\{\widehat{\bar S}_{m}(t\mid a,\mb X)-\bar S_\infty(t\mid a,\mb X)\}\vert, \text{ and}\\
    \bar r_{n,t,a,3} & = \max_m\Px\bigg\vert\widehat{\bar S}_{m}(t\mid a,\mb X)\int_0^t\left\{\frac{\bar G_\infty(u\mid a,\mb X)}{\widehat {\bar G}_{m}(u\mid a,\mb X)}-1\right\}\left(\frac{\bar S_\infty}{\widehat{\bar S}_{m}}-1\right)(du\mid a,\mb X)\bigg\vert. 
\end{align*}
Then, it holds that all $\bar r_{n,t,a,1}=o_p(n^{-1/2})$, $\bar r_{n,t,a,2}=o_p(n^{-1/2})$ and $\bar r_{n,t,a,3}=o_p(n^{-1/2})$.
\end{condition}

\begin{condition}\label{cond:uniform-S-ccod}
$$
\max_m\Px\left[\sup_{u\in[0,t]}\sup_{v\in[0,u]}\left\vert\frac{\widehat{\bar S}_{m}(u\mid a,\mb X)}{\widehat{\bar S}_{m}(v\mid a,\mb X)}-\frac{\bar S_\infty(u\mid a,\mb X)}{\bar S_\infty(v\mid a,\mb X)}\right\vert\right]^2\to_p 0. 
$$
\end{condition}

\begin{condition}\label{cond:prod-error-unif-ccod}
     It holds that all $\sup_{u\in[0,t]}\bar r_{n,u,a,1}=o_p(n^{-1/2})$, $\sup_{u\in[0,t]}\bar r_{n,u,a,2}=o_p(n^{-1/2})$, and $\sup_{u\in[0,t]}\bar r_{n,u,a,3}=o_p(n^{-1/2})$.
\end{condition}

\begin{theorem}\label{thm:uniformRAL-ccod}
    If Conditions \ref{cond:nuisance-ccod}--\ref{cond:prod-error-ccod} hold with $(\bar\pi_\infty,\bar G_\infty,\bar\Lambda_\infty,\bar S_\infty,q^0_\infty) = (\bar\pi,\bar G,\bar\Lambda,\bar S, q^0)$,  $\glbEst=\estimand+\Px_n(\varphi^{*\text{CCOD}}_{t,a}) + o_p(n^{-1/2})$. In particular, $n^{1/2}(\glbEst-\estimand)$  convergences in distribution to a normal random variable with mean zero and variance $\sigma^2 = \Px[(\varphi^{*\text{CCOD}}_{t,a})^2]$. If Conditions \ref{cond:uniform-S-ccod}--\ref{cond:prod-error-unif-ccod} also hold with $(\bar\pi_\infty,\bar G_\infty,\bar\Lambda_\infty,\bar S_\infty,q^0_\infty) = (\bar\pi,\bar G,\bar\Lambda,\bar S, q^0)$,
    \begin{align*}
        \sup_{u\in[0,t]}\left\vert\widehat\theta^\text{CCOD}_n(u,a)-\theta^0(u,a)-\Px_n(\varphi^{*\text{CCOD}}_{u,a})\right\vert = o_p(n^{-1/2}). 
    \end{align*}
    In particular, $\{n^{1/2}(\widehat\theta^\text{CCOD}_n(u,a)-\theta^0(u,a)):u\in[0,t]\}$ converges weakly as a process in the space $\ell^\infty([0,t])$ of uniformly bounded functions on $[0,t]$ to a tight mean zero Gaussian process with covariance function $(u,v)\mapsto\Px(\varphi^{*\text{CCOD}}_{u,a}\varphi^{*\text{CCOD}}_{v,a})$. 
\end{theorem}

To prove Theorem \ref{thm:uniformRAL-ccod}, we first present some useful lemmata in Section \ref{subsubapp:lem-CCOD}.  

\subsection{Useful Lemmata for Theorem \ref{thm:uniformRAL-ccod}}\label{subsubapp:lem-CCOD}

To establish the RAL related results of the CCOD estimator in Theorem \ref{thm:RAL-ccod}, we first present some useful results and lemmata in this section. We assume that the corresponding nuisance functions converge to some general limits defined by $\Px_\infty$. We start by expressing the difference $\glbEst-\estimand$ as
\begin{align}\label{eq:glb-est-deco}
    & \Px_n[\widehat\varphi^{\text{CCOD}}_{t,a}(\mc O;\widehat\Px)] - \estimand \nonumber \\ 
    & = \frac1n\sum_{m=1}^M\sum_{i\in\mc I_m}\widehat\varphi^{\text{CCOD}}_{t,a}(\mc O_i) - \estimand \nonumber \\
    & = \Px_n[\varphi^{\text{CCOD}}_{\infty,t,a}]-\estimand+\frac1n\sum_{m=1}^M\sum_{i\in\mc I_m}\widehat\varphi^{\text{CCOD}}_{t,a}(\mc O_i) - \Px_n[\varphi^{\text{CCOD}}_{t,a}] \nonumber   \\
    & = \Px_n[\varphi^{*\text{CCOD}}_{\infty,t,a}] + \frac1n\sum_{m=1}^M\sum_{i\in\mc L^m}\left[\widehat\varphi^{\text{CCOD}}_{t,a}(\mc O_i) - \varphi^{\text{CCOD}}_{\infty,t,a}(\mc O_i)\right] \nonumber  \\
    & = \Px_n[\varphi^{*\text{CCOD}}_{\infty,t,a}] + \frac1M\sum_{m=1}^M\frac{Mn_m}{n}\frac1{n_m}\sum_{i\in\mc L^m}\left[\widehat\varphi^{\text{CCOD}}_{t,a}(\mc O_i) - \varphi^{\text{CCOD}}_{\infty,t,a}(\mc O_i)\right] \nonumber \\
    & = \Px_n[\varphi^{*\text{CCOD}}_{\infty,t,a}] + \frac1M\sum_{m=1}^M\frac{Mn_m}{n}\Px_n^m\left[\widehat\varphi^{\text{CCOD}}_{n,m,t,a} - \varphi^{\text{CCOD}}_{\infty,t,a}\right] \nonumber \\
    & = \Px_n[\varphi^{*\text{CCOD}}_{\infty,t,a}] + \frac1M\sum_{m=1}^M\frac{Mn_m}{n}(\Px_n^m-\Px)\left[\widehat\varphi^{\text{CCOD}}_{n,m,t,a} - \varphi^{\text{CCOD}}_{\infty,t,a}\right] \nonumber \\
    & \qquad + \frac1M\sum_{m=1}^M\frac{Mn_m}{n}\Px\left[\widehat\varphi^{\text{CCOD}}_{n,m,t,a} - \varphi^{\text{CCOD}}_{\infty,t,a}\right] \nonumber \\
    & = \Px_n[\varphi^{*\text{CCOD}}_{\infty,t,a}] + \frac1M\sum_{m=1}^M\frac{Mn_m^{1/2}}{n}\Gx_n^m\left[\widehat\varphi^{\text{CCOD}}_{n,m,t,a} - \varphi^{\text{CCOD}}_{\infty,t,a}\right] \nonumber \\
    & \qquad + \frac1M\sum_{m=1}^M\frac{Mn_m}{n}\Px\left[\widehat\varphi^{\text{CCOD}}_{n,m,t,a} - \estimand\right]. 
\end{align} 

Next, in the following Lemma \ref{lm:L2-bound-ccod-est}, we establish an $\ell_2(\Px)$ norm distance (bound) between the estimated IF and the limiting IF in terms of discrepancies on the nuisance parameters, we consider some decompositions.

\begin{lemma}\label{lm:L2-bound-ccod-est}
     Under Condition \ref{cond:bound-pi-G-ccod}, there exists a universal constant $C=C(\eta)$ such that for each $m$, $n$, $t$, and $a$, 
\begin{align*}
\Px[\widehat\varphi^{\text{CCOD}}_{n,m,t,a} - \varphi^{\text{CCOD}}_{\infty,t,a}]^2 \leq C(\eta)\sum_{j=1}^5 \bar A_{j,n,m,t,a},
\end{align*}
where 
\begin{align*}
    \bar A_{1,n,m,t,a} & = \Px\left[\frac{1}{\Px_n^m(R=0)}-\frac{1}{\Px(R=0)}\right]^2,\\
    \bar A_{2,n,m,t,a} & = \Px\left[\widehat q^0_{m}(\mb X)-q^0_\infty(\mb X)\right]^2,\\
    \bar A_{3,n,m,t,a} & = \Px\left[\frac{1}{\widehat{\bar\pi}_{m}(a\mid\mb X)}-\frac{1}{\bar\pi_\infty(a\mid\mb X)}\right]^2,\\
    \bar A_{4,n,m,t,a} & = \Px\left[\sup_{u\in[0,t]}\left\vert\frac{1}{\widehat{\bar G}_{m}(u\mid a,\mb X)}-\frac{1}{\bar G_\infty(u\mid a,\mb X)}\right\vert\right]^2, \\
    \bar A_{5,n,m,t,a} & = \Px\left[\sup_{u\in[0,t]}\left\vert\frac{\widehat{\bar S}_{m}(t\mid a,\mb X)}{\widehat{\bar S}_{m}(u\mid a,\mb X)}-\frac{\bar S_\infty(t\mid a,\mb X)}{\bar S_\infty(u\mid a,\mb X)}\right\vert\right]^2. 
\end{align*}
\end{lemma}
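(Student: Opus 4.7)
The plan is to expand $\widehat\varphi^{\text{CCOD}}_{n,m,t,a}-\varphi^{\text{CCOD}}_{\infty,t,a}$ as a telescoping sum of six pieces $\Delta_1,\dots,\Delta_6$, each capturing a single nuisance swap: the normalizer $\Px(R=0)$; the target–site propensity $q^0$; the treatment propensity $\bar\pi$; the censoring survival $\bar G$; the event–time survival ratio $\bar S(t\mid a,\mb X)/\bar S(u\mid a,\mb X)$ as it appears inside the augmentation; and the standalone factor $\bar S(t\mid a,\mb X)$ multiplying the anchor. After this expansion, the elementary inequality $(\sum_{j=1}^6 a_j)^2\le 6\sum_{j=1}^6 a_j^2$ reduces the claim to showing $\Px[\Delta_j^2]\le C(\eta)\,\bar A_{j,n,m,t,a}$ for each $j$.

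The main obstacle is the $\mc H_{t,a}$ contribution inside the augmentation term, because $\bar S$ enters both as the outer multiplicative factor and through the integrand and denominator of $\mc H_{t,a}$. The key algebraic step is to absorb the outer $\bar S(t\mid a,\mb X)$ into $\mc H_{t,a}$, producing
\begin{align*}
\bar S(t\mid a,\mb X)\,\mc H_{t,a}(\mc O;\bar S,\bar G) = \frac{\bar S(t\mid a,\mb X)\,\Ix(Y\le t,\Delta=1)}{\bar S(Y\mid a,\mb X)\bar G(Y\mid a,\mb X)} - \int_0^{t\wedge Y}\frac{\bar S(t\mid a,\mb X)}{\bar S(u\mid a,\mb X)\bar G(u\mid a,\mb X)}\,\bar\Lambda(du\mid a,\mb X),
\end{align*}
so that $\bar S$ enters only through the ratio $\bar S(t\mid a,\mb X)/\bar S(u\mid a,\mb X)\le 1$ for $u\in[0,t\wedge Y]$. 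This makes the sup–over–$u$ bound defining $\bar A_{5,n,m,t,a}$ both natural and uniformly finite, and avoids needing a pointwise lower bound on $\bar S$, which is not supplied by Condition \ref{cond:bound-pi-G-ccod}.

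With this rewrite in hand, each $\Delta_j$ is of the product form \emph{bounded envelope} $\times$ \emph{single nuisance discrepancy}. Using Condition \ref{cond:bound-pi-G-ccod}, the envelopes are controlled by fixed powers of $\eta$: $1/\bar\pi\le\eta$, $1/\bar G\le\eta$, $q^0\le 1$, the survival ratio $\le 1$, indicators $\le 1$, and the hatted counterparts obey the same inequalities on an event of probability tending to one. For the normalizer, positivity of $\Px(R=0)$ together with a standard Bernstein–type argument gives $1/\Px_n^m(R=0)\le 2\eta$ with probability tending to one, yielding the $\bar A_{1,n,m,t,a}$ contribution. For the integral in $\mc H_{t,a}$, I would apply Jensen's inequality across the $u$–integration (against the finite total-variation measure $\bar\Lambda(du\mid a,\mb X)$, whose mass is bounded in terms of $-\log\bar S(t\mid a,\mb X)$ via product-integral calculus) and then pull the envelope constants outside $\Px[\cdot]$. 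Summing the six bounds yields $\Px[\widehat\varphi^{\text{CCOD}}_{n,m,t,a}-\varphi^{\text{CCOD}}_{\infty,t,a}]^2\le C(\eta)\sum_{j=1}^6\bar A_{j,n,m,t,a}$ with $C(\eta)=6\max_j C_j(\eta)$, as claimed.
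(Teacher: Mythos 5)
Your proposal is correct and follows essentially the same route as the paper: a telescoping decomposition into single-nuisance-swap terms, an elementary inequality to convert the squared sum into a sum of squares, and envelope bounds from Condition \ref{cond:bound-pi-G-ccod} together with the observation that $\bar S$ enters only through ratios $\bar S(t\mid a,\mb X)/\bar S(u\mid a,\mb X)\le 1$ (so no lower bound on $\bar S$ is needed). The only difference is bookkeeping: the paper uses a four-term decomposition and delegates the further splitting of the augmentation factor to Lemma~3 of Westling et al., whereas you carry out that splitting explicitly via the rewrite of $\bar S(t\mid a,\mb X)\,\mc H_{t,a}$, which is precisely the content of that cited lemma.
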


\begin{proof}
    We first denote  
\begin{align*} 
   \bar B(\mc V_m) & = \frac{\Ix(A=a)}{\bar\pi(a\mid\mb X)}\bar S(t\mid a, \mb X)\left[\frac{\Ix(Y\leq t,\Delta=1)}{\bar S(Y\mid a, \mb X)\bar G(Y\mid a,\mb X)}-\int_0^{t\wedge Y}\frac{\bar\Lambda(du\mid a,\mb X)}{\bar S(u\mid a, \mb X)\bar G(u\mid a,\mb X)}\right]. 
\end{align*}
Then, we have the following decomposition:
\begin{align*}
    \widehat\varphi^{\text{CCOD}}_{t,a} - \varphi^{\text{CCOD}}_{\infty,t,a} = \sum_{j=1}^4 \bar U_{j,n,m,t,a},
\end{align*}
where 
\begin{align*}
    \bar U_{1,n,m,t,a} & = \left[\frac{\Ix(R=0)}{\Px_n^m(R=0)} - \frac{\Ix(R=0)}{\Px(R=0)}\right]\widehat{\bar S}_m(t\mid a,\mb x), \\
    \bar U_{2,n,m,t,a} & = \frac{\Ix(R=0)}{\Px(R=0)}\left[\widehat{\bar S}_m(t\mid a,\mb x) - \bar S_{\infty}(t\mid a,\mb x)\right], \\
    \bar U_{3,n,m,t,a} & = \left[\frac{\widehat q^0_m(\mb X)}{\Px_n^m(R=0)} - \frac{q^0(\mb X)}{\Px(R=0)}\right]\widehat{\bar B}_m(\mc V_m), \\
    \bar U_{4,n,m,t,a} & = \frac{q^0(\mb X)}{\Px(R=0)}\left[\widehat {\bar B}_m(\mc V_m) - \bar B_{\infty}(\mc V_m)\right].
\end{align*}
Note that the expression of $\widehat {\bar B}_m(\mc V_m) - \bar B_{\infty}(\mc V_m)$ can be found in Lemma 3 of \cite{westling2024inference}, while we only need to replace corresponding nuisance functions with the global version here, thus the detail is omitted. Then, by the triangle inequality, we have $\Px\left[\widehat\varphi^{\text{CCOD}}_{t,a} - \varphi^{\text{CCOD}}_{\infty,t,a}\right]^2\leq\left[\sum_{j=1}^4\{\Px\{(\bar U_{j,n,m,t,a})^2\}\}^{1/2}\right]^2$. Therefore, under Assumption \ref{asp:positivity} and Condition \ref{cond:bound-pi-G-ccod}, there exists a universal constant $C=C(\eta)$ such that the result in the statement holds. Thus, the proof is completed. 
\end{proof}

Next, we consider conditions that make the empirical process term $\Gx_n^m\left[\widehat\varphi^{\text{CCOD}}_{t,a} - \varphi^{\text{CCOD}}_{\infty,t,a}\right]$ to be $o_p(n^{-1/2})$. This requires some preliminaries from the empirical process theory \citep{vaart2023empirical}. We first introduce the following notation and Lemma \ref{lm:lm4-west}. Then, we consider results for uniform convergence over $t\in[0,\tau]$. 

Given a class of functions $\mc F$ on a sample space $\mc X$, a norm $\Vert\cdot\Vert$, and $\varepsilon>0$, the covering number, denoted by $N(\varepsilon,\mc F,\Vert\cdot\Vert)$, is the minimal number of $\Vert\cdot\Vert$-balls of radius $\varepsilon$ needed to cover $\mc F$. The centers of these balls need not be in $\mc F$. An $(\varepsilon,\Vert\cdot\Vert)$ bracket is a set of the form $\{f\in\mc F: l(x)\leq f(x)\leq u(x)\text{ for all }x\in\mc X\}$ such that $\Vert u-l\Vert\leq\varepsilon$ and is denoted $[l,u]$. Here $l$ and $u$ need not be elements of $\mc F$. The bracketing number, denoted by $N_{[]}(\varepsilon,\mc F,\Vert\cdot\Vert)$, is the minimal number of $(\varepsilon, \Vert\cdot\Vert)$ brackets needed to cover $\mc F$. Here, the subscript ``$[]$'' in the bracketing number distinguishes it from the covering number.  It is well known that $N(\varepsilon,\mc F,\Vert\cdot\Vert)\leq N_{[]}(2\varepsilon,\mc F,\Vert\cdot\Vert)$. Readers are referred to \cite{vaart2023empirical} for more theory on empirical processes and their applications. 

\begin{lemma}[Lemma 4 in \cite{westling2024inference}]\label{lm:lm4-west}
    Let $\mc F = \{x\mapsto f_t(x):t\in[0,\tau]\}$ be a class of functions on a sample space $\mc X$ such that $f_s(x)\leq f_t(x)$ for all $0\leq s\leq t\leq\tau$ and $x\in\mc X$, and such that an envelope $F$ for $\mc F$ satisfies $\Vert F\Vert_{\Px,2}<\infty$. Then $N_{[]}(\varepsilon\Vert F\Vert_{\Px,2},\mc F, L_2(\Px))\leq\Px(f_\tau-f)^2/(\varepsilon^2\Vert F\Vert_{\Px,2}^2)\leq 4\varepsilon^2$ for all $(0,1]$. If $\mc F$ is uniformly bounded by a constant $C$, then $N(\varepsilon C,\mc F, L_2(\Qx))\leq 1/\varepsilon^2$ for each probability distribution $\mathbb Q$ on $\mc X$. 
\end{lemma}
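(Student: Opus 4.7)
The plan is to use the monotonicity of $t\mapsto f_t(x)$ to construct a finite collection of brackets of the form $[f_{t_{j-1}},f_{t_j}]$ from a carefully chosen partition $0=t_0<t_1<\cdots<t_N=\tau$. Because $s\leq t$ implies $f_s(x)\leq f_t(x)$ pointwise, every $f_t$ with $t\in[t_{j-1},t_j]$ automatically lies in the bracket $[f_{t_{j-1}},f_{t_j}]$, so the bracketing task reduces to controlling the $L_2(\Px)$-width of each such bracket and counting them.

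First I would introduce the monotone aggregator $H(t)=\Px([f_t-f_0]^2)$. Since $F$ is an envelope and $a:=f_t-f_0\geq b:=f_s-f_0\geq 0$ pointwise for $s\leq t$, the elementary algebraic inequality $(a-b)^2\leq a^2-b^2$ for $a\geq b\geq 0$ yields
\begin{equation*}
\Vert f_t-f_s\Vert_{\Px,2}^2\;\leq\;H(t)-H(s),
\end{equation*}
so $H$ is nondecreasing with $H(0)=0$ and $H(\tau)=\Px(f_\tau-f_0)^2$. Using the right-continuous generalized inverse of $H$, I would then pick the partition so that $H(t_j)-H(t_{j-1})\leq \varepsilon^2\Vert F\Vert_{\Px,2}^2$ for each $j$. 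This guarantees every bracket has $L_2(\Px)$-width at most $\varepsilon\Vert F\Vert_{\Px,2}$, and the number of brackets required is at most $\lceil H(\tau)/(\varepsilon^2\Vert F\Vert_{\Px,2}^2)\rceil=\lceil \Px(f_\tau-f_0)^2/(\varepsilon^2\Vert F\Vert_{\Px,2}^2)\rceil$, the middle quantity in the claimed chain. The final inequality then follows from the envelope-implied pointwise bound $|f_\tau-f_0|\leq 2F$, which gives $\Px(f_\tau-f_0)^2\leq 4\Vert F\Vert_{\Px,2}^2$ and hence the $4\varepsilon^{-2}$ bound asserted in the statement.

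For the covering-number assertion under uniform boundedness by $C$, I would apply the bracketing bound with envelope $F\equiv C$, so that $\Vert F\Vert_{\Qx,2}=C$, and then convert brackets to covers by centers. Specifically, the pointwise midpoint $m_j=(f_{t_{j-1}}+f_{t_j})/2$ of each bracket is within $L_2(\Qx)$-distance $\tfrac{1}{2}\Vert f_{t_j}-f_{t_{j-1}}\Vert_{\Qx,2}$ of every element of that bracket. Applying the bracketing bound at radius $2\varepsilon C$ therefore produces an ordinary $(\varepsilon C)$-cover consisting of the midpoints, of cardinality at most $4/(2\varepsilon)^2=1/\varepsilon^2$, establishing the covering bound uniformly in $\Qx$.

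The main obstacle will be the partition construction when $H$ has jumps: $H$ is only guaranteed to be nondecreasing, not continuous, so I need to invoke a generalized-inverse argument and argue that the discrete increments of $H$ at atoms can still be absorbed without requiring more than $\lceil H(\tau)/\delta^2\rceil$ breakpoints with $\delta=\varepsilon\Vert F\Vert_{\Px,2}$. Once this is handled, the remaining ingredients, namely the bound $(a-b)^2\leq a^2-b^2$, the envelope inequality $|f_\tau-f_0|\leq 2F$, and the bracket-to-cover midpoint trick, are elementary and assemble into the stated chain of inequalities and the subsequent covering-number bound.
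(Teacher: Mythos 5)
The paper never proves this lemma --- it is imported verbatim from Westling et al.\ \cite{westling2023inference} and used as a black box in Lemmas \ref{lm:glb-fun-classes} and \ref{lm:site-k-fun-classes} --- so there is no in-paper argument to compare against; I am judging your proposal on its own. Your skeleton is the standard bracketing argument for families monotone in their index, and its elementary ingredients are all sound: $(a-b)^2\le a^2-b^2$ for $a\ge b\ge 0$ does give $\Vert f_t-f_s\Vert_{\Px,2}^2\le H(t)-H(s)$ with $H(t)=\Px(f_t-f_0)^2$; the envelope bound $|f_\tau-f_0|\le 2F$ gives $H(\tau)\le 4\Vert F\Vert_{\Px,2}^2$; and the midpoint conversion $N(\varepsilon C,\mc F,L_2(\Qx))\le N_{[]}(2\varepsilon C,\mc F,L_2(\Qx))$ with envelope $F\equiv C$ yields the covering assertion from the bracketing one. (You also silently corrected two typos in the statement --- $4\varepsilon^2$ should read $4/\varepsilon^2$ and $f$ should read $f_0$ --- which is the right reading.)

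The one genuine gap is exactly the point you flag but do not resolve: the partition. If $H$ has a jump exceeding $\delta^2:=\varepsilon^2\Vert F\Vert_{\Px,2}^2$ at some $t^*$ --- e.g.\ $f_t=\Ix(\cdot\le t)$ with $\Px$ a point mass at $t^*$ --- then \emph{no} choice of breakpoints satisfies $H(t_j)-H(t_{j-1})\le\delta^2$, so the construction does not merely require a generalized inverse; it fails outright. Moreover the jump cannot be ``absorbed'' while keeping $\lceil H(\tau)/\delta^2\rceil$ brackets: in that example with $\varepsilon=0.9$ one provably needs two brackets while $H(\tau)/\delta^2\approx 1.23$, which also shows the displayed bound is not literally attainable without a ceiling or additive term. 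The standard repair is to set $t_j=\sup\{t:\Px(f_t-f_{t_{j-1}})^2\le\delta^2\}$ and use the pointwise left limits $f_{t_j-}$ (which exist by monotonicity, are dominated by $F$, and can be handled by monotone convergence) as upper endpoints, so that $[f_{t_{j-1}},f_{t_j-}]$ covers $\{f_t:t\in[t_{j-1},t_j)\}$ with width at most $\delta$, sweeping each $f_{t_j}$ into the next bracket; this costs a bounded multiplicative factor in the count. You should therefore aim for $O(1/\varepsilon^2)$ with an explicit constant rather than the exact displayed constants --- only the $\varepsilon^{-2}$ rate enters the uniform-entropy-integral bounds in Lemmas \ref{lm:RAL-ccod-Gnm} and \ref{lm:RAL-k-Gnm}, so nothing downstream is affected.
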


Then, we introduce the following Lemma \ref{lm:glb-fun-classes}. For simplicity, $p^0$ and $q^0$ correspond to, respectively, $\Px(R=0)$ and $q^0(\mb X)$, and $S,\pi,G$ correspond to, respectively $\bar S$, $\bar\pi$ and $\bar G$ functions in Lemma \ref{lm:glb-fun-classes}. 

\begin{lemma}\label{lm:glb-fun-classes}
    Let $S,\pi, q^0, p^0$, and $G$ be fixed, where $t\mapsto S(t\mid a,\mb x)$ is assumed to be non-increasing for each $(a,\mb x)$, and where $S(t_0\mid a,\mb x)\geq 1/\eta$, $G(t_0\mid a,\mb x)\geq 1/\eta$, $\pi(a_0\mid\mb x)\geq 1/\eta$, and $p^0\geq 1/\eta$, for some $\eta\in(0,\infty)$. Then the class of influence functions $\mc F_{S,\pi,p^0,q^0,G,t_0,a_0} = \{\varphi_{S,\pi,p^0,q^0,G,t_0,a_0}:t\in[0,t_0]\}$ satisfies
    \begin{align*}
        \sup_{\Qx} N(\varepsilon\Vert F\Vert_{\Qx,2},\mc F_{S,\pi,p^0,q^0,G,t_0,a_0}, L_2(\Qx)) \leq 32/\varepsilon^{10},
    \end{align*}
    for any $\varepsilon\in(0,1]$ where $F=\eta(1+2\eta^2)$ is an envelope of $\mc F_{S,\pi,p^0,q^0,G,t_0,a_0}$, and the supremum is taken over all distributions $\Qx$ on the sample space of the observed data. 
\end{lemma}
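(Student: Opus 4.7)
My plan is to decompose the influence function class $\mc F_{S,\pi,p^0,q^0,G,t_0,a_0}$ into a sum of products of simpler subclasses, each of whose $t$--dependent factor is monotone in $t$ for every fixed $\mc O$, so that the uniform covering bound in Lemma~\ref{lm:lm4-west} applies. The key observation is that $\mc H_{t,a_0}(\mc O;S,G)$ is itself \emph{not} monotone in $t$ because of the sign difference between the indicator and the integral; however, each of those two pieces is monotone in $t$ separately. I would therefore write
\begin{align*}
\varphi_{S,\pi,p^0,q^0,G,t,a_0}(\mc O) = W_1(\mc O)\,\alpha(t;\mc O) \;-\; W_2(\mc O)\,\alpha(t;\mc O)\,\beta(t;\mc O) \;+\; W_2(\mc O)\,\alpha(t;\mc O)\,\gamma(t;\mc O),
\end{align*}
where $W_1(\mc O)=\Ix(R=0)/p^0$, $W_2(\mc O)=q^0(\mb X)\Ix(A=a_0)/[p^0\pi(a_0\mid\mb X)]$, $\alpha(t;\mc O)=S(t\mid a_0,\mb X)$, $\beta(t;\mc O)=\Ix(Y\leq t,\Delta=1)/[S(Y\mid a_0,\mb X)G(Y\mid a_0,\mb X)]$, and $\gamma(t;\mc O)=\int_0^{t\wedge Y}\Lambda(du\mid a_0,\mb X)/[S(u\mid a_0,\mb X)G(u\mid a_0,\mb X)]$. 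The factors $W_1,W_2$ are $t$--independent and uniformly bounded by $\eta$ and $\eta^2$ respectively via positivity.

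Next I would check that each of $\alpha,\beta,\gamma$ is pointwise monotone in $t$: $\alpha$ is non--increasing (survival function), whereas $\beta$ and $\gamma$ are non--decreasing (as the indicator and the domain of integration grow in $t$, respectively). Under the stated positivity, each class is uniformly bounded by a constant depending only on $\eta$: $\alpha\in[0,1]$, $\beta\in[0,\eta^2]$ (since $S(Y)G(Y)\geq \eta^{-2}$ on $\{Y\leq t_0\}$), and $\gamma\in[0,\eta^2\log\eta]$ (since $\Lambda(t_0)\leq\log\eta$ by $\sum\Delta\Lambda\leq-\log S(t_0)$ and the integrand is at most $\eta^2$). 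Applying the second conclusion of Lemma~\ref{lm:lm4-west} to $\mc F_\alpha=\{\alpha(t,\cdot):t\in[0,t_0]\}$, $\mc F_\beta$, and $\mc F_\gamma$ then gives $\sup_\Qx N(\varepsilon M_\bullet,\mc F_\bullet,L_2(\Qx))\leq 1/\varepsilon^2$ for each uniform probability measure $\Qx$, where $M_\bullet$ denotes the envelope of the subclass.

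I would then invoke the standard product and sum rules for covering numbers (see, e.g., \cite{vaart2023empirical}): for uniformly bounded classes $\mc F_1,\mc F_2$ with envelopes $M_1,M_2$,
\begin{align*}
N(\varepsilon,\mc F_1\mc F_2,L_2(\Qx)) &\leq N\!\bigl(\varepsilon/(2M_2),\mc F_1,L_2(\Qx)\bigr)\cdot N\!\bigl(\varepsilon/(2M_1),\mc F_2,L_2(\Qx)\bigr), \\
N(\varepsilon,\mc F_1+\mc F_2,L_2(\Qx)) &\leq N\!\bigl(\varepsilon/2,\mc F_1,L_2(\Qx)\bigr)\cdot N\!\bigl(\varepsilon/2,\mc F_2,L_2(\Qx)\bigr).
\end{align*}
The bounded, $t$--independent factors $W_1,W_2$ can be absorbed into the envelope constants. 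The anchor term $W_1\alpha$ contributes a single monotone factor and yields an $O(\varepsilon^{-2})$ bound; each of the two augmentation terms $W_2\alpha\beta$ and $W_2\alpha\gamma$ is a product of two monotone factors and yields an $O(\varepsilon^{-4})$ bound; combining the three via the sum rule produces the overall $O(\varepsilon^{-2-4-4})=O(\varepsilon^{-10})$ rate. The envelope $F=\eta(1+2\eta^2)$ follows by bounding $|W_1\alpha|\leq\eta$ and $|W_2\alpha|\cdot|\mc H_{t,a_0}|$ using positivity, with $|\mc H_{t,a_0}|$ controlled by the sum $\beta+\gamma$ and the appropriate product-integral identities.

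The main obstacle is not conceptual but bookkeeping: one must carefully track the envelope rescalings $M_\alpha,M_\beta,M_\gamma$ in the product rule and the constants arising from the three--way sum, so that the final numerator consolidates to $32$. The strategy closely parallels Lemma~6 of Westling et al. \cite{westling2023inference}, which handles the analogous class for the target--only estimator; the only new feature here is the additional $q^0(\mb X)$ factor from transporting to site $R=0$, which merely inflates the envelope constant on the augmentation term and does not alter the $\varepsilon$--exponent.
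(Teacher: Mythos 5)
Your proposal is correct and follows essentially the same route as the paper's proof: decompose the class into the anchor term plus the two augmentation pieces, bound each $t$-dependent factor's uniform covering number via the monotonicity lemma (Lemma~\ref{lm:lm4-west}), and combine with product/sum rules to get the exponent $2+4+4=10$ and constant $32$. The only (minor) deviation is your handling of the integral term: you exploit that $\gamma(t;\mc O)$ is itself non-decreasing in $t$ and apply Lemma~\ref{lm:lm4-west} directly, whereas the paper passes through the integrand class $\mc M_t$ and a Jensen's-inequality step; both give $4/\varepsilon^4$, though note that recovering the exact envelope $F=\eta(1+2\eta^2)$ requires the sharper identity $\int_0^t \{S(t\mid a_0,\mb x)/S(u\mid a_0,\mb x)\}\Lambda(du\mid a_0,\mb x)\leq 1$ rather than your cruder bound $\gamma\leq\eta^2\log\eta$.
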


\begin{proof}
    The class $\mc F_{S,\pi,p^0,q^0,G,t_0,a_0}$ is uniformly bounded by $\eta(1+2\eta^2)$ because of the assumed upper bounds of $1/p^0, 1/\pi$ and $1/G$. Therefore, the envelop function can be taken as $F=\eta(1+2\eta^2)$. 

    Define the following two functions $f_t$ and $h_t$ pointwise as
    \begin{align*}
        f_t(\mb x,a_0,\delta,y) & = \frac{\Ix(a=a_0,y\leq t,\delta=1)q^0(\mb x)S(t\mid a_0,\mb x)}{p^0\pi(a_0\mid\mb x)S(y\mid a_0,\mb x)G(y\mid a_0,\mb x)},\\
        h_t(\mb x,a_0,y) & = \int\frac{\Ix(a=a_0,u\leq y,u\leq t)q^0(\mb x)S(t\mid a_0,\mb x)}{p^0\pi(a_0\mid\mb x)S(u\mid a_0,\mb x)G(u\mid a_0,\mb x)}\Lambda(du\mid a_0,\mb x). 
    \end{align*}
    Consider classes $\mc F^1_{S,p^0,q^0,t_0,a_0} = \{(\mb x,r)\mapsto \Ix(r=0)S(t\mid a_0,\mb x)/p^0:t\in[0,t_0]\}$, $\mc F^2_{S,p^0,q^0, G,t_0,a_0} = \{(\mb x,a_0,\delta,y)\mapsto f_t(\mb x,a_0,\delta,y):t\in[0,t_0]\}$, and $\mc F^3_{S,p^0,q^0, G,t,a_0} = \{(\mb x,a_0,y)\mapsto h_t(\mb x,a,y):t\in[0,t_0]\}$. We can then write 
    \begin{align*}
        \mc F_{S,\pi,p^0,q^0,G,t_0,a_0}\subseteq\{f_1-f_2+f_3:f_1\in\mc F^1_{S,p^0,q^0,t_0,a_0}, f_2\in\mc F^2_{S,p^0,q^0, G,t_0,a_0}, f_3\in\mc F^3_{S,p^0,q^0, G,t_0,a_0}\}.
    \end{align*}
    Since $(\mb x,r)\mapsto \Ix(r=0)S(t\mid a_0,\mb x)/p^0$ is non-increasing over $t\in[0,t_0]$ for all $(\mb x,r)$ and uniformly bounded by $1/\eta$, Lemma \ref{lm:lm4-west} implies that $\sup_{\Qx}(\varepsilon,\mc F^1_{S,p^0,q^0,t_0,a_0},L_2(\Qx))\leq 2\varepsilon^2$. 

    The $\mc F^2_{S,p^0,q^0,G,t_0,a_0}$ is contained in the product of the classes $\{y\mapsto\Ix(y\leq t):t\in[0,t_0]\}$, $\{\mb x\mapsto S(t\mid a_0,\mb x):t\in[0,t_0]\}$, and the singleton class $\{(\mb x,a_0,\delta,y)\mapsto\Ix(a=a_0,\delta=1,y\leq t)q^0(\mb x)/[p^0\pi(a_0\mid\mb x)S(y\mid a_0,\mb x)G(y\mid a_0,\mb x)]\}.$ The first two classes both have covering number $\sup_{\Qx}N(\varepsilon,\cdot,L_2(\Qx))\leq 2\varepsilon^2$ by Lemma \ref{lm:lm4-west}. The third class has uniform covering number 1 for all $\varepsilon$ because it can be covered with a single ball of any positive radius. In addition,  $\mc F^2_{S,p^0,q^0,G,t,a}$ is uniformly bounded by $\eta^3$. Therefore, based on Lemma 5.1 in \cite{van2006statistical},  $\sup_{\Qx}N(\varepsilon\eta^3,\mc F^2_{S,p^0,q^0, G,t,a},L_2(\Qx))\leq 4/\varepsilon^4$.  

    Furthermore, based on Lemma 5.1 in \cite{van2006statistical}  again, we can rewrite $\mc F^3_{S,p^0,q^0, G,t_0,a_0} = \{(\mb x,a_0,y)\mapsto\int m_t(u,\mb x,a_0,y)\mu^*(du):t\in[0,t_0]\}$, where 
    \begin{align*}
        m_t(u,\mb x,a_0,y) = \frac{\Ix(a=a,u\leq y,u\leq t)q^0(\mb x)S(t\mid a_0,\mb x)}{p^0\pi(a_0\mid\mb x)S(u\mid a_0,\mb x)G(u\mid a_0,\mb x)}\lambda^*(u\mid a_0,\mb x),
    \end{align*}
    with $\mu^*(du)$ is a dominating measure for $\Px$-almost all $\mb x$ and $\lambda^*(du\mid a_0,\mb x)$ the Radon-Nikodym derivative of $\Lambda(\cdot\mid a_0,\mb x)$ with respect to $\mu^*$. Then, the class $\mc M_{t}=\{m_t:t\in[0,t_0]\}$ is contained in the product of the singleton class $\{(\mb x, a_0,y,u)\mapsto\Ix(a=a_0,u\leq y)q^0(\mb x)\lambda^*(u\mid a_0,\mb x)/[p^0\pi(a_0\mid\mb x)S(u\mid a_0,\mb x)G(u\mid a_0,\mb x)]\}$ and the class $\{u\mapsto\Ix(u\leq t):t\in[0,t_0]\}$ and $\{\mb x\mapsto S(t\mid a_0,\mb x):t\in[0,t]\}$, which as discussed above both have $L_2(\Qx)$ covering number bounded by $2/\varepsilon^2$ for any probability measure $\Qx$. Therefore, $\sup_{\Qx}N(\varepsilon\eta^3,\mc M_{t},L_2(\Qx))\leq 4/\varepsilon^4$. By Jensen's inequality, $\Vert h_t-h_s\Vert_{L_2(\Qx)}\leq\Vert m_t-m_s\Vert_{L_2(\mu^*\times\Qx)}$ for any $\Qx$, which implies that $\sup_{\Qx}N(\varepsilon\eta^3,\mc F^3_{S,p^0,q^0, G,t,a},L_2(\Qx))\leq\sup_{\Qx}N(\varepsilon\eta^3,\mc M_{t},L_2(\mu^*\times\Qx))\leq 4/\varepsilon^4$ for all $\Qx$. 

    Therefore, we have shown that the three classes have covering numbers bounded by $2/\varepsilon^2$, $4/\varepsilon^4$ and $4/\varepsilon^4$, respectively. Therefore, by Lemma 5.1 in 
 \cite{van2006statistical}, 
    \begin{align*}
        \sup_{\Qx} N(\varepsilon\eta(1+2\eta^2),\mc F_{S,\pi,p^0,q^0,G,t_0,a_0}, L_2(\Qx)) \leq 32/\varepsilon^{10}. 
    \end{align*}
\end{proof}

\begin{lemma}\label{lm:RAL-ccod-Gnm}
    If Conditions \ref{cond:nuisance-ccod}--\ref{cond:bound-pi-G-ccod} hold,  $M^{-1}\sum_{m=1}^M n^{-1}Mn_m^{1/2}\Gx_n^m\left[\widehat\varphi^{\text{CCOD}}_{n,m,t,a}-\varphi^{\text{CCOD}}_{\infty,t,a_0}\right] = o_p(n^{-1/2})$. If Conditions \ref{cond:uniform-S-ccod} holds as well, 
    \begin{align*}
        \frac1M\sum_{m=1}^M\frac{Mn_m^{1/2}}{n}\sup_{u\in[0,t]}\left\vert\Gx_n^m\left[\widehat\varphi^{\text{CCOD}}_{n,m,t,a}-\varphi^{\text{CCOD}}_{\infty,t,a_0}\right]\right\vert = o_p(n^{-1/2}). 
    \end{align*}
\end{lemma}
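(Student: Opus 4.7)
The plan is to exploit the cross-fitting structure, which decouples the randomness in the nuisance estimates from that in the empirical process: conditional on the training fold $\mc T_m$, the function $\widehat\varphi^{\text{CCOD}}_{n,m,u,a}-\varphi^{\text{CCOD}}_{\infty,u,a}$ is deterministic, and $\Gx_n^m$ is the standardized empirical measure over the independent validation fold $\mc V_m$. Throughout I will work conditionally on $\mc T_m$ and then integrate out, noting that conditional $o_p$ convergence implies unconditional convergence.

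For the pointwise claim, I would apply the conditional Chebyshev inequality, using that given $\mc T_m$,
$$
\Ex\!\left[\bigl(\Gx_n^m[\widehat\varphi^{\text{CCOD}}_{n,m,t,a}-\varphi^{\text{CCOD}}_{\infty,t,a}]\bigr)^2 \,\big|\, \mc T_m\right] \le \Px\bigl[\widehat\varphi^{\text{CCOD}}_{n,m,t,a}-\varphi^{\text{CCOD}}_{\infty,t,a}\bigr]^2.
$$
By Lemma \ref{lm:L2-bound-ccod-est} the right-hand side is dominated by $\sum_{j=1}^5 \bar A_{j,n,m,t,a}$, each term of which is $o_p(1)$ under Conditions \ref{cond:nuisance-ccod}--\ref{cond:bound-pi-G-ccod} (the factor $\bar A_{1,n,m,t,a}$ vanishes because $\Px_n^m(R=0)\to\Px(R=0)$ by the law of large numbers). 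Hence $\Gx_n^m[\widehat\varphi^{\text{CCOD}}_{n,m,t,a}-\varphi^{\text{CCOD}}_{\infty,t,a}]=o_p(1)$. Since $n_m\asymp n/M$, the prefactor satisfies $M n_m^{1/2}/n \asymp (M/n)^{1/2}$, so averaging over $m$ multiplies $o_p(1)$ by $O(n^{-1/2})$, delivering the stated $o_p(n^{-1/2})$ rate.

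For the uniform claim, I would apply a maximal inequality for empirical processes (e.g., Theorem 2.14.2 of van der Vaart and Wellner) to the class $\mathcal F_{n,m} = \{\widehat\varphi^{\text{CCOD}}_{n,m,u,a}-\varphi^{\text{CCOD}}_{\infty,u,a}: u\in[0,t]\}$ conditional on $\mc T_m$. By Condition \ref{cond:bound-pi-G-ccod}, $\mathcal F_{n,m}$ admits a deterministic envelope $F=2\eta(1+2\eta^2)$ with probability tending to one. Writing $\mathcal F_{n,m}$ as the difference of two classes of the form analyzed in Lemma \ref{lm:glb-fun-classes}---one with the estimated nuisances $(\widehat{\bar S}_m,\widehat{\bar\pi}_m,\widehat q^0_m,\widehat{\bar G}_m,\Px_n^m(R=0))$ and one with the limits---the uniform entropy bound $\sup_{\Qx}N(\varepsilon\|F\|_{\Qx,2},\mathcal F_{n,m},L_2(\Qx))\lesssim \varepsilon^{-10}$ holds. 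The uniform entropy integral is therefore finite, so the maximal inequality yields
$$
\Ex\!\left[\sup_{u\in[0,t]}\bigl|\Gx_n^m[\widehat\varphi^{\text{CCOD}}_{n,m,u,a}-\varphi^{\text{CCOD}}_{\infty,u,a}]\bigr| \,\Big|\, \mc T_m\right] \lesssim \bigl(\sup_{u\in[0,t]}\Px[\widehat\varphi^{\text{CCOD}}_{n,m,u,a}-\varphi^{\text{CCOD}}_{\infty,u,a}]^2\bigr)^{1/2}.
$$
A uniform analogue of Lemma \ref{lm:L2-bound-ccod-est}, obtained by replacing the suprema over $u\in[0,t]$ in $\bar A_{4,n,m,t,a}$ and $\bar A_{5,n,m,t,a}$ by suprema over $u\in[0,t]$ and $v\in[0,u]$, shows that this right-hand side is $o_p(1)$ under the additional Condition \ref{cond:uniform-S-ccod}. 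Multiplying by the prefactor $M n_m^{1/2}/n$ and averaging over $m$ gives the uniform conclusion.

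The main obstacle will be the verification that the uniform entropy bound of Lemma \ref{lm:glb-fun-classes} transfers to the \emph{estimated} integrand---in particular, that the monotonicity hypothesis $u\mapsto\widehat{\bar S}_m(u\mid a,\mb x)$ required to invoke Lemma \ref{lm:lm4-west} holds for the cross-fit nuisance estimates. This is handled by the isotonic-regression post-processing step flagged in the main text, which enforces non-increasingness of the estimated survival function. Once monotonicity, the deterministic envelope, and the uniform $L_2$ shrinkage are secured, the asymptotic equicontinuity of the (random) Donsker-type class $\mathcal F_{n,m}$ follows from standard chaining arguments and the rest of the argument is routine.
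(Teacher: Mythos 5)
Your proposal is correct and follows essentially the same route as the paper: condition on the training fold, bound the (conditional) expectation of the empirical-process term by the $L_2(\Px)$ distance via Lemma \ref{lm:L2-bound-ccod-est} for the pointwise claim and by the uniform entropy bound of Lemma \ref{lm:glb-fun-classes} together with a maximal inequality for the uniform claim, then absorb the $Mn_m^{1/2}/n=O(n^{-1/2})$ prefactor. Your use of conditional Chebyshev in the pointwise case is a harmless simplification of the paper's appeal to the same maximal inequality applied to a singleton class, and your observation that invoking Lemma \ref{lm:lm4-west} requires monotonicity of the cross-fit survival estimates (secured by the isotonic post-processing) is a detail the paper leaves implicit.
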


\begin{proof}
    We follow notation in Lemma \ref{lm:L2-bound-ccod-est}. First, we note that 
    \begin{align*}
        \frac{Mn_m^{1/2}}{n} & \leq\frac{M(|n_m-n/M|+n/M)^{1/2}}{n}\\
        & \leq\frac{M|n_m-n/M|^{1/2}+M|n/M|^{1/2}}{n}\\
        & \leq\left(\frac{M}{n}\right)^{1/2}+\frac{M}{n},
    \end{align*}
    for all $m$ since $|n_m-n/M|\leq 1$ by assumption on $n_m$. Then, we have that
    \begin{align*}
        & \frac1M\sum_{m=1}^M\frac{Mn_m^{1/2}}{n}\sup_{u\in[0,t]}\left\vert\Gx_n^m\left[\widehat\varphi^{\text{CCOD}}_{n,m,t,a}-\varphi^{\text{CCOD}}_{\infty,t,a}\right]\right\vert \\
        & \leq O(n^{-1/2})\frac1M \sum_{m=1}^M\sup_{u\in[0,t]}\left\vert\Gx_n^m\left[\widehat\varphi^{\text{CCOD}}_{n,m,t,a}-\varphi^{\text{CCOD}}_{\infty,t,a}\right]\right\vert,
    \end{align*}
    since $K=O(1)$. 

    Therefore, for the pointwise claim, we turn to show $M^{-1}\sum_{m=1}^M\left\vert\Gx_n^m\left[\widehat\varphi^{\text{CCOD}}_{n,m,t,a}-\varphi^{\text{CCOD}}_{\infty,t,a}\right]\right\vert = o_p(1)$. Using conditional argument, we write
    \begin{align*}
        \Ex\left\vert\Gx_n^m\left[\widehat\varphi^{\text{CCOD}}_{n,m,t,a}-\varphi^{\text{CCOD}}_{\infty,t,a}\right]\right\vert = \Ex\left[\Ex\left\vert\Gx_n^m\left[\widehat\varphi^{\text{CCOD}}_{n,m,t,a}-\varphi^{\text{CCOD}}_{\infty,t,a}\right]\right\vert\mid\mc T_m\right],
    \end{align*}
    where $\mc T_m = \mc O\backslash\mc V_m$ is the $m$th training set. Note that the randomness in the inner expectation of the right-hand-side above, by conditioning on the training set, is only induced from $\Gx_n^m$ by averaging over the observations on the validation set. Therefore, 
    \begin{align*}
        \Ex\left[\Ex\left\vert\Gx_n^m\left[\widehat\varphi^{\text{CCOD}}_{n,m,t,a}-\varphi^{\text{CCOD}}_{\infty,t,a}\right]\right\vert\mid\mc T_m\right] = \Px\left\vert\Gx_n^m(\widehat\varphi^{\text{CCOD}}_{n,m,t,a}-\varphi^{\text{CCOD}}_{\infty,t,a})\right\vert.
    \end{align*}
    Defining $\mc F^{\text{CCOD}}_{n,m,t,a}$ as the singleton class of functions $\widehat\varphi^{\text{CCOD}}_{n,m,t,a} - \varphi^{k,0}_{\infty,t,a}$, we further have
    \begin{align*}
        \Px\left\vert\Gx_n^m(\widehat\varphi^{\text{CCOD}}_{n,m,t,a}-\varphi^{\text{CCOD}}_{\infty,t,a})\right\vert = \Px\left[\sup_{f\in\mc F^{\text{CCOD}}_{n,m,t,a}}\left\vert\Gx_n^m(f)\right\vert\right].
    \end{align*}
    By Theorem 2.1.14 in \cite{van1996weak}, the covering number of $\mc F^{\text{CCOD}}_{n,m,t,a}$ is 1 for all $\varepsilon$, so the uniform entropy integral $J(1,\mc F^{\text{CCOD}}_{n,m,t,a})$ is 1 relative to the natural envelope $\vert\widehat\varphi^{\text{CCOD}}_{n,m,t,a}-\varphi^{\text{CCOD}}_{\infty,t,a}\vert$. Therefore, there is a universal constant $C'$ such that 
    \begin{align*}
        \Px\left[\sup_{f\in\mc F^{\text{CCOD}}_{n,m,t,a}}\left\vert\Gx_n^m(f)\right\vert\right]\leq C'\left\{\Px(\widehat\varphi^{\text{CCOD}}_{n,m,t,a}-\varphi^{\text{CCOD}}_{\infty,t,a})^2\right\}^{1/2}\leq C''\sum_{j=1}^6 \bar A_{j,n,m,t,a},
    \end{align*}
    following definition of $\bar A_{j,n,m,t,a}$ terms in Lemma \ref{lm:L2-bound-k-est}, so that $M^{-1}\sum_{m=1}^M\Ex\left\vert\Gx_n^m\left[\widehat\varphi^{\text{CCOD}}_{n,m,t,a}-\varphi^{\text{CCOD}}_{\infty,t,a}\right]\right\vert$ is bounded up to $C'''\sum_{j=1}^6 \Ex\{\max_m(\bar A_{j,n,m,t,a})\}$ for some constant $C'''$. It is straightforward that by Conditions \ref{cond:nuisance} and \ref{cond:bound-pi-G}, this upper bound tends to zero, therefore,   $M^{-1}\sum_{m=1}^M\left\vert\Gx_n^m\left[\widehat\varphi^{\text{CCOD}}_{n,m,t,a}-\varphi^{\text{CCOD}}_{\infty,t,a}\right]\right\vert=o_p(1)$. 
    
    Next, we show the uniform statement. The basic argument is the same. We first write 
    \begin{align*}
        \Ex\left[\sup_{u\in[0,t]}\left\vert\Gx_n^m\left\{\widehat\varphi^{\text{CCOD}}_{n,m,u,a}-\varphi^{\text{CCOD}}_{\infty,u,a}\right\}\right\vert~\bigg|~\mc T_m\right] & = \Ex\left[\sup_{u\in[0,t]}\left\vert\Gx_n^m\left[\widehat\varphi^{\text{CCOD}}_{n,m,u,a}-\varphi^{\text{CCOD}}_{\infty,u,a}\right]\right\vert\right] \\
        & = \Ex\left[\sup_{g\in\mc G_{n,m,t,a}}\left\vert\Gx_n^mg\right\vert\right],
    \end{align*}
    where $\mc G_{n,m,t,a} = \{\widehat\varphi^{\text{CCOD}}_{n,m,u,a}-\varphi^{\text{CCOD}}_{\infty,u,a}:u\in[0,t]\}$. When conditioning on $\mc T_m$, the functions $\widehat{\bar S}_m$,  $\widehat{\bar G}_m$, $\widehat{\bar\pi}_m$ and $\widehat q^0_m$ are fixed, so Lemma \ref{lm:glb-fun-classes} implies that 
    \begin{align*}
        \log\sup_{\Qx} N(\varepsilon\Vert\bar G_{n,m,t,a}\Vert_{\Qx,2},\mc G_{n,m,t,a},L_2(\Qx))\leq\widetilde C\log\varepsilon^{-1},
    \end{align*}
    for some constant $\widetilde C$ not depending on $n,m,$ or $\varepsilon$, and where $\bar G_{n,m,t,a}:=\sup_{u\in[0,t]}\vert\widehat\varphi^{\text{CCOD}}_{n,m,u,a}-\varphi^{\text{CCOD}}_{\infty,u,a}\vert$ is the natural envelope function for $\mc G_{n,m,t,a}$. As a result, the uniform entropy integral
    \begin{align*}
        J(1,\mc G_{n,m,t,a},L_2(\Px)) = \sup_{\Qx}\int_0^1[1+\log N(\varepsilon\Vert\bar G_{n,m,t,a}\Vert_{\Qx,2},\mc G_{n,m,t,a},L_2(\Qx))]^{1/2}d\varepsilon
    \end{align*}
    is bounded by a constant not depending on $n$ or $m$. By Theorem 2.1.14 in \cite{van1996weak}, there is therefore a constant $\bar C$ not not depending on $n$ or $m$ such that 
    \begin{align*}
        \Ex\left[\sup_{g\in\mc G_{n,m,t,a}}\left\vert\Gx_n^mg\right\vert\right] & \leq\bar C\Px\left[\sup_{u\in[0,t]}\{\varphi^{\text{CCOD}}_{n,m,u,a}(\mc O)-\varphi^{\text{CCOD}}_{\infty,u,a}(\mc O)\}^2\right]^2 \\
        & \leq \bar CC(\eta)\sum_{j=1}^6 \bar A_{j,n,m,t,a},
    \end{align*}
    where the second inequality follows notation and results of Lemma \ref{lm:L2-bound-ccod-est}. We therefore have that 
    \begin{align*}
        \frac1M\sum_{m=1}^M\Ex\left[\sup_{u\in[0,t]}\left\vert\Gx_n^m\left[\widehat\varphi^{\text{CCOD}}_{n,m,u,a}-\varphi^{\text{CCOD}}_{\infty,u,a}\right]\right\vert\right]\leq \bar CC(\eta)\sum_{j=1}^6\max_m\Ex[\bar A_{j,n,m,t,a}]. 
    \end{align*}
    By Conditions \ref{cond:nuisance-ccod}, \ref{cond:bound-pi-G-ccod}, and \ref{cond:prod-error-ccod}, this bound tends to zero.  
\end{proof}

\begin{lemma}\label{lm:bias-ccod-est}
    Consider some general nuisance functions under $\Px_\infty$, denoted by $\bar S_\infty$, $\bar G_\infty$, $\bar\pi_\infty$, and $q^0_\infty$. Then, $\Px[\varphi^{\text{CCOD}}_{t,a}(\mc O;\Px_\infty)] - \estimand$ equals
    \begin{align*}
        & \Ex\bigg[\frac{q^0(\mb X)}{\Px(R=0)}\bar S_\infty(t\mid a, \mb X)\int_0^t\frac{\bar S(y-\mid a,\mb X)}{\bar S_\infty(y\mid a,\mb X)}\\
        & \quad \times\left\{\frac{\bar G(y\mid a,\mb X)\bar\pi(a\mid\mb X)q^0_\infty(\mb X)}{\bar G_\infty(y\mid a,\mb X)\bar\pi_\infty(a\mid\mb X)q^0(\mb X)}-1\right\}(\bar\Lambda_\infty-\bar\Lambda)(dy\mid a,\mb X)\bigg].  
    \end{align*} 
\end{lemma}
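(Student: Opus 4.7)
\textbf{Proof plan for Lemma \ref{lm:bias-ccod-est}.} The strategy is to evaluate $\Px[\varphi^{\text{CCOD}}_{t,a}(\mc O;\Px_\infty)]$ piece by piece, match the ``anchor'' term against the estimand via iterated conditioning, and then use the Duhamel product-integral identity to rewrite the resulting discrepancy in $\bar S_\infty-\bar S$ as an integral against $\bar\Lambda-\bar\Lambda_\infty$ so that everything collapses into the single display in the statement. Specifically, I would split
\begin{align*}
\Px[\varphi^{\text{CCOD}}_{t,a}(\mc O;\Px_\infty)] = \underbrace{\Ex\!\left[\tfrac{\Ix(R=0)}{\Px(R=0)}\bar S_\infty(t\mid a,\mb X)\right]}_{=:T_A} - \underbrace{\Ex\!\left[\tfrac{q^0_\infty(\mb X)}{\Px(R=0)}\tfrac{\Ix(A=a)}{\bar\pi_\infty(a\mid\mb X)}\mc H_{t,a}(\mc O;\bar S_\infty,\bar G_\infty)\bar S_\infty(t\mid a,\mb X)\right]}_{=:T_B}.
\end{align*}
Conditioning on $\mb X$ in $T_A$ gives $T_A = \Ex\{q^0(\mb X)\bar S_\infty(t\mid a,\mb X)\}/\Px(R=0)$, and subtracting $\estimand=\Ex\{q^0(\mb X)\bar S(t\mid a,\mb X)\}/\Px(R=0)$ leaves a remainder proportional to $\bar S_\infty(t\mid a,\mb X)-\bar S(t\mid a,\mb X)$.

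Next I would evaluate $T_B$ by conditioning on $(A=a,\mb X)$. Under Assumptions~\ref{asp:unconf} and \ref{asp:inde-cen}, the true data-generating law satisfies $\bar N_1(du\mid a,\mb X) = \bar S(u-\mid a,\mb X)\bar G(u\mid a,\mb X)\bar\Lambda(du\mid a,\mb X)$ and $\Px(Y\geq u\mid A=a,\mb X)=\bar S(u-\mid a,\mb X)\bar G(u\mid a,\mb X)$, so a direct computation (the same style as in Proposition~\ref{prop:westling}, but now with true law against the $\Px_\infty$-plug-ins) gives
\begin{align*}
\Ex[\mc H_{t,a}(\mc O;\bar S_\infty,\bar G_\infty)\mid A=a,\mb X]
= \int_0^t \frac{\bar S(y-\mid a,\mb X)\bar G(y\mid a,\mb X)}{\bar S_\infty(y\mid a,\mb X)\bar G_\infty(y\mid a,\mb X)}(\bar\Lambda-\bar\Lambda_\infty)(dy\mid a,\mb X).
\end{align*}
Plugging this back and using $\Ex\{\Ix(A=a)f(\mb X)\mid\mb X\}=\bar\pi(a\mid\mb X)f(\mb X)$ yields
\begin{align*}
-T_B = \Ex\!\left[\frac{q^0_\infty(\mb X)\bar\pi(a\mid\mb X)\bar S_\infty(t\mid a,\mb X)}{\Px(R=0)\bar\pi_\infty(a\mid\mb X)} \int_0^t\frac{\bar S(y-\mid a,\mb X)\bar G(y\mid a,\mb X)}{\bar S_\infty(y\mid a,\mb X)\bar G_\infty(y\mid a,\mb X)}(\bar\Lambda_\infty-\bar\Lambda)(dy\mid a,\mb X)\right],
\end{align*}
which is exactly the ``$+1$'' contribution inside the curly bracket of the target expression.

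The remaining step is to show that $\Ex\{q^0(\mb X)[\bar S_\infty(t\mid a,\mb X)-\bar S(t\mid a,\mb X)]\}/\Px(R=0)$ matches the ``$-1$'' contribution. For this I would invoke the Duhamel equation for product integrals (Theorem~6 in Gill and Johansen~\cite{gill1990survey}), which yields the identity
\begin{align*}
\bar S_\infty(t\mid a,\mb X) - \bar S(t\mid a,\mb X) = \int_0^t \bar S(y-\mid a,\mb X)\,\frac{\bar S_\infty(t\mid a,\mb X)}{\bar S_\infty(y\mid a,\mb X)}\,(\bar\Lambda-\bar\Lambda_\infty)(dy\mid a,\mb X).
\end{align*}
Multiplying by $q^0(\mb X)/\Px(R=0)$, integrating, and combining with the expression for $-T_B$ above, the two pieces assemble into the bracketed factor $\{\bar G\bar\pi q^0_\infty/(\bar G_\infty\bar\pi_\infty q^0)-1\}$ times the common integrand, producing exactly the claimed representation of $\Px[\varphi^{\text{CCOD}}_{t,a}(\mc O;\Px_\infty)]-\estimand$. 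The main obstacle I anticipate is the Duhamel step: product-integral manipulations are easy to write incorrectly in the mixed discrete/continuous setting, so I would be careful to state the identity in the $\prodi$-form valid for arbitrary cadlag cumulative hazards (so that the argument covers both time scales simultaneously) and to track the left- vs.\ right-continuous evaluations of $\bar S$ consistently throughout.
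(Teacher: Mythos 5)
Your proposal is correct and follows essentially the same route as the paper: both decompose $\Px[\varphi^{\text{CCOD}}_{t,a}(\mc O;\Px_\infty)]$ into the anchor and augmentation terms, evaluate $\Ex[\mc H_{t,a}(\mc O;\bar S_\infty,\bar G_\infty)\mid A=a,\mb X]$ (which the paper imports from Lemma~1 of Westling et al.\ and you rederive directly from $\bar N_1(du)=\bar D(u)\bar\Lambda(du)$ and $\bar D(u)=\bar S(u-)\bar G(u)$), and apply the Duhamel equation to convert $\bar S_\infty(t\mid a,\mb X)-\bar S(t\mid a,\mb X)$ into the ``$-1$'' part of the bracketed factor. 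Your version of the Duhamel identity, with $\bar S_\infty(y\mid a,\mb X)$ rather than $\bar S_\infty(y-\mid a,\mb X)$ in the denominator, is the one consistent with the lemma's final display.
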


\begin{proof}
    We first express $\varphi^{\text{CCOD}}_{t,a}(\mc O;\Px_\infty)$ as
\begin{align*}
    \frac{\Ix(R=0)}{\Px(R=0)}\bar S_\infty(t\mid a,\mb X) - \frac{q^0_\infty(\mb X)}{\Px(R=0)}\bar S_\infty(t\mid a, \mb X)\frac{\Ix(A=a)}{\bar\pi_\infty(a\mid\mb X)}H_{\bar S_\infty,\bar G_\infty,t,a}(Y,\Delta,\mb X),
\end{align*}
where
\begin{align*}
    H_{\bar S_\infty,\bar G_\infty,t,a}(Y,\Delta,\mb X) = \frac{\Ix(Y\leq t,\Delta=1)}{\bar S_\infty(Y\mid a, \mb X)\bar G_\infty(Y\mid a,\mb X)}-\int_0^{t\wedge Y}\frac{\bar\Lambda_\infty(du\mid a,\mb X)}{\bar S_\infty(u\mid a, \mb X)\bar G_\infty(u\mid a,\mb X)}.
\end{align*}
Following a result in Lemma 1 of \cite{westling2024inference}, $\Ex\{H_{\bar S_\infty,\bar G_\infty,t,a}(Y,\Delta,\mb X)\mid\mb X=\mb x\}$ equals 
\begin{align*}
    -\int_0^t\frac{\bar S(y-\mid a,\mb x)\bar G(y\mid a,\mb x)}{\bar S_\infty(y\mid a,\mb x)\bar G_\infty(y\mid a,\mb x)}(\bar\Lambda_\infty-\bar \Lambda)(dy\mid a,\mb x).
\end{align*}
Therefore, it is straightforward that $\Px[\varphi^{\text{CCOD}}_{t,a}(\mc O;\Px_\infty)] - \estimand$ equals
\begin{align*}
& \Ex\left[\frac{q^0(\mb X)}{\Px(R=0)}\{\bar S_\infty(t\mid a,\mb X)-\bar S(t\mid a,\mb X)\}\right] \\
  & + \Ex\bigg[\frac{q^0(\mb X)}{\Px(R=0)}\frac{q^0_\infty(\mb X)}{q^0(\mb X)}\bar S_\infty(t\mid a, \mb X)\frac{\bar\pi(a\mid\mb X)}{\bar\pi_\infty(a\mid\mb X)}\\
  & \quad\times\int_0^t\frac{\bar S(y-\mid a,\mb X)\bar G(y\mid a,\mb X)}{\bar S_\infty(y\mid a,\mb X)\bar G_\infty(y\mid a,\mb X)}(\bar\Lambda_\infty-\bar\Lambda)(dy\mid a,\mb X)\bigg].
\end{align*}
Furthermore, by Duhamel equation in \cite{gill1990survey}, we have 
\begin{align*}
    \bar S_\infty(t\mid a,\mb X)-\bar S(t\mid a,\mb X) = -\bar S_\infty(t\mid a,\mb X)\int_0^t\frac{\bar S(y-\mid a,\mb X)}{\bar S_\infty(y-\mid a,\mb X)}(\Lambda^0_\infty-\Lambda^0)(du\mid a,\mb X),
\end{align*}
for each $(t,a,\mb x)$. Therefore, we further have $\Px[\varphi^\text{CCOD}_{t,a}(\mc O;\Px_\infty)] - \estimand$ equals
\begin{align}\label{eq:bias-term-ccod}
    & \Ex\bigg[\frac{q^0(\mb X)}{\Px(R=0)}\bar S_\infty(t\mid a, \mb X)\int_0^t\frac{\bar S(y-\mid a,\mb X)}{\bar S_\infty(y\mid a,\mb X)}\nonumber \\
    & \quad \times\left\{\frac{\bar G(y\mid a,\mb X)\bar\pi(a\mid\mb X)q^0_\infty(\mb X)}{\bar G_\infty(y\mid a,\mb X)\bar\pi_\infty(a\mid\mb X)q^0(\mb X)}-1\right\}(\bar\Lambda_\infty-\bar\Lambda)(dy\mid a,\mb X)\bigg]. 
\end{align}
\end{proof}

\subsection{Proof of Theorems \ref{thm:RAL-ccod} and \ref{thm:uniformRAL-ccod}}\label{subsubapp:proof-RAL-ccod}

The proof of Theorem \ref{thm:RAL-ccod} is included in that of Theorem \ref{thm:uniformRAL-ccod}. Hence, we prove Theorem \ref{thm:uniformRAL-ccod} below. 
The proof proceeds by decomposing the estimation error into a leading empirical average of the EIF and higher-order remainder terms, and then showing that these remainder terms are asymptotically negligible under the stated regularity conditions. 

\begin{proof}
    By the result in equation \eqref{eq:glb-est-deco} with $\bar\pi_\infty=\bar\pi$, $\bar G_\infty=\bar G$, and $\bar S_\infty=\bar S$, 
    \begin{align*}
    \glbEst - \estimand & = \Px_n[\varphi^{*\text{CCOD}}_{t,a}] + \frac1M\sum_{m=1}^M\frac{Mn_m^{1/2}}{n}\Gx_n^m\left[\widehat\varphi^\text{CCOD}_{n,m,t,a} - \varphi^\text{CCOD}_{t,a}\right] \\
    & \qquad + \frac1M\sum_{m=1}^M\frac{Mn_m}{n}\Px\left[\widehat\varphi^\text{CCOD}_{t,a}-\estimand\right].
    \end{align*}
    By Conditions \ref{cond:nuisance-ccod} and \ref{cond:bound-pi-G-ccod}, the second summand on the right-hand-side is $o_p(n^{-1/2})$ by Lemma \ref{lm:RAL-ccod-Gnm}. By Lemma \ref{lm:bias-ccod-est}, $\Px[\widehat\varphi^\text{CCOD}_{t,a}]-\estimand$ equals
    \begin{align*}
    & \Ex\bigg[\frac{q^0(\mb X)}{\Px(R=0)}\widehat{\bar S}_m(t\mid a, \mb X)\int_0^t\frac{\bar S(y-\mid a,\mb X)}{\widehat{\bar S}_m(y\mid a,\mb X)} \\
    & \quad \times\left\{\frac{\bar G(y\mid a,\mb X)\bar\pi(a\mid\mb X)\widehat q^0_m(\mb X)}{\widehat{\bar G}_m(y\mid a,\mb X)\widehat{\bar\pi}_m(a\mid\mb X)q^0(\mb X)}-1\right\}(\widehat{\bar\Lambda}_m-\bar\Lambda)(dy\mid a,\mb X)\bigg]. 
\end{align*}
By Duhamel equation in \cite{gill1990survey}, we have that 
\begin{align*}
    \frac{\bar S(y-\mid a,\mb X)}{\widehat{\bar S}_m(y\mid a,\mb X)}(\widehat{\bar\Lambda}_m-\bar\Lambda)(dy\mid a,\mb X) = \left(\frac{\bar S}{\widehat{\bar S}_m}-1\right)(du\mid a,\mb X),
\end{align*}
and so the above equals 
\begin{align*}
   & \Ex\left[\frac{q^0(\mb X)}{\Px(R=0)}\widehat{\bar S}_m(t\mid a, \mb X)\int_0^t\left\{\frac{\bar G(y\mid a,\mb X)\bar\pi(a\mid\mb X)\widehat q^0_m(\mb X)}{\widehat{\bar G}_m(y\mid a,\mb X)\widehat{\bar\pi}_m(a\mid\mb X)q^0(\mb X)}-1\right\}\left(\frac{\bar S}{\widehat{\bar S}_m}-1\right)(du\mid a,\mb X)\right] \\
    & = \Ex\bigg[\frac{q^0(\mb X)}{\Px(R=0)}\widehat{\bar S}_m(t\mid a, \mb X)\left\{\frac{\bar\pi(a\mid\mb X)}{\widehat{\bar\pi}_m(a\mid\mb X)}-1\right\}\int_0^t\frac{\bar S}{\widehat{\bar S}_m}(du\mid a,\mb X) + \frac{q^0(\mb X)}{\Px(R=0)}\widehat{\bar S}_m(t\mid a, \mb X) \\
    & \qquad\times\int_0^t\frac{\bar\pi(a\mid\mb X)}{\widehat{\bar\pi}_m(a\mid\mb X)}\left\{\frac{\bar G(y\mid a,\mb X)\widehat q^0_m(\mb X)}{\widehat{\bar G}_m(y\mid a,\mb X)q^0(\mb X)}-1\right\}\left(\frac{\bar S}{\widehat{\bar S}_m}-1\right)(du\mid a,\mb X)\bigg] \\
    & = \underbrace{\Ex\left[\frac{q^0(\mb X)}{\Px(R=0)}\frac{\{\widehat{\bar\pi}_m(a\mid\mb X)-\bar\pi(a\mid\mb X)\}\{\widehat {\bar S}_m(t\mid a, \mb X)-\bar S(t\mid a,\mb X)\}}{\widehat{\bar\pi}_m(a\mid\mb X)}\right]}_{I_1} \\
    & \quad + \Ex\bigg[\frac{q^0(\mb X)}{\Px(R=0)}\frac{\bar\pi(a\mid\mb X)}{\widehat{\bar\pi}_m(a\mid\mb X)}\widehat{\bar S}_m(t\mid a, \mb X) \\
    & \qquad\underbrace{\quad\times \int_0^t\left\{\frac{\bar G(y\mid a,\mb X)\widehat q^0_m(\mb X)}{\widehat{\bar G}_m(y\mid a,\mb X)q^0(\mb X)}-1\right\}\left(\frac{\bar S}{\widehat{\bar S}_m}-1\right)(du\mid a,\mb X)\bigg]}_{I_2}. 
\end{align*}
We note that $I_1\leq\eta^2\Ex\left[\vert\widehat{\bar\pi}_m(a\mid\mb X)-\bar\pi(a\mid\mb X)\vert\cdot\vert\widehat{\bar S}_m(t\mid a, \mb X)-\bar S(t\mid a,\mb X)\vert\right]$. For $I_2$, we further expand it, similarly to the above process, as 
\begin{align*}
    I_2 & = \underbrace{\Ex\left[\frac{q^0(\mb X)}{\Px(R=0)}\frac{\bar\pi(a\mid\mb X)}{\widehat{\bar\pi}_m(a\mid\mb X)}\{\widehat q^0_m(\mb X)-q^0(\mb X)\}\{\widehat{\bar S}_m(t\mid a, \mb X)-\bar S(t\mid a,\mb X)\}\right]}_{I_{21}} \\
    & \quad + \Ex\bigg[\frac{\widehat q^0_m(\mb X)}{\Px(R=0)}\frac{\bar\pi(a\mid\mb X)}{\widehat{\bar\pi}_m(a\mid\mb X)}\widehat{\bar S}_m(t\mid a, \mb X) \\
    & \qquad\underbrace{\quad\times\int_0^t\left\{\frac{\bar G(y\mid a,\mb X)}{\widehat{\bar G}_m(y\mid a,\mb X)}-1\right\}\left(\frac{\bar S}{\widehat{\bar S}_m}-1\right)(du\mid a,\mb X)\bigg]}_{I_{22}}.
\end{align*}
Then, we note that $I_{21}\leq\eta^2\Ex\left[|\widehat q^0_m(\mb X)-q^0\mb X)|\cdot|\widehat{\bar S}_m(t\mid a, \mb X)-\bar S(y\mid a,\mb X)|\right]$, and 
\begin{align*}
    I_{22}\leq \eta^2\Ex\left[\widehat{\bar S}_m(t\mid a, \mb X)\int_0^t\left\{\frac{\bar G(y\mid a,\mb X)}{\widehat{\bar G}_m(y\mid a,\mb X)}-1\right\}\left(\frac{\bar S}{\widehat{\bar S}_m}-1\right)(du\mid a,\mb X)\right]
\end{align*}

by Condition \ref{cond:bound-pi-G-ccod}. Then, by notation in Condition \ref{cond:prod-error-ccod}, $I_1+I_2\leq\eta^2\{\bar r_{n,t,a,1}+\bar r_{n,t,a,2}+\bar r_{n,t,a,3}\}$. Since $M^{-1}\sum_{m=1}^Mn^{-1}Mn_m\leq 2$, 
\begin{align*}
    \left\vert\frac1M\sum_{m=1}^M\frac{Mn_m}{n}\Px\left[\widehat\varphi^\text{CCOD}_{t,a} - \estimand\right]\right\vert\leq 2\eta^2(1+\eta)\{\bar r_{n,t,a,1}+\bar r_{n,t,a,2}+\bar r_{n,t,a,3}\} = o_p(n^{-1/2}), 
\end{align*}
by Condition \ref{cond:prod-error-ccod}. This established the pointwise RAL property: $\glbEst=\estimand+\Px_n(\varphi^{*\text{CCOD}}_{t,a}) + o_p(n^{-1/2})$. Since $\varphi^{*\text{CCOD}}_{t,a}$ is uniformly bounded, $\Px\{(\varphi^{*\text{CCOD}}_{t,a})^2\}<\infty$ and since $\Px\{\varphi^{*\text{CCOD}}_{t,a}\} = 0$, it follows that 
\begin{align*}
    n^{1/2}\Px_n(\widehat\varphi^{*\text{CCOD}}_{t,a})\to_d\mc N(0,\Px\{(\varphi^{*\text{CCOD}}_{t,a})^2\}). 
\end{align*}

For the uniform RAL, we use the same decomposition above. By conditions \ref{cond:nuisance-ccod}, \ref{cond:bound-pi-G-ccod} and \ref{cond:prod-error-ccod} and Lemma \ref{lm:RAL-ccod-Gnm}, 
\begin{align*}
    \frac1M\sum_{m=1}^M\frac{Mn_m^{1/2}}{n}\sup_{u\in[0,t]}\left\vert\Gx_n^m\left[\widehat\varphi^\text{CCOD}_{t,a}-\varphi^\text{CCOD}_{t,a}\right]\right\vert = o_p(n^{-1/2}).
\end{align*}
Therefore, we have that
\begin{align*}
\sup_{u\in[0,t]}\left\vert\frac1M\sum_{m=1}^M\frac{Mn_m}{n}\Px_n^m(\widehat\varphi^\text{CCOD}_{n,m,t,a}-\estimand)\right\vert \leq\sup_{u\in[0,t]}2\eta^2\{\bar r_{n,u,a,1}+\bar r_{n,u,a,2}+\bar r_{n,u,a,3}\},
\end{align*}
which is $o_p(n^{-1/2})$ by Condition \ref{cond:prod-error-unif-ccod}. Thus, $\sup_{u\in[0,t]}\left\vert\widehat\theta^\text{CCOD}_n(u,a)-\theta^0(u,a)-\Px_n(\varphi^{*\text{CCOD}}_{u,a})\right\vert = o_p(n^{-1/2})$. Since $\{\varphi^{*\text{CCOD}}_{u,a}:u\in[0,t]\}$ is a uniformly bounded $\Px$-Donsker class by Lemma \ref{lm:glb-fun-classes}, $\{n^{1/2}\Px_n(\varphi^{*\text{CCOD}}_{u,a}):u\in[0,t]\}$ converges weakly to a tight mean-zero Gaussian process with covariance $(u,v)\mapsto\Px(\varphi^{*\text{CCOD}}_{u,a}\varphi^{*\text{CCOD}}_{v,a})$. 
\end{proof}

\subsection{Details of Remark \ref{rmk:DR-ccod}}

We now state Remark \ref{rmk:DR-ccod} in more technical details below. 

\begin{remark}[Double robustness of the CCOD estimator]\label{rmk:DR-ccod-tech}
If we are only concerned with consistency of the CCOD estimator, the requirement on $(\bar\pi_\infty,\bar G_\infty,\bar\Lambda_\infty,\bar S_\infty,q^0_\infty) = (\bar\pi,\bar G,\bar\Lambda,\bar S,q^0)$ in Theorem \ref{thm:CCOD-EIF} can be weakened as follows. 
For $\Px$-almost all $\mb X$, we only require that there exist measurable sets $\bar{\mc S}_x, \bar{\mc G}_x \subseteq [0,t]$ such that $\bar{\mc S}_x \cup \bar{\mc G}_x = [0,t]$ and $\bar\Lambda(u\mid a,\mb X) = \bar\Lambda_\infty(u\mid a,\mb X)$ for all $u \in \bar{\mc S}_x$, and $\bar G(u\mid a,\mb X) = \bar G_\infty(u\mid a,\mb X)$ for all $u \in \bar{\mc G}_x$.
In addition, if $\bar{\mc S}_x$ is a strict subset of $[0,t]$, then $\bar\pi(a\mid\mb X)$ and $q^0(\mb X)$ should be the probability limits of their estimators. Under these conditions, $\glbEst$ is consistent. with Conditions \ref{cond:nuisance-ccod} and \ref{cond:bound-pi-G-ccod}, and it is uniformly consistent if Condition \ref{cond:uniform-S-ccod} additionally holds.
\end{remark}

The proof of Remark \ref{rmk:DR-ccod} (or Remark \ref{rmk:DR-ccod-tech} here) is essentially the same as those in the proof of Theorem 2 in \cite{westling2024inference}, hence the details are omitted. A sketch of the proof is that by decomposing the integral $\int_0^t$ as $\int_{\bar{\mc S}_x}+\int_{\bar{\mc S}^c_x}$, where $\bar{\mc S}^c_x$ is the complement of set $\bar{\mc S}_x$, and $\bar{\mc S}_x^c\subseteq\bar{\mc G}_x$ by definition. Then it is straightforward to verify that when the statement in Remark \ref{rmk:DR-ccod} holds, the following integral (also in the bias term \eqref{eq:bias-term-ccod}) 
\begin{align*}
   & \int_0^t\frac{\bar S(y-\mid a,\mb X)}{\bar S_\infty(y\mid a,\mb X)}\left\{\frac{\bar G(y\mid a,\mb X)\bar\pi(a\mid\mb X)q^0_\infty(\mb X)}{\bar G_\infty(y\mid a,\mb X)\bar\pi_\infty(a\mid\mb X)q^0(\mb X)}-1\right\}(\bar\Lambda_\infty-\bar\Lambda)(dy\mid a,\mb X) \\
    & = \left(\int_{\bar{\mc S}_x} + \int_{\bar{\mc G}_x}\right)\frac{\bar S(y-\mid a,\mb X)}{\bar S_\infty(y\mid a,\mb X)}\left\{\frac{\bar G(y\mid a,\mb X)\bar\pi(a\mid\mb X)q^0_\infty(\mb X)}{\bar G_\infty(y\mid a,\mb X)\bar\pi_\infty(a\mid\mb X)q^0(\mb X)}-1\right\}(\bar\Lambda_\infty-\bar\Lambda)(dy\mid a,\mb X) = 0,
\end{align*}
which further implies $\Px[\varphi^{\text{CCOD}}_{\infty,t,a}] - \estimand = 0$. To prove the consistency or uniform consistency of the CCOD estimator, one needs results from Lemmata \ref{lm:L2-bound-ccod-est}--\ref{lm:glb-fun-classes}. 


\section{Technical Details of the Federated Estimator}\label{app:theory-FED}

\subsection{Theoretical results of the local estimator}\label{subapp:source-est}

We first prove Theorem \ref{thm:site-k-EIF} about the EIF of the source-site estimator in the following Section \ref{subsubapp:proof-site-k-EIF}. We then provide proofs for the RAL and uniform RAL properties of the source-site estimator. The techniques employed parallel those used for the CCOD estimator, so some proofs are either omitted or provided with less detail.

\subsubsection{Proof of Theorem \ref{thm:site-k-EIF}}\label{subsubapp:proof-site-k-EIF}

\begin{proof}
Under the partial CCOD assumption of site $k$ and the target site $R=0$, $S^0(t\mid a, \mb X)=S^k(t\mid a, \mb X)$ almost surely. Therefore,
\begin{align}\label{eq:sourceEIF-1}
   0 & = \frac{\partial}{\partial\epsilon}\estimand\bigg|_{\epsilon=0}\nonumber  = \frac{\partial}{\partial\epsilon}\Ex\{S^0_{\epsilon}(t\mid a,\mb X)\mid R=0\}\bigg|_{\epsilon=0} \nonumber\\ 
   & = \Ex\{[S^0(t\mid a, \mb X)-\estimand]\dot{\ell}_{\mb X\mid R=0}\mid R=0\} + \Ex\left\{\int\frac{\partial}{\partial\epsilon}S^k_{\epsilon}(t\mid a,\mb x)\bigg|_{\epsilon=0}\mu(d\mb x)~\bigg|~R=0\right\}. 
\end{align}
Following a similar argument in the proof in Section \ref{subsubapp:proof-ccod-EIF} (i.e., the way we derive the derivative for $\bar S_\epsilon$ w.r.t. $\epsilon$), we can express the integrand of the second term above as
\begin{align*}
    & \frac{\partial}{\partial\epsilon}\iint \Prodi_{(0,t]}\{1-\Lambda^k_{\epsilon}(du\mid a,\mb x)\}\mu(d\mb x)\bigg|_{\epsilon=0} \\
    & = \iiint -\Ix(y\leq t, \delta=1)\frac{S^k(t\mid a, \mb x)S^k(y-\mid a, \mb x)}{S^k(y\mid a, \mb x)D^k(y\mid\mb x)}\dot{\ell}(y,\delta\mid a,\mb x,k)\Px(dy,d\delta\mid a,\mb x,k)\mu(d\mb x) \\ 
    & \quad + \iiiint\Ix(u\leq t,u\leq y)\frac{S^k(t\mid a, \mb x)S^k(u-\mid a, \mb x)}{S^k(u\mid a, \mb x)D^k(u\mid\mb x)} \\
    & \quad\qquad \times \dot{\ell}(y,\delta\mid a,\mb x,k)\Px(dy,d\delta\mid a,\mb x,k)N^k_1(du\mid a,\mb x)\mu(d\mb x) \\
    & = \iiint -\Ix(y\leq t, \delta=1)\frac{S^k(t\mid a, \mb x)S^k(y-\mid a, \mb x)}{S^k(y\mid a, \mb x)D^k(y\mid\mb x)}\dot{\ell}(y,\delta\mid a,\mb x,k)\Px(dy,d\delta\mid a,\mb x,k)\mu(d\mb x) \\ 
    & \quad + \iiint S^k(t\mid a, \mb x)\int_0^{t\wedge y}\frac{S^k(u-\mid a, \mb x)}{S^k(u\mid a, \mb x)D^k(u\mid\mb x)^2}N^k_1(du\mid a,\mb x)\\
    & \quad\qquad \times \dot{\ell}(y,\delta\mid a,\mb x,k)\Px(dy,d\delta\mid a,\mb x,k)\mu(d\mb x) \\
    & = \Ex\bigg[S^k(t\mid a,\mb X)\frac{\Ix(A=a)}{\pi^k(a\mid\mb X)}\left\{H^k(t\wedge Y,a,\mb X)-\frac{\Ix(Y\leq t,\Delta=1)S^k(Y-\mid a,\mb X)}{S^k(Y\mid a,\mb X)D^k(Y\mid a,\mb X)}\right\} \\
    & \qquad\qquad\times\dot{\ell}(Y,\Delta\mid a,\mb X, R=k)\bigg],
\end{align*}
where
\begin{align*}
    H^k(t,a,\mb x) = \int_0^t\frac{S^k(u-\mid a, \mb x)N_1^k(du\mid a,\mb x)}{S^k(u\mid a, \mb x)D^k(u\mid a,\mb x)^2}. 
\end{align*}
Now, we note that 
\begin{align*}
    \Ex\left[\frac{\Ix(Y\leq t,\Delta=1)S^k(Y-\mid A,\mb X)}{S^k(Y\mid A,\mb X)D^k(Y\mid A,\mb X)}~\bigg|~ A=a,\mb X=\mb x, R=k\right] = \int_0^t\frac{S^k(y-\mid a, \mb x)N_1^k(dy\mid a,\mb x)}{S^k(y\mid a, \mb x)D^k(y\mid a,\mb x)},
\end{align*}
and 
\begin{align*}
    & \Ex\{H^k(t\wedge Y,A,\mb X)\mid A=a,\mb X=\mb x, R=k\}\\
    & = \iint^t \Ix(u\leq y)\frac{S^k(u-\mid a, \mb x)N_1^k(du\mid a,\mb x)}{S^k(u\mid a, \mb x)D^k(u\mid a,\mb x)^2}\Px(dy\mid a,\mb x,k) \\
    & = \int_0^t\Px(Y\geq u\mid A=a, \mb X=\mb x, R=k)\frac{S^k(u-\mid a, \mb x)N_1^k(du\mid a,\mb x)}{S^k(u\mid a, \mb x)D^k(u\mid a,\mb x)^2}\Px(dy\mid a,\mb x,k)\\
    & = \int_0^t\frac{S^k(u-\mid a, \mb x)N_1^k(du\mid a,\mb x)}{S^k(u\mid a, \mb x)D^k(u\mid a,\mb x)}.
\end{align*}
Therefore, 
\begin{align*}
    \Ex\left[H^k(t\wedge Y,A,\mb X) - \frac{\Ix(Y\leq t,\Delta=1)S^k(Y-\mid A,\mb X)}{S^k(Y\mid A,\mb X)D^k(Y\mid A,\mb X)}~\bigg|~ A,\mb X, R=k\right] = 0
\end{align*}
almost surely. By properties of score functions and the tower property, the above implies that
\begin{align*}
    & \frac{\partial}{\partial\epsilon}\iint \Prodi_{(0,t]}\{1-\Lambda^k_{\epsilon}(du\mid a,\mb x)\}\mu(d\mb x)\bigg|_{\epsilon=0} \\
    & =  \Ex\bigg[S^k(t\mid a, \mb X)\frac{\Ix(R=k)}{\Px(R=k\mid\mb X)}\frac{\Ix(A=a)}{\pi^k(a\mid\mb X)} \\ 
    & \qquad\times \left\{H^k(t\wedge Y,A,\mb X) - \frac{\Ix(Y\leq t,\Delta=1)S^k(Y-\mid A,\mb X)}{S^k(Y\mid A,\mb X)D^k(Y\mid A,\mb X)}\right\}\dot{\ell}(\mc O)\bigg].
\end{align*}
Combining these results with the facts that $N_1^k(du\mid a,\mb x)/D^k(u\mid a,\mb x) = \Lambda^k(du\mid a,\mb x)$ and $D^k(u\mid a,\mb x)=S^k(u-\mid\mb x)G^k(u\mid a,\mb x)$, we can rewrite \eqref{eq:sourceEIF-1} at the beginning as follows:
\begin{align*}
    & \frac{\partial}{\partial\epsilon}\estimand\bigg|_{\epsilon=0} \\
    & = \Ex\bigg[\frac{\Ix(R=0)}{\Px(R=0)}[S^k(t\mid a, \mb X)-\estimand]\dot{\ell}(\mc O) - \frac{\Ix(R=0)}{\Px(R=0)}\Ex\bigg\{S^k(t\mid a, \mb X)\frac{\Ix(R=k)}{\Px(R=k\mid\mb X)} \\
    & \quad\quad \times \frac{\Ix(A=a)}{\pi^k(a\mid\mb X)}\left\{\frac{\Ix(Y\leq t,\Delta=1)}{S^k(y\mid\mb X)G^k(y\mid a,\mb X)}-\int_0^{t\wedge y}\frac{\Lambda^k(du\mid a,\mb X)}{S^k(u\mid\mb X)G^k(u\mid a,\mb X)}\right\}\dot{\ell}(\mc O)~\bigg|~\mb X\bigg\}\bigg] \\
    & = \Ex\left[\frac{\Ix(R=0)}{\Px(R=0)}\{S^k(t\mid a, \mb X)-\estimand\}\dot{\ell}(\mc O)\right] - \Ex\bigg[\frac{\Ix(R=k)}{\Px(R=0)}\frac{\Px(R=0\mid\mb X)}{\Px(R=k\mid\mb X)}S^k(t\mid a, \mb X)\\
    & \quad\quad \times \frac{\Ix(A=a)}{\pi^k(a\mid\mb X)} \left\{\frac{\Ix(Y\leq t,\Delta=1)}{S^k(y\mid\mb X)G^k(y\mid a,\mb X)}-\int_0^{t\wedge y}\frac{\Lambda^k(du\mid a,\mb X)}{S^k(u\mid\mb X)G^k(u\mid a,\mb X)}\right\}\dot{\ell}(\mc O)\bigg].
\end{align*}

Therefore, the EIF of $\estimand$ at $\Px$ is found as 
\begin{align*}
   & \varphi^{*k,0}_{t,a}(\mc O;\Px) \\
   & = \frac{\Ix(R=0)}{\Px(R=0)}\{S^0(t\mid a, \mb X)-\estimand\} -\frac{\Ix(R=k)\Px(R=0\mid\mb X)}{\Px(R=0)\Px(R=k\mid\mb X)}S^k(t\mid a, \mb X) \\ 
    & \qquad\times\frac{\Ix(A=a)}{\pi^k(a\mid\mb X)}\left[\frac{\Ix(Y\leq t,\Delta=1)}{S^k(Y\mid a, \mb X)G^k(Y\mid a,\mb X)}-\int_0^{t\wedge Y}\frac{\Lambda^k(du\mid a,\mb X)}{S^k(u\mid a, \mb X)G^k(u\mid a,\mb X)}\right].
\end{align*}
Observe that, by Bayes's rule, 
\begin{align*}
    \frac{\Px(R=0\mid\mb X)}{\Px(R=k\mid\mb X)} = \underbrace{\frac{\Px(\mb X\mid R=0)}{\Px(\mb X\mid R=k)}}_{\omega^{k,0}(\mb X)}\cdot\frac{\Px(R=0)}{\Px(R=k)},
\end{align*}
where $\omega^{k,0}(\mb X)$ is a density ratio term, we then have 
\begin{align*}
    \varphi^{*k,0}_{t,a}(\mc O;\Px) & = \frac{\Ix(R=0)}{\Px(R=0)}\{S^0(t\mid a, \mb X)-\estimand\} -\frac{\Ix(R=k)}{\Px(R=k)}\omega^{k,0}(\mb X)S^k(t\mid a, \mb X) \\
    & \quad\quad\times\frac{\Ix(A=a)}{\pi^k(a\mid\mb X)}\left[\frac{\Ix(Y\leq t,\Delta=1)}{S^k(Y\mid a, \mb X)G^k(Y\mid a,\mb X)}-\int_0^{t\wedge Y}\frac{\Lambda^k(du\mid a,\mb X)}{S^k(u\mid a, \mb X)G^k(u\mid a,\mb X)}\right].
\end{align*}

\end{proof}

\subsubsection{Regularity conditions for local estimators}\label{subsubapp:cond-local}

\begin{condition}\label{cond:nuisance}
There exist $\pi_\infty^k$, $\omega_\infty^{k,0}$, $G_\infty^k$, $\Lambda_\infty^k$, and $S_\infty^k$ such that, for all $a\in\{0,1\}$,
\begin{align*}
    & \max_m \Px\Bigg[\left(\frac{1}{\widehat\pi_m^k(a\mid\mb X)}-\frac{1}{\pi_\infty^k(a\mid\mb X)}\right)^2
    + \left(\widehat\omega_m^{k,0}(\mb X)-\omega_\infty^{k,0}(\mb X)\right)^2 \\ & 
    \qquad + \sup_{u\in[0,t]}\Bigg\{
    \left\vert\frac{1}{\widehat G_m^k(u\mid a,\mb X)}-\frac{1}{G_\infty^k(u\mid a,\mb X)}\right\vert^2
    + \left\vert\frac{\widehat S^k_{m}(t\mid a,\mb X)}{\widehat S^k_{m}(u\mid a,\mb X)}-\frac{S_\infty^k(t\mid a,\mb X)}{S_\infty^k(u\mid a,\mb X)}\right\vert^2
    \Bigg\}\Bigg] \to_p 0.
\end{align*}
\end{condition}

\begin{condition}\label{cond:bound-pi-G}
There exists $\eta\in(0,\infty)$ such that, for $\Px$-almost all $\mb X$,
$\min\{\widehat\pi_m^k(a\mid\mb X),\, \pi_\infty^k(a\mid\mb X),\, 
\widehat G_m^k(t\mid a,\mb X),\, G_\infty^k(t\mid a,\mb X)\}\geq 1/\eta,$ and $\max\{\widehat\omega_m^{k,0}(\mb X),\, \omega_\infty^{k,0}(\mb X)\}\leq\eta$, with probability tending to $1$.
\end{condition}

\begin{condition}\label{cond:prod-error}
Let
\begin{align*}
   \displaystyle r_{n,t,a,1}^k & = \max_m\Px\left\vert\{\widehat\pi^k_{m}(a\mid\mb X)-\pi^k_\infty(a\mid\mb X)\} \{\widehat S^k_{m}(t\mid a,\mb X)-S^k_\infty(t\mid a,\mb X)\} \right\vert, \\
   \displaystyle r_{n,t,a,2}^k & = \max_m\Px\left\vert\{\widehat\omega^{k,0}_{m}(\mb X)-\omega^{k,0}_\infty(\mb X)\} \{\widehat S^k_{m}(t\mid a,\mb X)-S^k_\infty(t\mid a,\mb X)\} \right\vert, \text{ and}\\
   \displaystyle r_{n,t,a,3}^k & = \max_m\Px\left\vert\widehat S^k_{m}(t\mid a,\mb X)\int_0^t\left\{\frac{G^k_\infty(u\mid a,\mb X)}{\widehat G^k_{m}(u\mid a,\mb X)}-1\right\}\left(\frac{S^k_\infty}{\widehat S^k_{m}}-1\right)(du\mid a,\mb X)\right\vert. 
\end{align*}
It holds that $r_{n,t,a,1}^k=o_p(n^{-1/2})$, $r_{n,t,a,2}^k=o_p(n^{-1/2})$ and $r_{n,t,a,3}^k=o_p(n^{-1/2})$.
\end{condition}

\subsubsection{Uniform RAL of the local estimator}\label{subsubapp:conds-site-k}

For site $R=k$, we denote $\pi^k$, $G^k$, $\omega^{k,0}$, $\Lambda^k$ and $S^k$ the truths of nuisance functions. We use $\pi_\infty^k$, $\omega^{k,0}_\infty$, $G_\infty^k$, $\Lambda_\infty^k$ and $S_\infty^k$ to denote some general probability limits for the nuisance function estimators. In addition to Conditions \ref{cond:nuisance}--\ref{cond:prod-error} in the main text, we state the following additional conditions for uniform convergence theory for local estimators.  

\begin{condition}\label{cond:uniform-S}
$$
\max_m\Px\left[\sup_{u\in[0,t]}\sup_{v\in[0,u]}\left\vert\frac{\widehat S^k_{m}(u\mid a,\mb X)}{\widehat S^k_{m}(v\mid a,\mb X)}-\frac{S_\infty^k(u\mid a,\mb X)}{S_\infty^k(v\mid a,\mb X)}\right\vert\right]^2\to_p 0.
$$
\end{condition}

\begin{condition}\label{cond:prod-error-unif}
    It holds that $\sup_{u\in[0,t]}r_{n,u,a,1}^k=o_p(n^{-1/2})$, $\sup_{u\in[0,t]}r_{n,u,a,2}^k=o_p(n^{-1/2})$, and $\sup_{u\in[0,t]}r_{n,u,a,3}^k=o_p(n^{-1/2})$. 
\end{condition}

Then, the following theorem formally states the (uniform) RAL of the source-site estimator. 

\begin{theorem}\label{thm:RAL-site-k}
    If Conditions \ref{cond:nuisance}--\ref{cond:bound-pi-G} hold, with $\pi_\infty^k=\pi^k$, $\omega^{k,0}_\infty=\omega^{k,0}$, $G_\infty^k=G^k$, and $S_\infty^k=S^k$, the CCOD holds, and Condition \ref{cond:prod-error} also holds, then $\skEst=\estimand+\Px_n(\varphi^{*k,0}_{t,a}) + o_p(n^{-1/2})$. In particular, $n^{1/2}(\skEst-\estimand)$ then convergences in distribution to a normal random variable with mean zero and variance $\sigma^2 = \Px[(\varphi^{*k,0}_{t,a})^2]$. If Conditions \ref{cond:uniform-S} and \ref{cond:prod-error-unif} also hold, then 
    \begin{align*}
        \sup_{u\in[0,t]}\left\vert\widehat\theta^k_n(u,a)-\theta^0(u,a)-\Px_n(\varphi^{*k,0}_{u,a})\right\vert = o_p(n^{-1/2}). 
    \end{align*}
    In particular, $\{n^{1/2}(\widehat\theta^k_n(u,a)-\theta^0(u,a)):u\in[0,t]\}$ converges weakly as a process in the space $\ell^\infty([0,t])$ of uniformly bounded functions on $[0,t]$ to a tight mean zero Gaussian process with covariance function $(u,v)\mapsto\Px(\varphi^{*k,0}_{u,a}\varphi^{*k,0}_{v,a})$. 
\end{theorem}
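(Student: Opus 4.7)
The plan is to mirror the strategy used to prove Theorem \ref{thm:RAL-ccod}, adapting each step to the source-site influence function $\varphi^{k,0}_{t,a}$ in place of $\varphi^{\text{CCOD}}_{t,a}$. The key structural observation is that under Assumption \ref{asp:ccod} one has $S^k\equiv S^0$, so the EIF in Theorem \ref{thm:site-k-EIF} has the same double-robust skeleton as the CCOD EIF, with the weight $\omega^{k,0}(\mb X)\Ix(R=k)/\Px(R=k)$ in the augmentation term playing the role of $q^0(\mb X)\Ix(R=0)/\Px(R=0)$ and with the site-specific nuisances $\pi^k, G^k, S^k$ replacing the global $\bar\pi, \bar G, \bar S$.

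First I would write the sample-splitting decomposition
\begin{align*}
\skEst - \estimand = \Px_n[\varphi^{*k,0}_{t,a}] + \frac{1}{M}\sum_{m=1}^M\frac{Mn_m^{1/2}}{n}\Gx_n^m\left[\widehat\varphi^{k,0}_{n,m,t,a}-\varphi^{k,0}_{t,a}\right] + \frac{1}{M}\sum_{m=1}^M\frac{Mn_m}{n}\Px\left[\widehat\varphi^{k,0}_{n,m,t,a}-\estimand\right],
\end{align*}
in direct analogy with \eqref{eq:glb-est-deco}. The leading term is centered with uniformly bounded summands, contributing the desired $n^{-1/2}$-CLT limit pointwise (and a tight mean-zero Gaussian limit process in the uniform version via Donsker considerations). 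Then I would prove an analogue of Lemma \ref{lm:L2-bound-ccod-est} bounding $\Px[\widehat\varphi^{k,0}_{n,m,t,a} - \varphi^{k,0}_{t,a}]^2$ by a constant $C(\eta)$ times the sum of squared $L_2$ errors of $\widehat\pi^k_m$, $\widehat\omega^{k,0}_m$, $\widehat G^k_m$, $\widehat S^k_m$, and $\Px_n^m(R=k)$, via the usual add-and-subtract triangle-inequality decomposition combined with Cauchy--Schwarz under the positivity conditions.

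With this $L_2$ bound, the empirical process term is handled by the same entropy argument used in Lemma \ref{lm:RAL-ccod-Gnm}: the class $\{\varphi^{k,0}_{u,a}:u\in[0,t]\}$ admits the same polynomial uniform entropy bound as in Lemma \ref{lm:glb-fun-classes} because the new multipliers $\Ix(R=k)/\Px(R=k)$ and $\omega^{k,0}(\mb X)$ are uniformly bounded singleton-class factors that do not inflate covering numbers (Lemma \ref{lm:lm4-west} plus product-class combinations). Maximal inequalities then yield $M^{-1}\sum_m (Mn_m^{1/2}/n)\Gx_n^m[\widehat\varphi^{k,0}_{n,m,u,a}-\varphi^{k,0}_{u,a}]=o_p(n^{-1/2})$ pointwise under Conditions \ref{cond:nuisance}--\ref{cond:bound-pi-G}, and its supremum counterpart under Condition \ref{cond:uniform-S}. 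For the bias term, I would establish an analogue of Lemma \ref{lm:bias-ccod-est}: computing $\Ex[\mc H_{t,a}(\mc O;S^k_\infty,G^k_\infty)\mid\mb X,R=k]$ in closed form and applying the Duhamel equation writes $\Px[\varphi^{k,0}_{t,a}(\mc O;\Px_\infty)]-\estimand$ as a mixed integral that, after adding and subtracting $\pi^k$, $\omega^{k,0}$, and $G^k$ one at a time, splits into three cross-product remainders matching $r_{n,t,a,1}^k$, $r_{n,t,a,2}^k$, and $r_{n,t,a,3}^k$ in Condition \ref{cond:prod-error}; each is $o_p(n^{-1/2})$, and uniformly so under Condition \ref{cond:prod-error-unif}. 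Crucially, under Assumption \ref{asp:ccod} the anchor term $\Ix(R=0)S^0(t\mid a,\mb X)/\Px(R=0)$ has expectation $\estimand$, so its contribution vanishes identically and no additional bias from the anchor needs to be controlled.

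The main obstacle I anticipate is careful bookkeeping around the density-ratio term $\omega^{k,0}$. Unlike $\bar\pi$ or $q^0$ in the CCOD argument, $\omega^{k,0}$ is defined through a ratio of target and source covariate distributions, its estimation error enters asymmetrically in both the $L_2$ bound and the bias decomposition, and it multiplies an $\Ix(R=k)$ factor whose normalizer $\Px(R=k)$ must be replaced by $\Px_n^m(R=k)$ only at a lower-order cost (handled by the same $|n_m-n/M|\leq 1$ slicing used in the CCOD case). Verifying that $r_{n,t,a,2}^k$---the product of the $\omega^{k,0}$ error and the $S^k$ error---is the correct remainder, and checking the uniform extension of the entropy bound when one of the multiplicands is the density ratio rather than a conditional mean, are the only genuinely new pieces; the remaining steps are near-verbatim translations of the CCOD proof.
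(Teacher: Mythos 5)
Your proposal follows essentially the same route as the paper: the decomposition you write is exactly the paper's \eqref{eq:k-est-deco}, your $L_2$ bound is Lemma \ref{lm:L2-bound-k-est}, your entropy/maximal-inequality step is Lemmata \ref{lm:site-k-fun-classes}--\ref{lm:RAL-k-Gnm} (with the density ratio absorbed as a bounded singleton-class factor, enlarging the envelope from $\eta(1+2\eta^2)$ to $\eta(1+2\eta^3)$), and your bias analysis via the Duhamel equation is Lemma \ref{lm:bias-sitek-est}, yielding the remainders $r^k_{n,t,a,1},r^k_{n,t,a,2},r^k_{n,t,a,3}$. One small imprecision: the anchor term's bias does not ``vanish identically'' (its estimated version $\widehat S^0_m$ contributes $\Ex[q^0(\mb X)\{\widehat S^0_m-S^0\}]/\Px(R=0)$); rather, it is cancelled to first order against the augmentation term through the Duhamel identity, which is exactly what your described computation of $\Px[\varphi^{k,0}_{t,a}(\mc O;\Px_\infty)]-\estimand$ delivers, so the proof is unaffected.
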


To prove Theorem \ref{thm:RAL-site-k}, we adopt similar strategies in Section \ref{app:theory-CCOD}. We start from considering the difference $\widehat\theta_n^k(t,a) -\estimand$. Recall that $\Px_n^m$ is the empirical distribution corresponding to the $m$th validation set $\mc V_m$ from the entire data $\mc O$, and denote $\Gx_n^m$ the corresponding empirical process. Then, following the similar process used in \eqref{eq:glb-est-deco}, it can be shown that
\begin{align}\label{eq:k-est-deco}
    \widehat\theta_n^k(t,a) -\estimand & = \Px_n[\varphi^{*k,0}_{\infty,t,a}] + \frac1M\sum_{m=1}^M\frac{Mn_m^{1/2}}{n}\Gx_n^m\left[\widehat\varphi^{k,0}_{n,m,t,a} - \varphi^{k,0}_{\infty,t,a}\right] \nonumber \\
    & \qquad + \frac1M\sum_{m=1}^M\frac{Mn_m}{n}\Px\left[\widehat\varphi^{k,0}_{t,a} - \estimand\right]. 
\end{align}

Next, similar to what we did for the CCOD estimator, in the following Lemma \ref{lm:L2-bound-k-est}, we establish the $\ell_2(\Px)$ norm distance (bound) between the estimated IF and the limiting IF for the source-site estimator. 

\begin{lemma}\label{lm:L2-bound-k-est}
     Under Condition \ref{cond:bound-pi-G}, there exists a universal constant $C=C(\eta)$ such that for each $k$, $m$, $n$, $t$, and $a$, 
\begin{align*}
\Px[\widehat\varphi^{k,0}_{t,a} - \varphi^{k,0}_{\infty,t,a}]^2 \leq C(\eta)\sum_{j=1}^6 A^k_{j,n,m,t,a},
\end{align*}
where 
\begin{align*}
    A^k_{1,n,m,t,a} & = \Px\left[\frac{1}{\Px_n^m(R=0)}-\frac{1}{\Px(R=0)}\right]^2,\\
    A^k_{2,n,m,t,a} & = \Px\left[\frac{1}{\Px_n^m(R=k)}-\frac{1}{\Px(R=k)}\right]^2,\\
    A^k_{3,n,m,t,a} & = \Px\left[\widehat\omega^{k,0}_{m}(a\mid\mb X)-\omega^{k,0}_\infty(a\mid\mb X)\right]^2,\\
    A^k_{4,n,m,t,a} & = \Px\left[\frac{1}{\widehat\pi^k_{m}(a\mid\mb X)}-\frac{1}{\pi_\infty^k(a\mid\mb X)}\right]^2,\\
    A^k_{5,n,m,t,a} & = \Px\left[\sup_{u\in[0,t]}\left\vert\frac{1}{\widehat G^k_{m}(u\mid a,\mb X)}-\frac{1}{G_\infty^k(u\mid a,\mb X)}\right\vert\right]^2, \\
    A^k_{6,n,m,t,a} & = \Px\left[\sup_{u\in[0,t]}\left\vert\frac{\widehat S^k_{m}(t\mid a,\mb X)}{\widehat S^k_{m}(u\mid a,\mb X)}-\frac{S_\infty^k(t\mid a,\mb X)}{S_\infty^k(u\mid a,\mb X)}\right\vert\right]^2. 
\end{align*}
\end{lemma}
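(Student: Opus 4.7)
The plan is to mirror the proof of Lemma \ref{lm:L2-bound-ccod-est} almost verbatim, the structural differences being the presence of the density ratio $\omega^{k,0}$ and an explicit $\Ix(R=k)$ factor in place of the target-site propensity $q^0$ in the augmentation branch. Writing $B^k(\mc O) = \frac{\Ix(A=a)}{\pi^k(a \mid \mb X)} S^k(t \mid a, \mb X) \mc H_{t,a}(\mc O; S^k, G^k)$, the EIF from Theorem \ref{thm:site-k-EIF} takes the compact form $\varphi^{k,0}_{t,a} = \frac{\Ix(R=0)}{\Px(R=0)} S^0(t \mid a, \mb X) - \frac{\Ix(R=k)}{\Px(R=k)} \omega^{k,0}(\mb X) B^k(\mc O)$. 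The first step is to decompose $\widehat\varphi^{k,0}_{t,a} - \varphi^{k,0}_{\infty,t,a}$ as a telescoping sum that perturbs one ingredient at a time: a piece for the shift $1/\Px_n^m(R=0) \to 1/\Px(R=0)$ in the anchor, a piece for the anchor survival $\widehat S^0_m \to S^0_\infty$, a piece for $1/\Px_n^m(R=k) \to 1/\Px(R=k)$, a piece for $\widehat\omega^{k,0}_m \to \omega^{k,0}_\infty$, and finally a piece for $\widehat B^k_m \to B^k_\infty$, after which Minkowski's inequality in $L_2(\Px)$ is applied.

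The residual $\widehat B^k_m - B^k_\infty$ is further decomposed into sub-terms isolating $1/\widehat\pi^k_m \to 1/\pi^k_\infty$, $1/\widehat G^k_m \to 1/G^k_\infty$, and $\widehat S^k_m \to S^k_\infty$; the exact decomposition is identical---up to replacing $\bar S, \bar G, \bar\pi$ by $S^k, G^k, \pi^k$---to the one worked out inside Lemma 3 of Westling et al. \cite{westling2023inference} and reused within Lemma \ref{lm:L2-bound-ccod-est}. Each resulting sub-term factors as a uniformly bounded quantity times a difference in a single nuisance, so Condition \ref{cond:bound-pi-G} bounds the multiplicative factors by fixed powers of $\eta$, and taking $L_2(\Px)$-norms via Minkowski followed by $(\sum_j a_j)^2 \le J \sum_j a_j^2$ yields the advertised bound $C(\eta)\sum_{j=1}^6 A^k_{j,n,m,t,a}$ for a suitable universal constant $C(\eta)$.

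The step requiring the most care is the $\mc H_{t,a}$-contribution, specifically the sub-terms involving differences in the censoring survival $G^k$ and the event survival $S^k$ inside the integral defining $\mc H_{t,a}$. The trick, exactly as in the CCOD proof, is to express the difference of the stochastic integrals using the uniform supremum $\sup_{u \in [0,t]} |\widehat S^k_m(t \mid a, \mb X)/\widehat S^k_m(u \mid a, \mb X) - S^k_\infty(t \mid a, \mb X)/S^k_\infty(u \mid a, \mb X)|$ and the analogous quantity for $G^k$; this is why $A^k_{5,n,m,t,a}$ and $A^k_{6,n,m,t,a}$ appear as supremum-over-$[0,t]$ norms rather than pointwise evaluations at $t$. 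Since the argument parallels the CCOD case line for line after substituting $\omega^{k,0}$ for $q^0$ and $\Ix(R=k)/\Px(R=k)$ for $\Ix(R=0)/\Px(R=0)$ in the augmentation branch, no further novelty is required and the remaining algebraic details can safely be omitted.
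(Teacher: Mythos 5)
Your proposal is correct and follows essentially the same route as the paper: a telescoping decomposition perturbing one nuisance at a time (the paper groups the density ratio and augmentation term as $C^k = \omega^{k,0} B^k$ before splitting, which is equivalent to your ordering), an appeal to Lemma 3 of Westling et al.\ for the $\widehat B^k_m - B^k_\infty$ expansion, and Minkowski plus the boundedness in Condition \ref{cond:bound-pi-G} to extract $C(\eta)$. Your remark about why $A^k_{5}$ and $A^k_{6}$ must be supremum-over-$[0,t]$ norms matches the role they play in the cited Westling decomposition.
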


\begin{proof}
    We first denote  
\begin{align*} 
   B^k(\mc V_m) & = \frac{\Ix(A=a)}{\pi^k(a\mid\mb X)}S^k(t\mid a, \mb X) \\
   & \quad\times\left[\frac{\Ix(Y\leq t,\Delta=1)}{S^k(Y\mid a, \mb X)G^k(Y\mid a,\mb X)}-\int_0^{t\wedge Y}\frac{\Lambda^k(du\mid a,\mb X)}{S^k(u\mid a, \mb X)G^k(u\mid a,\mb X)}\right],\\
   C^k(\mc V_m) & = B^k(\mc V_m)\omega^{k,0}(\mb X). 
\end{align*}
Then, we first have the following decomposition:
\begin{align*}
    \widehat\varphi^{k,0}_{t,a} - \varphi^{k,0}_{\infty,t,a} = \sum_{j=1}^4 U^k_{j,n,m,t,a},
\end{align*}
where 
\begin{align*}
    U^k_{1,n,m,t,a} & = \left[\frac{\Ix(R=0)}{\Px_n^m(R=0)} - \frac{\Ix(R=0)}{\Px(R=0)}\right]\widehat S^0_m(t\mid a,\mb x), \\
    U^k_{2,n,m,t,a} & = \frac{\Ix(R=0)}{\Px(R=0)}\left[\widehat S^0_m(t\mid a,\mb x) - S^0_{\infty}(t\mid a,\mb x)\right], \\
    U^k_{3,n,m,t,a} & = \left[\frac{\Ix(R=k)}{\Px_n^m(R=k)} - \frac{\Ix(R=k)}{\Px(R=k)}\right]\widehat C^k_m(\mc V_m), \\
    U^k_{4,n,m,t,a} & = \frac{\Ix(R=k)}{\Px(R=k)}\left[\widehat C^k_m(\mc V_m) - C^k_{\infty}(\mc V_m)\right].
\end{align*}
Now, for $U^k_{4,n,m,t,a}$, we further decompose it as 
\begin{align*}
    U^k_{4,n,m,t,a} & = \frac{\Ix(R=k)}{\Px(R=k)}\sum_{j=1}^2 V^k_{j,n,m,t,a},
\end{align*}
where 
\begin{align*}
    V^k_{1,n,m,t,a} & = B^k_{\infty}(\mc V_m)\left[\widehat\omega^{k,0}_m(\mb X) -\omega^{k,0}_m(\mb X)\right], \\
    V^k_{2,n,m,t,a} & = \widehat\omega^{k,0}_m(\mb X)\left[\widehat B^k_m(\mc V_m) - B^k_{\infty}(\mc V_m)\right].
\end{align*}
The expression of $\widehat B^k_m(\mc V_m) - B^k_{\infty}(\mc V_m)$ is exactly the same as the Lemma 3 in \cite{westling2024inference}, while we only need to replace the corresponding nuisance functions by the site-$k$ version here, so the detail is omitted. By the triangle inequality, we have $\Px[\widehat\varphi^{k,0}_{t,a} - \varphi^{k,0}_{\infty,t,a}]^2\leq\left\{\sum_{j=1}^4\{\Px[(U^{k}_{j,n,m,t,a})^2]\}^{1/2}\right\}^2$. Therefore, under Assumption \ref{asp:positivity} and Condition \ref{cond:bound-pi-G}, there exists a universal constant $C=C(\eta)$ such that the result in the statement holds. Thus, the proof is completed. 
\end{proof}

Furthermore, we investigate how to make the empirical process term $\Gx_n^m\left[\widehat\varphi^{k,0}_{n,m,t,a}-\varphi^{k,0}_{\infty,t,a_0}\right]$ to be $o_p(n^{-1/2})$ to establish the asymptotic normality. We first recall Lemma \ref{lm:lm4-west} in Section \ref{subsubapp:lem-CCOD} and related empirical process notation there. Then, we introduce the following Lemma \ref{lm:site-k-fun-classes}. For simplicity, $\omega$, $p^0$ and $p^k$ correspond to, respectively, $\omega^{k,0}$, $\Px(R=0)$, $\Px(R=k)$, and since $S^0=S^k$ is assumed, we just use $S$ to denote conditional survival of the event time for all $k$, and without loss of generality, $G=G^k$ and $\pi=\pi^k$ are also dependent to $R=k$. 

\begin{lemma}\label{lm:site-k-fun-classes}
    Let $S,\pi,\omega$, $p^0$, $p^k$ and $G$ be fixed, where $t\mapsto S(t\mid a,\mb x)$ is assumed to be non-increasing for each $(a,\mb x)$, and where $S(t_0\mid a,\mb x)\geq 1/\eta$, $G(t_0\mid a,\mb x)\geq 1/\eta$, $\pi(a_0\mid\mb x)\geq 1/\eta$, $\omega(\mb x)\leq\eta$, $p^0\geq 1/\eta$ and $p^k\geq 1/\eta$, for some $\eta\in(0,\infty)$. Then the class of influence functions $\mc F_{S,\pi,p^0,q^0,G,t_0,a_0} = \{\varphi_{S,\pi,\omega,p^0,p^k,G,t_0,a_0}:t\in[0,t_0]\}$ satisfies
    \begin{align*}
        \sup_{\Qx} N(\varepsilon\Vert F\Vert_{\Qx,2},\mc F_{S,\pi,\omega,p^0,p^k,G,t_0,a_0}, L_2(\Qx)) \leq 32/\varepsilon^{10},
    \end{align*}
    for any $\varepsilon\in(0,1]$ where $F=\eta(1+2\eta^3)$ is an envelope of $\mc F_{S,\pi,\omega,p^0,p^k,G,t_0,a_0}$, and the supremum is taken over all distributions $\Qx$ on the sample space of the observed data. 
\end{lemma}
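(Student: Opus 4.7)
The plan is to mirror the decomposition used in the proof of Lemma \ref{lm:glb-fun-classes}, with only two substantive changes: the augmentation factor $q^0(\mb x)/p^0$ is replaced by $\omega(\mb x)/p^k$, and the target indicator in the anchor term stays $\Ix(r=0)$ while the augmentation term now carries $\Ix(r=k)$. Since $\omega \leq \eta$ rather than $q^0 \leq 1$, each augmentation subclass gains a single extra factor of $\eta$ in its uniform bound, which is precisely what upgrades the envelope from $\eta(1+2\eta^2)$ in the CCOD case to $\eta(1+2\eta^3)$ here.

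First, I would split the class into three subclasses indexed by $t \in [0,t_0]$: an anchor class $\mc F^1$ of functions $(\mb x, r) \mapsto \Ix(r=0)\, S(t\mid a_0, \mb x)/p^0$, and two augmentation classes $\mc F^2$ and $\mc F^3$ corresponding to the event and integral pieces of $\mc H_{t,a_0}(\mc O;S,G)$, defined exactly as $f_t$ and $h_t$ in the CCOD proof but with every appearance of $q^0(\mb x)/p^0$ replaced by $\omega(\mb x)/p^k$ and the outer indicator changed to $\Ix(r=k)$. Then $\mc F_{S,\pi,\omega,p^0,p^k,G,t_0,a_0}$ is contained in $\{f_1 - f_2 + f_3 : f_j \in \mc F^j\}$, and the natural envelope is $\eta \cdot 1 + \eta^4 + \eta^4 = \eta(1+2\eta^3) = F$, using the assumed bounds on $1/p^0, 1/p^k, 1/\pi, 1/G, \omega$, and $S \leq 1$.

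Second, I would bound each subclass. For $\mc F^1$: monotone in $t$ and bounded by $\eta$, so Lemma \ref{lm:lm4-west} yields $\sup_{\Qx} N(\varepsilon \eta, \mc F^1, L_2(\Qx)) \leq 2/\varepsilon^2$. For $\mc F^2$: write each element as a product of the monotone class $\{y \mapsto \Ix(y \leq t)\}$, the monotone class $\{\mb x \mapsto S(t \mid a_0, \mb x)\}$, and the singleton class carrying the factor $\Ix(a=a_0,\delta=1)\,\omega(\mb x)/[p^k \pi(a_0 \mid \mb x) S(y\mid a_0,\mb x) G(y \mid a_0, \mb x)]$. The first two have covering number at most $2/\varepsilon^2$ by Lemma \ref{lm:lm4-west}, the third is a single function, and the product is bounded by $\eta^4$, so Van der Laan's Lemma 5.1 in \cite{van2006statistical} gives $\sup_{\Qx} N(\varepsilon \eta^4, \mc F^2, L_2(\Qx)) \leq 4/\varepsilon^4$. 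For $\mc F^3$: as in the CCOD proof, I would rewrite each $h_t$ as $\int m_t \, d\mu^*$ against a dominating measure $\mu^*$, decompose $\{m_t\}$ as a product of two monotone-in-$t$ classes and a singleton, and then transfer the covering bound from $\{m_t\}$ to $\{h_t\}$ via Jensen's inequality $\Vert h_t - h_s\Vert_{L_2(\Qx)} \leq \Vert m_t - m_s\Vert_{L_2(\mu^* \times \Qx)}$, obtaining $\sup_{\Qx} N(\varepsilon \eta^4, \mc F^3, L_2(\Qx)) \leq 4/\varepsilon^4$.

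Finally, combining the three bounds through Van der Laan's Lemma 5.1 for sums of function classes yields
\[
\sup_{\Qx} N(\varepsilon F, \mc F_{S,\pi,\omega,p^0,p^k,G,t_0,a_0}, L_2(\Qx)) \leq \frac{2}{\varepsilon^2} \cdot \frac{4}{\varepsilon^4} \cdot \frac{4}{\varepsilon^4} = \frac{32}{\varepsilon^{10}},
\]
as claimed. The only step that requires real care is the Radon--Nikodym/Jensen bookkeeping for $\mc F^3$, since one must verify that a common dominating measure $\mu^*$ exists so that $\Lambda(\cdot \mid a_0, \mb x)$ has a Radon--Nikodym derivative uniformly in $\mb x$; this is inherited from the CCOD argument without modification, so the new content reduces to tracking the single extra factor of $\eta$ contributed by $\omega$. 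Everything else is a mechanical rewrite of Westling et al.'s argument.
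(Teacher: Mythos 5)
Your proposal is correct and follows essentially the same route as the paper's proof: the identical three-class decomposition (anchor plus the two pieces of $\mc H_{t,a}$), the same use of Lemma \ref{lm:lm4-west} and Van der Laan's Lemma 5.1, and the same accounting for the single extra factor of $\eta$ contributed by $\omega$ relative to the CCOD case. The only point worth making explicit is that the $\eta^4$ bound on the augmentation subclasses relies on $S(t\mid a_0,\mb x)/S(y\mid a_0,\mb x)\le 1$ for $y\le t$ (monotonicity of $S$), not merely on $S\le 1$ and $1/S\le\eta$ separately, which would give $\eta^5$; your arithmetic is consistent with the former, so the argument goes through.
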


\begin{proof}
    The class $\mc F_{S,\pi,\omega,p^0,p^k,G,t_0,a_0}$ is uniformly bounded by $\eta(1+2\eta^3)$ because of the assumed upper bounds of $1/p^0, 1/p^k, \omega, 1/\pi$ and $1/G$. Therefore, the envelop function can be taken as $F=\eta(1+2\eta^3)$. 

    Define the following two functions $f_t$ and $h_t$ pointwise as
    \begin{align*}
        f_t(\mb x,r,a_0,\delta,y) & = \frac{\Ix(r=k,a=a_0,y\leq t,\delta=1)\omega(\mb x)S(t\mid a_0,\mb x)}{p^k\pi(a_0\mid\mb x)S(y\mid a_0,\mb x)G(y\mid a_0,\mb x)},\\
        h_t(\mb x,r,a_0,y) & = \int\frac{\Ix(r=k,a=a_0,u\leq y,u\leq t)\omega(\mb x)S(t\mid a_0,\mb x)}{p^k\pi(a_0\mid\mb x)S(u\mid a_0,\mb x)G(u\mid a_0,\mb x)}\Lambda(du\mid a_0,\mb x). 
    \end{align*}
    Consider classes $\mc F^1_{S,p^0,t_0,a_0} = \{(\mb x,r)\mapsto \Ix(r=0)S(t\mid a_0,\mb x)/p^0:t\in[0,t_0]\}$, $\mc F^2_{S,p^k,\omega, G,t_0,a_0} = \{(\mb x,r,a_0,\delta,y)\mapsto f_t(\mb x,r,a_0,\delta,y):t\in[0,t_0]\}$, and $\mc F^3_{S,p^k,\omega, G,t,a_0} = \{(\mb x,r,a_0,y)\mapsto h_t(\mb x,r,a,y):t\in[0,t_0]\}$. We can then write 
    \begin{align*}
        \mc F_{S,\pi,\omega,p^0,p^k,G,t_0,a_0}\subseteq\{f_1-f_2+f_3:f_1\in\mc F^1_{S,p^0,t_0,a_0}, f_2\in\mc F^2_{S,p^k,\omega, G,t_0,a_0}, f_3\in\mc F^3_{S,p^k,\omega, G,t_0,a_0}\}.
    \end{align*}
    Since $(\mb x,r)\mapsto \Ix(r=0)S(t\mid a_0,\mb x)/p^0$ is non-increasing for all $(\mb x, r)$ and uniformly bounded by $1/\eta$, Lemma \ref{lm:lm4-west} implies that $\sup_{\Qx}(\varepsilon,\mc F^1_{S,p^0,t_0,a_0},L_2(\Qx))\leq 2\varepsilon^2$. 

    The $\mc F^2_{S,p^k,\omega, G,t_0,a_0}$ is contained in the product of the classes $\{y\mapsto\Ix(y\leq t):t\in[0,t_0]\}$, $\{\mb x\mapsto S(t\mid a_0,\mb x):t\in[0,t_0]\}$, and the singleton class $\{(\mb x,r,a_0,\delta,y)\mapsto\Ix(r=k,a=a_0,\delta=1,y\leq t)\omega(\mb x)/[p^k\pi(a_0\mid\mb x)S(y\mid a_0,\mb x)G(y\mid a_0,\mb x)]\}.$ The first two classes both have covering number $\sup_{\Qx}N(\varepsilon,\cdot,L_2(\Qx))\leq 2\varepsilon^2$ by Lemma \ref{lm:lm4-west}. The third class has uniform covering number 1 for all $\varepsilon$ because it can be covered with a single ball of any positive radius. In addition,  $\mc F^2_{S,p^k,\omega, G,t,a}$ is uniformly bounded by $\eta^4$. Therefore, $\sup_{\Qx}N(\varepsilon\eta^4,\mc F^2_{S,p^k,\omega, G,t,a},L_2(\Qx))\leq 4/\varepsilon^4$.  

    Furthermore, similar to the proof of Lemma \ref{lm:glb-fun-classes}, we can rewrite $\mc F^3_{S,p^k,\omega, G,t_0,a_0} = \{(\mb x,r,a_0,y)\mapsto\int m_t(u,\mb x,r,a_0,y)\mu^*(du):t\in[0,t_0]\}$, where 
    \begin{align*}
        m_t(u,\mb x,r,a_0,y) = \frac{\Ix(r=k,a=a,u\leq y,u\leq t)\omega(\mb x)S(t\mid a_0,\mb x)}{p^k\pi(a_0\mid\mb x)S(u\mid a_0,\mb x)G(u\mid a_0,\mb x)}\lambda^*(u\mid a_0,\mb x),
    \end{align*}
    with $\mu^*(du)$ is a dominating measure for $\Px$-almost all $\mb x$ and $\lambda^*(du\mid a_0,\mb x)$ the Radon-Nikodym derivative of $\Lambda(\cdot\mid a_0,\mb x)$ with respect to $\mu^*$. Then, the class $\mc M_{t}=\{m_t:t\in[0,t_0]\}$ is contained in the product of the singleton class $\{(\mb x, r,a_0,y,u)\mapsto\Ix(r=k,a=a_0,u\leq y)\omega(\mb x)\lambda^*(u\mid a_0,\mb x)/[p^k\pi(a_0\mid\mb x)S(u\mid a_0,\mb x)G(u\mid a_0,\mb x)]\}$ and the class $\{u\mapsto\Ix(u\leq t):t\in[0,t_0]\}$ and $\{\mb x\mapsto S(t\mid a_0,\mb x):t\in[0,t]\}$, which as discussed above both have $L_2(\Qx)$ covering number bounded by $2/\varepsilon^2$ for any probability measure $\Qx$. Therefore, $\sup_{\Qx}N(\varepsilon\eta^4,\mc M_{t},L_2(\Qx))\leq 4/\varepsilon^4$. The remainder of the proof is the same as that of Lemma \ref{lm:glb-fun-classes} (except for replacing $\eta^3$ by $\eta^4$ here). 
\end{proof}

\begin{lemma}\label{lm:RAL-k-Gnm}
    If Conditions \ref{cond:nuisance}--\ref{cond:bound-pi-G} hold,  $M^{-1}\sum_{m=1}^M n^{-1}Mn_m^{1/2}\Gx_n^m\left[\widehat\varphi^{k,0}_{n,m,t,a}-\varphi^{k,0}_{\infty,t,a_0}\right]= o_p(n^{-1/2})$. If Conditions \ref{cond:uniform-S} holds as well, 
    \begin{align*}
        \frac1M\sum_{m=1}^M\frac{Mn_m^{1/2}}{n}\sup_{u\in[0,t]}\left\vert\Gx_n^m\left[\widehat\varphi^{k,0}_{n,m,t,a}-\varphi^{k,0}_{\infty,t,a_0}\right]\right\vert = o_p(n^{-1/2}). 
    \end{align*}
\end{lemma}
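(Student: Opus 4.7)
The plan is to mirror the argument of Lemma \ref{lm:RAL-ccod-Gnm}, with the site-$k$ influence function $\widehat\varphi^{k,0}_{n,m,t,a}$ playing the role of the CCOD one. The first step is the elementary bound
\begin{align*}
\frac{Mn_m^{1/2}}{n}\leq\left(\frac{M}{n}\right)^{1/2}+\frac{M}{n}=O(n^{-1/2}),
\end{align*}
which follows from $|n_m-n/M|\leq 1$ and $M=O(1)$. Both claims therefore reduce to showing that $M^{-1}\sum_{m=1}^M\left|\Gx_n^m[\widehat\varphi^{k,0}_{n,m,t,a}-\varphi^{k,0}_{\infty,t,a}]\right|=o_p(1)$ and the analogous bound with $\sup_{u\in[0,t]}$ inserted.

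For the pointwise statement, I would condition on the $m$th training set $\mc T_m=\mc O\backslash\mc V_m$, so that the nuisance estimates inside $\widehat\varphi^{k,0}_{n,m,t,a}$ are treated as fixed functions. The relevant function class is then the singleton $\mc F^{k,0}_{n,m,t,a}=\{\widehat\varphi^{k,0}_{n,m,t,a}-\varphi^{k,0}_{\infty,t,a}\}$, whose uniform entropy integral is trivially bounded relative to its natural envelope. Theorem 2.1.14 of van der Vaart and Wellner \cite{van1996weak} then yields
\begin{align*}
\Px\left[\sup_{f\in\mc F^{k,0}_{n,m,t,a}}\left|\Gx_n^m f\right|\right]\leq C'\left\{\Px(\widehat\varphi^{k,0}_{n,m,t,a}-\varphi^{k,0}_{\infty,t,a})^2\right\}^{1/2}\leq C''\sum_{j=1}^6 A^k_{j,n,m,t,a},
\end{align*}
for universal constants $C',C''$, where the second inequality invokes Lemma \ref{lm:L2-bound-k-est}. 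Taking expectations, maximizing over $m$, and averaging yields an upper bound that vanishes by Conditions \ref{cond:nuisance} and \ref{cond:bound-pi-G}.

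For the uniform claim, I would enlarge the singleton to the $u$-indexed class $\mc G_{n,m,t,a}=\{\widehat\varphi^{k,0}_{n,m,u,a}-\varphi^{k,0}_{\infty,u,a}:u\in[0,t]\}$ and rewrite the conditional expectation as $\Ex[\sup_{g\in\mc G_{n,m,t,a}}|\Gx_n^m g|\mid\mc T_m]$. Since $\omega^{k,0}$ is uniformly bounded by $\eta$ under Condition \ref{cond:bound-pi-G}, the covering-number arithmetic used in Lemma \ref{lm:glb-fun-classes} extends essentially unchanged, supplying the envelope $\eta(1+2\eta^3)$ of Lemma \ref{lm:site-k-fun-classes} and the bound $\log\sup_\Qx N(\varepsilon\Vert F_{n,m,t,a}\Vert_{\Qx,2},\mc G_{n,m,t,a},L_2(\Qx))\leq\widetilde{C}\log(1/\varepsilon)$ uniformly in $n$ and $m$. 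Thus the uniform entropy integral $J(1,\mc G_{n,m,t,a},L_2(\Px))$ is bounded by an absolute constant, and a second application of Theorem 2.1.14 combined with Lemma \ref{lm:L2-bound-k-est} gives
\begin{align*}
\Ex\left[\sup_{g\in\mc G_{n,m,t,a}}|\Gx_n^m g|\right]\leq\bar{C}\,C(\eta)\sum_{j=1}^6 A^k_{j,n,m,t,a},
\end{align*}
which tends to zero after maximizing over $m$ by Conditions \ref{cond:nuisance}, \ref{cond:bound-pi-G}, and \ref{cond:uniform-S}.

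The main obstacle is the uniform-in-$u$ entropy control supplied by Lemma \ref{lm:site-k-fun-classes}: one needs the covering number of $\mc G_{n,m,t,a}$ to scale polynomially in $1/\varepsilon$ uniformly over the (random but bounded) nuisance estimates and over $n,m$. Relative to the CCOD case, the extra density-ratio factor $\omega^{k,0}$ only enlarges the envelope (via Condition \ref{cond:bound-pi-G}) and does not disturb the covering-number architecture, which is why the site-$k$ argument follows the CCOD template with minimal modification.
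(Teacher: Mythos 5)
Your proposal is correct and follows exactly the route the paper intends: the paper omits this proof, stating it is analogous to that of Lemma \ref{lm:RAL-ccod-Gnm}, and your reconstruction faithfully carries out that analogy, substituting Lemma \ref{lm:L2-bound-k-est} for the $L_2(\Px)$ bound and Lemma \ref{lm:site-k-fun-classes} for the entropy control, with the density ratio $\omega^{k,0}$ absorbed into the envelope via Condition \ref{cond:bound-pi-G}. No gaps.
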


The proof of Lemma \ref{lm:RAL-k-Gnm} is similar to the proof of Lemma \ref{lm:RAL-ccod-Gnm}, thus it is omitted. 

\begin{lemma}\label{lm:bias-sitek-est}
    Consider some general nuisance functions under $\Px_\infty$, denoted by $S^0_\infty$, $S^k_\infty$, $G^k_\infty$, $\pi^k_\infty$, and $\omega^{k,0}_\infty$ (equals 1 if $k=0$). Then, $\Px[\varphi^{k,0}_{t,a}(\mc O;\Px_\infty)] - \estimand$ equals
    \begin{align*}
        & \Ex\bigg[\frac{q^0(\mb X)}{\Px(R=0)}S^k_\infty(t\mid a, \mb X)\int_0^t\frac{S^k(y-\mid a,\mb X)}{S^k_\infty(y\mid a,\mb X)} \\
        & \quad\times\left\{\frac{\omega^{k,0}_\infty(\mb X)G^k(y\mid a,\mb X)\pi^k(a\mid\mb X)}{\omega^{k,0}(\mb X)G^k_\infty(y\mid a,\mb X)\pi^k_\infty(a\mid\mb X)}-1\right\}(\Lambda^k_\infty-\Lambda^k)(dy\mid a,\mb X)\bigg]. 
    \end{align*}
\end{lemma}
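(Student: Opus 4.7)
The plan is to mirror the proof of Lemma \ref{lm:bias-ccod-est} in the CCOD setting, with the target-site propensity $q^0$ replaced by the density ratio $\omega^{k,0}$ and all global nuisance functions replaced by their site-$k$ counterparts. First I would substitute the limits into the uncentered EIF from Theorem \ref{thm:site-k-EIF} to obtain
\begin{align*}
\varphi^{k,0}_{t,a}(\mc O;\Px_\infty) &= \frac{\Ix(R=0)}{\Px(R=0)}S^0_\infty(t\mid a,\mb X) \\
&\quad - \frac{\Ix(R=k)}{\Px(R=k)}\omega^{k,0}_\infty(\mb X)S^k_\infty(t\mid a,\mb X)\frac{\Ix(A=a)}{\pi^k_\infty(a\mid\mb X)}H_{S^k_\infty,G^k_\infty,t,a}(Y,\Delta,\mb X),
\end{align*}
where $H_{S^k_\infty,G^k_\infty,t,a}$ is the ``integrated-martingale'' term as used in Section \ref{subsubapp:lem-CCOD}. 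Because $\widehat S^k_m$ is trained on target-site data and applied to site $k$, one has $S^k_\infty = S^0_\infty$, so $S^0_\infty$ in the anchor term can be replaced by $S^k_\infty$; under Assumption \ref{asp:ccod} the truths similarly satisfy $S^0=S^k$ and $\Lambda^0=\Lambda^k$.

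Next I would evaluate the expectation of the augmentation term by iterated conditioning on $(A,\mb X,R=k)$. Applying the site-$k$ analog of Lemma 1 of Westling et al.\ \cite{westling2023inference}, together with the identities $D^k(u\mid a,\mb x)=S^k(u-\mid a,\mb x)G^k(u\mid a,\mb x)$ and $N^k_1(du\mid a,\mb x)=D^k(u\mid a,\mb x)\Lambda^k(du\mid a,\mb x)$, yields
\begin{align*}
\Ex\bigl[H_{S^k_\infty,G^k_\infty,t,a}(Y,\Delta,\mb X)\,\big|\,A=a,\mb X=\mb x,R=k\bigr] = -\int_0^t\frac{S^k(y-\mid a,\mb x)G^k(y\mid a,\mb x)}{S^k_\infty(y\mid a,\mb x)G^k_\infty(y\mid a,\mb x)}(\Lambda^k_\infty-\Lambda^k)(dy\mid a,\mb x).
\end{align*}
I would then change measure from site $k$ to site $0$ via the Bayes identity $\omega^{k,0}(\mb X)\Px(R=k\mid\mb X)/\Px(R=k) = q^0(\mb X)/\Px(R=0)$, which produces the multiplicative factor $\omega^{k,0}_\infty(\mb X)/\omega^{k,0}(\mb X)$ in the augmentation term, playing the role of the $q^0_\infty/q^0$ factor in the CCOD bias.

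After these reductions the anchor contributes $\Ex[q^0(\mb X)S^k_\infty(t\mid a,\mb X)/\Px(R=0)]$, which combined with $\estimand=\Ex[q^0(\mb X)S^k(t\mid a,\mb X)/\Px(R=0)]$ leaves the target-site difference $S^k_\infty-S^k$. Invoking the Duhamel equation of Gill and Johansen \cite{gill1990survey},
\begin{align*}
S^k_\infty(t\mid a,\mb x)-S^k(t\mid a,\mb x) = -S^k_\infty(t\mid a,\mb x)\int_0^t\frac{S^k(y-\mid a,\mb x)}{S^k_\infty(y-\mid a,\mb x)}(\Lambda^k_\infty-\Lambda^k)(dy\mid a,\mb x),
\end{align*}
converts this into an integral against $d(\Lambda^k_\infty-\Lambda^k)$ with integrand structurally identical to that produced by the augmentation term. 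Adding the two integrals collapses their integrands into the single bracketed factor $\omega^{k,0}_\infty(\mb X)G^k(y\mid a,\mb X)\pi^k(a\mid\mb X)/[\omega^{k,0}(\mb X)G^k_\infty(y\mid a,\mb X)\pi^k_\infty(a\mid\mb X)]-1$ claimed in the statement.

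The main obstacle will be careful bookkeeping: keeping the limits $(\pi^k_\infty,G^k_\infty,\omega^{k,0}_\infty,S^k_\infty,\Lambda^k_\infty)$ systematically separate from the truths $(\pi^k,G^k,\omega^{k,0},S^k,\Lambda^k)$, reconciling the $y-$ versus $y$ versions of $S^k_\infty$ arising respectively from the Duhamel identity and from the conditional expectation of $H_{S^k_\infty,G^k_\infty,t,a}$, and verifying that the Bayes change-of-measure identity is applied correctly even when the limit $\omega^{k,0}_\infty$ differs from the truth. Beyond these technicalities, every step is a direct translation of calculations already executed in Section \ref{subsubapp:proof-ccod-EIF} and in the proof of Lemma \ref{lm:bias-ccod-est}.
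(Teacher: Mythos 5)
Your proposal is correct and follows essentially the same route as the paper's proof: substitute the limits into the EIF from Theorem \ref{thm:site-k-EIF}, evaluate the augmentation term via iterated conditioning and the site-$k$ analog of Lemma 1 of Westling et al., change measure with the Bayes identity to produce the $\omega^{k,0}_\infty/\omega^{k,0}$ factor, and apply the Duhamel equation to merge the anchor-term difference $S^k_\infty-S^k$ with the augmentation integral. Your explicit remark that $S^0=S^k$ and $S^0_\infty=S^k_\infty$ must be invoked to collapse the two integrals is a point the paper leaves implicit, but it is the same argument.
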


\begin{proof}
Following the first steps in the proof of Lemma \ref{lm:bias-ccod-est}, we can show $\Px[\varphi^{k,0}_{t,a}(\mc O;\Px_\infty)] - \estimand$ equals
\begin{align*}
    & \Ex\bigg[\frac{\Ix(R=0)}{\Px(R=0)}\{S^0_\infty(t\mid a,\mb X)-S^0(t\mid a,\mb X)\}+\frac{q^k(\mb X)}{\Px(R=k)}\omega^{k,0}_\infty(\mb X)S^k_\infty(t\mid a, \mb X)\frac{\pi^k(a\mid\mb X)}{\pi^k_\infty(a\mid\mb X)} \\
    & \qquad\qquad\times \int_0^t\frac{S^k(y-\mid a,\mb X)G^k(y\mid a,\mb X)}{S^k_\infty(y\mid a,\mb X)G^k_\infty(y\mid a,\mb X)}(\Lambda^k_\infty-\Lambda^k)(dy\mid a,\mb X)\bigg]\\
    & = \Ex\bigg[\frac{q^0(\mb X)}{\Px(R=0)}\{S^0_\infty(t\mid a,\mb X)-S^0(t\mid a,\mb X)\} + \frac{q^0(\mb X)}{\Px(R=0)}\frac{\omega^{k,0}_\infty(\mb X)}{\omega^{k,0}(\mb X)}S^k_\infty(t\mid a, \mb X)\frac{\pi^k(a\mid\mb X)}{\pi^k_\infty(a\mid\mb X)} \\
    & \qquad\qquad \times \int_0^t\frac{S^k(y-\mid a,\mb X)G^k(y\mid a,\mb X)}{S^k_\infty(y\mid a,\mb X)G^k_\infty(y\mid a,\mb X)}(\Lambda^k_\infty-\Lambda^k)(dy\mid a,\mb X)\bigg].
\end{align*}
In the second ``$\Ex$'' after ``$=$'', we used the following relationship:
\begin{align*}
    \frac{q^0(\mb X)}{q^k(\mb X)} = \omega^{k,0}(\mb X)\frac{\Px(R=0)}{\Px(R=k)}
\end{align*}
by Bayes's rule. Furthermore, by Duhamel equation in \cite{gill1990survey} used in Lemma \ref{lm:bias-ccod-est} again, $\Px[\varphi^{k,0}_{t,a}(\mc O;\Px_\infty)] - \estimand$ equals
\begin{align}\label{eq:bias-term-site-k}
    & \Ex\bigg[\frac{q^0(\mb X)}{\Px(R=0)}S^k_\infty(t\mid a, \mb X)\int_0^t\frac{S^k(y-\mid a,\mb X)}{S^k_\infty(y\mid a,\mb X)} \nonumber \\
    & \qquad\times \left\{\frac{\omega^{k,0}_\infty(\mb X)G^k(y\mid a,\mb X)\pi^k(a\mid\mb X)}{\omega^{k,0}(\mb X)G^k_\infty(y\mid a,\mb X)\pi^k_\infty(a\mid\mb X)}-1\right\}(\Lambda^k_\infty-\Lambda^k)(dy\mid a,\mb X)\bigg]. 
\end{align}
\end{proof}

We now prove Theorem \ref{thm:RAL-site-k} below. 

\begin{proof}
    By \eqref{eq:k-est-deco} with $\pi_\infty^k=\pi^k$, $\omega^{k,0}_\infty=\omega^{k,0}$, $G_\infty^k=G^k$, and $S_\infty^k=S^k$, 
    \begin{align*}
    \skEst - \estimand & = 
        \Px_n[\varphi^{*k,0}_{t,a}] + \frac1M\sum_{m=1}^M\frac{Mn_m^{1/2}}{n}\Gx_n^m\left[\widehat\varphi^{k,0}_{n,m,t,a}-\varphi^{k,0}_{t,a}\right]  \\
        & \qquad + \frac1M\sum_{m=1}^M\frac{Mn_m}{n}\Px\left[\widehat\varphi^{k,0}_{t,a} - \estimand\right].
    \end{align*}
    By Conditions \ref{cond:nuisance} and \ref{cond:bound-pi-G}, the second summand on the right-hand-side is $o_p(n^{-1/2})$ by Lemma \ref{lm:RAL-k-Gnm}. By Lemma \ref{lm:bias-sitek-est}, $\Px[\widehat\varphi^{k,0}_{t,a}] - \estimand$ equals
    \begin{align*}
    & \Ex\bigg[\frac{q^0(\mb X)}{\Px(R=0)}\widehat S^k_m(t\mid a, \mb X)\int_0^t\frac{S^k(y-\mid a,\mb X)}{\widehat S^k_m(y\mid a,\mb X)} \\
    & \qquad\times\left\{\frac{\widehat\omega^{k,0}_m(\mb X)G^k(y\mid a,\mb X)\pi^k(a\mid\mb X)}{\omega^{k,0}(\mb X)\widehat G^k_m(y\mid a,\mb X)\widehat\pi^k_m(a\mid\mb X)}-1\right\}(\widehat\Lambda^k_m-\Lambda^k)(dy\mid a,\mb X)\bigg]. 
\end{align*}
 
By a similar decomposition in the proof of Theorem \ref{thm:RAL-ccod} in Section \ref{subsubapp:proof-RAL-ccod} using Duhamel equation in \cite{gill1990survey}, and by notation in Condition \ref{cond:prod-error}, we find that the above bias term can be bounded by $\eta^2\{r_{n,t,a,1}^k+r_{n,t,a,2}^k+r_{n,t,a,3}^k\}$ over $m$. Since $M^{-1}\sum_{m=1}^Mn^{-1}Mn_m\leq 2$, we have 
\begin{align*}
    \left\vert\frac1M\sum_{m=1}^M\frac{Mn_m}{n}\Px\left[\widehat\varphi^{k,0}_{t,a} - \estimand\right]\right\vert\leq 2\eta^2\left\{r_{n,t,a,1}^k+r_{n,t,a,2}^k+r_{n,t,a,3}^k\right\} = o_p(n^{-1/2}), 
\end{align*}
by Condition \ref{cond:prod-error}. This established the pointwise RAL property: $\skEst=\estimand+\Px_n(\varphi^{*k,0}_{t,a}) + o_p(n^{-1/2})$. Since $\varphi^{*k,0}_{t,a}$ is uniformly bounded, $\Px\{(\varphi^{*k,0}_{t,a})^2\}<\infty$ and since $\Px\{\varphi^{*k,0}_{t,a}\} = 0$, it follows that 
\begin{align*}
    n^{1/2}\Px_n(\widehat\varphi^{*k,0}_{t,a})\to_d\mc N(0,\Px\{(\varphi^{*k,0}_{t,a})^2\}). 
\end{align*}

For the uniform RAL, by conditions \ref{cond:nuisance}, \ref{cond:bound-pi-G} and \ref{cond:prod-error} and Lemma \ref{lm:RAL-k-Gnm}, 
\begin{align*}
    \frac1M\sum_{m=1}^M\frac{Mn_m^{1/2}}{n}\sup_{u\in[0,t]}\left\vert\Gx_n^m(\widehat\varphi^{k,0}_{t,a}-\varphi^{k,0}_{t,a})\right\vert = o_p(n^{-1/2}).
\end{align*}
Therefore, we have that
\begin{align*}
\sup_{u\in[0,t]}\left\vert\frac1M\sum_{m=1}^M\frac{Mn_m}{n}\Px_n^m(\widehat\varphi^{k,0}_{n,m,t,a}-\estimand)\right\vert \leq\sup_{u\in[0,t]}2\eta^2\{r_{n,t,a,1}^k+r_{n,t,a,2}^k+r_{n,t,a,3}^k\},
\end{align*}
which is $o_p(n^{-1/2})$ by Condition \ref{cond:prod-error-unif}. Therefore, $\sup_{u\in[0,t]}\left\vert\widehat\theta^k_n(u,a)-\theta^0(u,a)-\Px_n(\varphi^{*k,0}_{u,a})\right\vert = o_p(n^{-1/2})$. Since $\{\varphi^{*k,0}_{u,a}:u\in[0,t]\}$ is a uniformly bounded $\Px$-Donsker class by Lemma \ref{lm:site-k-fun-classes}, $\{n^{1/2}\Px_n\{\varphi^{*k,0}_{u,a}\}:u\in[0,t]\}$ converges weakly to a tight mean-zero Gaussian process with covariance $(u,v)\mapsto\Px(\varphi^{*k,0}_{u,a}\varphi^{*k,0}_{v,a})$. 
\end{proof}

\begin{remark}[Double robustness of the local estimator]\label{rmk:DR-site-k-tech}
If we only need the consistency of $\widehat\theta_n^k(t,a)$, then condition $\pi_\infty^k=\pi^k$, $\omega^{k,0}_\infty=\omega^{k,0}$, $G_\infty^k=G^k$, and $S_\infty^k=S^k$ can be replaced by the following statement: For $\Px$-almost all $\mb x$, there exist measurable sets $\mc S^k_x, \mc G^k_x\subseteq [0,t]$ such that $\mc S^k_x\cup\mc G^k_x = [0,t]$ and $\Lambda^k(u\mid a,\mb x) = \Lambda_\infty^k(u\mid a,\mb x)$ for all $u\in S^k_x$ and $G(u\mid a,\mb x)=G^k_\infty(u\mid a,\mb x)$ for all $u\in\mc G^k_x$. In addition, if $\mc S^k_x$ is a strict subset of $[0,t]$, then $\pi^k(a\mid\mb x)=\pi^k_\infty(a\mid\mb x)$ and $\omega^{k,0}(\mb x)=\omega^{k,0}_\infty(\mb x)$ as well. Then,  $\widehat\theta_n^k(t,a)$ is consistent if Conditions \ref{cond:nuisance} and \ref{cond:bound-pi-G} hold, and it is uniform consistent if Condition \ref{cond:uniform-S} also holds. This statement could be interpreted as that at a given time $t$, if either (i) the conditional survival model $S^k$; or (ii) all other nuisance functions $G^k$, $\pi^k$ and $\omega^{k,0}$ are correctly specified (with other conditions above),  $\widehat\theta_n^k(t,a)$ is consistent. 
\end{remark}

To prove Remark \ref{rmk:DR-site-k-tech}, we decompose the integral $\int_0^t$ as $\int_{\mc S^k_x}+\int_{\mc S^{k,c}_x}$, where $\mc S^{k,c}_x$ is the complement of set $\mc S^k_x$, and $\mc S^{k,c}_x\subseteq\mc G^k_x$ by definition. Then, it is straightforward to verify that when the statement in Remark \ref{rmk:DR-site-k-tech} holds, the following integral
\begin{align*}
   & \int_0^t\frac{S^k(y-\mid a,\mb X)}{S^k_\infty(y\mid a,\mb X)}\left\{\frac{G^k(y\mid a,\mb X)\pi^k(a\mid\mb X)}{ G_\infty(y\mid a,\mb X)\pi^k_\infty(a\mid\mb X)}-1\right\}(\Lambda^k_\infty-\Lambda^k)(dy\mid a,\mb X) \\
    & = \left(\int_{\mc S^k_x} + \int_{\mc G^k_x}\right)\frac{S^k(y-\mid a,\mb X)}{S^k_\infty(y\mid a,\mb X)}\left\{\frac{G^k(y\mid a,\mb X)\pi^k(a\mid\mb X)}{G^k_\infty(y\mid a,\mb X)\pi^k_\infty(a\mid\mb X)}-1\right\}(\Lambda^k_\infty-\Lambda^k)(dy\mid a,\mb X) = 0,
\end{align*}
which further implies $\Px[\varphi^{*k,0}_{\infty,t,a}]= 0$. 

\subsection{Oracle selection and efficiency by the federated estimator}\label{subapp:theory-fed}

In this section, we present the theoretical properties of the federated estimator. Given that our proposed weights, $\bd\eta_{t,a}$, are both time- and treatment-specific, we focus on the pointwise convergence properties of the federated estimator.

Let the set of all source site indices be $\mc S = {1, \dots, K-1}$.
We then define the oracle selection space for $\bd\eta_{t,a}$, and the corresponding weight space as:
\begin{align*}
    \mc S^*_{t,a} = \{k\in\mc S:\theta^k(t,a)=\estimand\}, ~~\text{and}~~\mathbb R^{S^*_{t,a}} = \{\bd\eta_{t,a}\in\mathbb R^{K-1}:\eta^j_{t,a}=0,\forall j\not\in\mc S^*_{t,a}\},
\end{align*}
respectively.

The space $\mc S^*_{t,a}$ is both time- and treatment-varying, indicating that a source site may not consistently be useful or unhelpful across different time points or treatments. However, it offers the advantage of increased flexibility and adaptivity, allowing for more effective borrowing of information at different points along the survival functions. Based on the theory presented in Section \ref{subapp:source-est}, for $k\in\mc S^*_{t,a}$, the site-specific estimator $\skEst$ is consistent for $\estimand$ for any given $t \in [0, \tau]$ and $a \in \{0, 1\}$. 

We begin by assuming fixed $\bd\eta_{t,a}=(\eta_{t,a}^0,\eta_{t,a}^1,\dots,\eta_{t,a}^{K-1})$. We invoke Lemmata 4 and 5 in \cite{han2025federated}, which state that the proposed adaptive estimation for $\eta_{t,a}^k$ as shown in \eqref{eq:fed-obj} allows for (i) the recovery of the optimal $\bar\eta^k_{t,a}$ by the estimator $\widehat\eta^k_{t,a}$, and (ii) the uncertainty induced by $\widehat\eta^k_{t,a}$ is negligible when estimating $\estimand$. We require regularity Conditions \ref{cond:nuisance}, \ref{cond:bound-pi-G} and \ref{cond:prod-error} for the pointwise convergence result in Theorem \ref{thm:RAL-site-k} hold. Let us denote the federated estimator by plugging-in the fixed $\bd\eta_{t,a}$ as 
\begin{align*}
    \fedEstfixed = \left(1-\sum_{k\in\mc S}\eta^k_{t,a}\right)\tgtEst + \sum_{k\in\mc S}\eta^k_{t,a}\skEst.
\end{align*}
Recall that notation $\mc H_{t,a}$ defined in the main text: 
$$
\mc H_{t,a}(\mc O;S,G) = \frac{\Ix(Y\leq t,\delta=1)}{S(Y\mid a,\mb X)G(Y\mid a, \mb X)} - \int_0^{t\wedge Y}\frac{\Lambda(du\mid a,\mb X)}{S(u\mid a, \mb X)G(u\mid a,\mb X)}. 
$$
Let us then write 
\begin{align*}
    \xi^{0,(1)}(\mc O) & = S^0(t\mid a, \mb X)\frac{\Ix(A=a)}{\pi^0(a\mid\mb X)}\mc H_{t,a}(\mc O;S^0,\Lambda^0,G^0),\\
    \xi^{k,0,(1)}(\mc O) & = \omega^{k,0}(\mb X)S^k(t\mid a, \mb X)\frac{\Ix(A=a)}{\pi^k(a\mid\mb X)}\mc H_{t,a}(\mc O;S^k,\Lambda^k, G^k), \\
    \xi^{0,(2)}(\mc O) & = S^0(t\mid a, \mb X) - \estimand, 
\end{align*}
and $n_k=\sum_{i=1}^n\Ix(R_i=k)$ for $k=0,1,\dots,K-1$. 

Then, 
\begin{align}\label{eq:decomp-fedfix}
    & \fedEstfixed-\estimand \nonumber \\ 
    & = \left(1-\sum_{k\in\mc S}\eta^k_{t,a}\right)\left\{\tgtEst-\estimand\right\} + \sum_{k\in\mc S}\eta^k_{t,a}\left\{\skEst-\estimand\right\}\nonumber \\
    & = \left(1-\sum_{k\in\mc S}\eta^k_{t,a}\right)\frac{1}{n_0}\sum_{i=1}^n\Ix(R_i=0)\left\{\widehat\xi^{0,(2)}(\mc O_i)-\widehat\xi^{0,(1)}(\mc O_i)\right\} \nonumber\\
    & \quad + \sum_{k\in\mc S}\frac{1}{n_0}\sum_{i=1}^n\Ix(R_i=0)\eta^k_{t,a}\widehat\xi^{0,(2)}(\mc O_i) - \sum_{k\in\mc S}\frac{1}{n_k}\sum_{i=1}^n\Ix(R_i=k)\eta^k_{t,a}\widehat\xi^{k,0,(1)}(\mc O_i) \nonumber\\
    & = \frac{1}{n}\sum_{i=1}^n\left(1-\sum_{k\in\mc S}\eta^k_{t,a}\right)\Ix(R_i=0)\frac{\widehat\xi^{0,(2)}(\mc O_i)-\widehat\xi^{0,(1)}(\mc O_i)}{\widehat\Px(R_i=0)} \nonumber\\
    & \quad + \frac{1}{n}\sum_{i=1}^n\Ix(R_i=0)\left(\sum_{k\in\mc S}\eta^k_{t,a}\right)\frac{\widehat\xi^{0,(2)}(\mc O_i)}{\widehat\Px(R_i=0)} - \frac{1}{n}\sum_{k\in\mc S}\sum_{i=1}^n\Ix(R_i=k)\eta^k_{t,a}\frac{\widehat\xi^{k,0,(1)}(\mc O_i)}{\widehat\Px(R_i=k)}.
\end{align}
The asymptotic variance of $\fedEstfixed$ equals the variance of the influence function of \eqref{eq:decomp-fedfix}. Let us denote it as $\mc V_{t,a}^{\text{fed}} = \mc V_{t,a}^{\text{fed}}(\bd\eta_{t,a})$. We highlight its dependence to the federated weights vector $\bd\eta_{t,a}$ here because in the below \eqref{eq:var-opti}, we consider an optimization program for deriving the weights based on minimizing the (estimated) asymptotic variance. 

Under the assumption of i.i.d. participants within each site, we have
\begin{align}\label{eq:var-fedfix}
    \mc V_{t,a}^{\text{fed}} & = \left(1-\sum_{k\in\mc S}\eta^k_{t,a}\right)^2\frac{\Vx\{\xi^{0,(2)}(\mc O_i)-\xi^{0,(1)}(\mc O_i)\mid R_i=0\}}{\Px(R_i=0)}\nonumber \\
    & \quad + \left(\sum_{k\in\mc S}\eta^k_{t,a}\right)^2\frac{\Vx\{\xi^{0,(2)}(\mc O_i)\mid R_i=0\}}{\Px(R_i=0)}\nonumber \\
    & \quad + 2\left(1-\sum_{k\in\mc S}\eta^k_{t,a}\right)\left(\sum_{k\in\mc S}\eta^k_{t,a}\right)\frac{\text{Cov}\{\xi^{0,(2)}(\mc O_i)-\xi^{0,(1)}(\mc O_i), \xi^{0,(2)}(\mc O_i)\mid R_i=0\}}{\Px(R_i=0)}\nonumber \\
    & \quad + \sum_{k\in\mc S}(\eta^k_{t,a})^2\frac{\Vx\{\xi^{k,0,(1)}(\mc O_i)\mid R_i=k\}}{\Px(R_i=k)}.
\end{align}
With appropriate boundedness conditions on variance and covariance of the influence functions, this variance is finite. Consequently, the asymptotic distribution of $\fedEstfixed$ is expressed as 
\begin{align*}
    \sqrt{n}\left\{\fedEstfixed-\estimand\right\}\to_d\mc N(0,\mc V_{t,a}^{\text{fed}}).
\end{align*}
We further define the optimal adaptive weights $\bar{\bd\eta}_{t,a}$ as follows:
\begin{align}\label{eq:var-opti}
    \bar{\bd\eta}_{t,a} = \underset{\eta^k_{t,a}=0,\forall k\not\in\mc S^*_{t,a}}{\text{arg min}} \mc V_{t,a}^{\text{fed}}(\bd\eta_{t,a}).
\end{align}

We adapt two lemmata below (from \cite{han2025federated}) for recovering the optimal weights $\bar{\bd\eta}_{t,a}$ with negligible uncertainty for estimating $\estimand$ if we estimate ${\bd\eta}_{t,a}$ using \eqref{eq:fed-obj}, akin to adaptive Lasso \citep{zou2006adaptive, fan2024fast}.

\begin{lemma}[adapted from Lemma 4 in \cite{han2025federated}]\label{lem:han4}
    Under Conditions \ref{cond:nuisance}, \ref{cond:prod-error} in the main text, along with the following mild conditions on covariates support and covariances: 
(i) The covariates $\mb{X}$ and density ratio $\omega^{k,0}(\mb{X})$ are in compact sets $\mb{X} \in [-B,B]^p$ and $\omega^{k,0}(\mb{X}) \in [-B,B]$ for all $k=1,\ldots,K-1$ with probability $1$; and 
(ii) The variance of $\xi^{k,0,(1)}(\mc{O}) \in [\varepsilon, B]$, and the variance-covariance matrix 
$\mc V\big[(\xi^{0,(1)}, \xi^{0,(2)})' \mid R=0\big]$ has eigenvalues in $[\varepsilon,B]$ for some positive constants $\varepsilon$ and $B$. 
Then, it holds that
$$
\lim_{n \to \infty} \mathbb P(\widehat{\bd\eta}_{t,a} \in \mathbb R^{S^*_{t,a}}) = 1, 
\quad \|\widehat{\bd\eta}_{t,a} - \bar{\bd\eta}_{t,a}\| = O_p(n^{-1/2}),
$$
for all $(t,a) \in [0,\tau]\times\{0,1\}$.
\end{lemma}

\begin{lemma}[adapted from Lemma 5 in \cite{han2025federated}]\label{lem:han5}
    Under conditions in Lemma \ref{lem:han4},
$$
\sqrt{n} \left(\widehat{\theta}^{\text{fed}}(t,a; \widehat{\bd\eta}_{t,a}) - \theta^0(t,a)\right) \to_d \mc{N}\big(0, \mc{V}_{t,a}^{\text{fed}}(\bar{\bd\eta}_{t,a})\big),
$$
for all $(t,a) \in [0,\tau]\times\{0,1\}$.
\end{lemma}

The consistency of $\widehat{\mc V}_{t,a}^{\text{fed}} = \widehat{\mc V}_{t,a}^{\text{fed}}(\widehat{\bd\eta}_{t,a})$ follows when we can effectively approximate ${\mc V}_{t,a}^{\text{fed}}(\bar{\bd\eta}_{t,a})$ with $\widehat{\mc V}_{t,a}^{\text{fed}}$. Thus, 
\begin{align*}
    \sqrt{n/\widehat{\mc V}_{t,a}^{\text{fed}}} \left\{\fedEst-\estimand\right\}\to_d\mc N(0,1). 
\end{align*}
We now proceed to analyze the efficiency gain resulting from the federation process. The estimator relies only on the target data is denoted as $\tgtEst=\widehat\theta^\text{fed}_n(t,a;\bd\eta^0_{t,a})$, where $\bd\eta^0_{t,a}$ assigns all weights to the target and none to the source. In contrast, the estimator that leverages the proposed adaptive ensemble approach is denoted as $\widehat\theta^\text{fed}_n(t,a;\widehat{\bd\eta}_{t,a})$. Here $\widehat{\bd\eta}_{t,a}$ can recover the optimal weights $\bar{\bd\eta}_{t,a}$ that are associated with the minimum asymptotic variance. Consequently, the variance of $\widehat\theta^\text{fed}_n(t,a;\widehat{\bd\eta}_{t,a})$ is no larger than that of the estimator relying solely on the target data since $\bd\eta^0_{t,a}$ is generally not the variance minimizer. 

To establish that the asymptotic variance of $\widehat\theta^\text{fed}_n(t,a;\widehat{\bd\eta}_{t,a})$ is strictly smaller than that of the estimator based solely on the target data $\tgtEst$, we adopt Proposition 1 in \cite{han2025federated} with a modified informative source condition (modified Assumption 3(b) in \cite{han2025federated}). 

Specifically, for each source site $s\in\mc S^*_{t,a}$, we define $\widehat\theta^\text{fed}_n(t,a;\eta^s_{t,a})$ a federated estimator where $\eta^s_{t,a}$ is the optimal ensemble weight of site $s$ if we only consider target site and this source site $s$ for the federation. Then, the modified informative source condition is given as  
$$
\left\vert\text{Cov}\left[\sqrt{n}\tgtEst, \sqrt{n}\left\{\widehat\theta^\text{fed}_n(t,a;\eta^s_{t,a})-\tgtEst\right\}\right]\right\vert\geq\varepsilon,
$$ 
for some $\varepsilon>0$, where $\widehat\theta^\text{fed}_n(t,a;\eta^s_{t,a})-\tgtEst$ can be expressed as 
\begin{align*}
    & \widehat\theta^\text{fed}_n(t,a;\eta^s_{t,a})-\tgtEst \\ 
    & = \left\{\widehat\theta^\text{fed}_n(t,a;\eta^s_{t,a})-\estimand\right\} - \left\{\tgtEst-\estimand\right\} \\
    & = \frac1n\sum_{i=1}^n\Ix(R_i=0)(1-\eta^s_{t,a})\frac{\widehat\xi^{0,(2)}(\mc O_i)-\widehat\xi^{0,(1)}(\mc O_i)}{\widehat\Px(R_i=0)} + \frac1n\sum_{i=1}^n\Ix(R_i=0)\eta^s_{t,a}\frac{\widehat\xi^{0,(2)}(\mc O_i)}{\widehat\Px(R_i=0)} \\
    & \quad - \frac1n\sum_{i=1}^n\Ix(R_i=s)\eta^s_{t,a}\frac{\widehat\xi^{s,0,(1)}(\mc O_i)}{\widehat\Px(R_i=s)} - \frac1n\sum_{i=1}^n\Ix(R_i=0)\frac{\widehat\xi^{0,(2)}(\mc O_i)-\widehat\xi^{0,(1)}(\mc O_i)}{\widehat\Px(R_i=0)} \\
    & = \frac1n\sum_{i=1}^n\Ix(R_i=0)\eta^s_{t,a}\frac{\widehat\xi^{0,(1)}(\mc O_i)}{\widehat\Px(R_i=0)} - \frac1n\sum_{i=1}^n\Ix(R_i=s)\eta^s_{t,a}\frac{\widehat\xi^{s,0,(1)}(\mc O_i)}{\widehat\Px(R_i=s)}.
\end{align*}
Therefore, it is straightforward to see that the modified condition can be achieved if $\eta^s_{t,a}>0$. 

\section{Details and Results of Simulation Studies}\label{app:experiments}

\subsection{Data generation process}\label{subapp:DGP}

With 500 independent synthetic data replications and a total sample size of $n = \sum_{k=0}^{K-1} n_k$ distributed across $K = 5$ sites, each site $k \in \{0,1,\dots,4\}$ contains $n_k$ observations, where site $k=0$ represents the target site. The target-site sample size is fixed at $n_0 = 300$, while source-site sizes vary as $n_k \in \{300, 600, 1000\}$ for $k = 1,\dots,4$, corresponding to limited, moderate, and abundant external data. The choice of $K=5$ mirrors the number of regional sites in the AMP trials. The true survival curves are obtained by averaging over a large simulated population ($n_{\text{super}} = 10^8$) generated from the target-site distribution. 

Three covariates $X_1$, $X_2$, and $X_3$ are sampled as transformations of Beta random variables with site-specific parameters: 
\begin{align*}
    X_1 &\sim 33 \cdot \text{Beta}(1.1 - 0.05\gamma(k), 1.1 + 0.2\gamma(k)) + 9 + 2\gamma(k), \\
    X_2 &\sim 52 \cdot \text{Beta}(1.5 + (X_1 + 0.5\gamma(k))/20, 4 + 2\gamma(k)) + 7 + 2\gamma(k), \\
    X_3 &\sim (4 + 2\gamma(k)) \cdot \text{Beta}(1.5 + |X_1 - 50 + 3\gamma(k)|/20, 3 + 0.1\gamma(k)),
\end{align*}
where $\gamma(k)$ represents some function of site $k$, specified later. We then generate the treatment assignment probabilities $\pi(\mb X)$ using the logistic function:
$$
\text{logit}(\pi(\mb X)) = -1.05 + \log\left(1.3 + \exp(-12 + X_1/10)+\exp(-2+X_2/12)+\exp(-2 + X_3/3)\right),
$$
and treatments $A$ are sampled as $A \sim \text{Bernoulli}(\pi(\mb X)).$ 

Next, we consider the mechanisms of event and censoring times. The hazard rates for event times and censoring times are given by the following $\exp(h_t)$ and $\exp(h_c)$, respectively, where $h_t = -5.02 + 0.1(X_1 - 25) - 0.1(X_2 - 25) + 0.05(X_3 - 2) + D_T(k) \cdot 0.1(X_2 - 25) + A \cdot \delta_T(k)\cdot 0.1(X_1+X_2+X_3-50),$ and $h_c = -4.87 + 0.01(X_1 - 25) - 0.02(X_2 - 25) + 0.01(X_3 - 2) - D_C(k) \cdot 0.1(X_2 - 25) + A \cdot \delta_C(k)\cdot 0.1(X_1+X_2+X_3-50)$. 

Here, $D_T(k)$, $D_C(k)$, $\delta_T(k)$ and $\delta_C(k)$ are some site-specific indicators, specified later, for varying the treatment effects and trends of survival curves for different sites. Then, event times and censoring times are sampled as: 
$$
T = \left(-\frac{\log(U_1)}{\exp(h_t) \cdot \lambda}\right)^{1/\rho}, \quad
C = \left(-\frac{\log(U_2)}{\exp(h_c) \cdot \lambda}\right)^{1/\rho},
$$
with $\rho = 1.2$, $\lambda = 0.6$, and $U_1, U_2 \sim \text{Uniform}(0, 1)$. This technique follows \cite{austin2012generating}. Thus, the observed times and event indicators are $Y = \min(T, C), \Delta = \mathbb{I}(T \leq C)$, respectively. 

Under this DGP, the event time is generated as a similar way of days in a year (365 days), and we truncate the censoring time at $\tau=200$ days to mimic the time-horizon in survival analysis. Our DGP supports the following scenarios based on site-specific heterogeneity: 
\begin{itemize}
    \item \textbf{Homogeneous}: Homogeneous covariates and hazard rates across sites. We let $\gamma(k)=D_T(k)=D_C(k)=\delta_T(k)=\delta_C(k)=0$ for $k=0,1,\dots,4$. 
    \item \textbf{Covariate Shift}: Covariates $X_1$, $X_2$, and $X_3$ vary across sites. We let $\gamma(k)=k$ and $D_T(k)=D_C(k)=\delta_T(k)=\delta_C(k)=0$, for $k=0,1,\dots,4$. 
    \item \textbf{Outcome Shift}: Conditional outcome distribution varies across sites. We assign $\gamma(k)=0$, $D_T(k)=\delta_T(k)=k$, and $D_C(k)=\delta_C(k)=0$ for $k=0,1,\dots,4$. 
    \item \textbf{Censoring Shift}: Censoring mechanism varies across sites. We let $\gamma(k)=0$, $D_T(k)=\delta_T(k)=0$ and $D_C(k)=\delta_C(k)=k$, for $k=0,1,\dots,4$. 
    \item \textbf{All Shifts}: Covariates and both event and censoring effects vary across sites. We let $\gamma(k)=D_T(k)=D_C(k)=\delta_T(k)=\delta_C(k)=k$, for $k=0,1,\dots,4$. 
\end{itemize}

Figure \ref{fig:curves} plots the true treatment-specific survival curves under the Covariate Shift and Outcome Shift scenarios to illustrate the effect of site differences on survival outcomes. Under Covariate Shift, the curves maintain similar shapes and trends, differing primarily in scale. In contrast, Outcome Shift leads to marked alterations in the shapes of the survival curves.

\begin{figure}[H]
    \centering
    Source sites have \textbf{Covariate Shift:}
    \includegraphics[width=\textwidth]{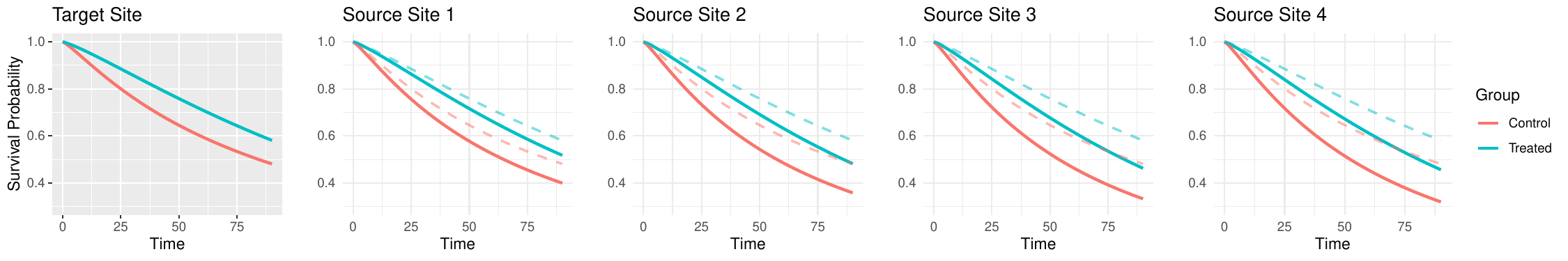}
    
    Source sites have \textbf{Outcome Shift:}
    \includegraphics[width=\textwidth]{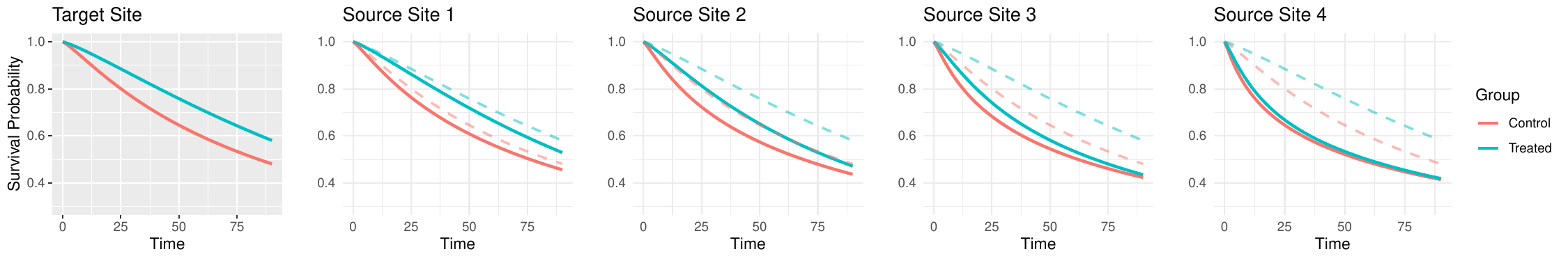}
    \caption{True treatment-specific survival curves across different sites. Each curve is derived from a random sample of $n=10^4$ generated under the site's own DGP. Dashed lines represent the target site's survival curves for reference.}\label{fig:curves}
\end{figure}

\subsection{Performance metrics}

The simulation results are evaluated using five metrics. Let $\theta$ be the true target parameter, and $\widehat\theta_i$, $\widehat\sigma_i$ denote the point and standard error estimates from the $i$th Monte Carlo replication ($i=1,\dots,500$). 
\begin{itemize}
    \item \textbf{Estimation Bias} (by boxplot): $\widehat\theta_i-\theta$, $i=1,\dots,500$; 
    \item \textbf{ARBias\%:} the absolute percent relative bias, defined by $100\%\cdot\vert500^{-1}\sum_{i=1}^{500}(\widehat\theta_i-\theta)/\theta\vert$; 
    \item \textbf{RRMSE:} the root mean square error (RMSE) of a method relative to that of the TGT estimator, where $\text{RMSE} = \sqrt{500^{-1}\sum_{i=1}^{500}(\widehat\theta_i-\theta)^2}$. By definition, TGT has RRMSE $=1$. Smaller RRMSE values indicate higher efficiency relative to TGT; 
    \item \textbf{CI Width:} the average 95\% CI width across replications, where the 95\% CI from the $i$th replication is $\widehat\theta_i \pm 1.96\widehat\sigma_i$; and 
    \item \textbf{CP\%:} coverage probability in percentage, defined by the percentage of replications in which the CI covers $\theta$. A closer CP\% to 95 indicates more reliable inference. 
\end{itemize}
Note that the CP\% (here denoted by $\widehat p$) is a sample proportion, which follows the below asymptotic distribution:
\begin{align*}
    \widehat p\sim\mc N(p, \sqrt{p(1-p)/M}),
\end{align*}
where $M=500$ is the number of Monte Carlo replicates, and $p=95\%$ is the nominal coverage level. Hence, the Monte Carlo error for $\widehat p$ is $\sqrt{p(1-p)/M} = \sqrt{0.95\cdot 0.05/500}\approx 0.0097 < 0.01$. This means that with 500 replicates, we can trust the first two decimal points (before taking percentage) of the coverage probability. 

\subsection{Simulation results for treatment-specific survival curves under the main setting}

Figures \ref{fig:res-small-bias}--\ref{fig:res-large-cp} present the  simulation results on estimation and inference of treatment-specific survival functions of all simulation scenarios. 

\begin{figure}[H]
    \centering
    \includegraphics[width=\textwidth]{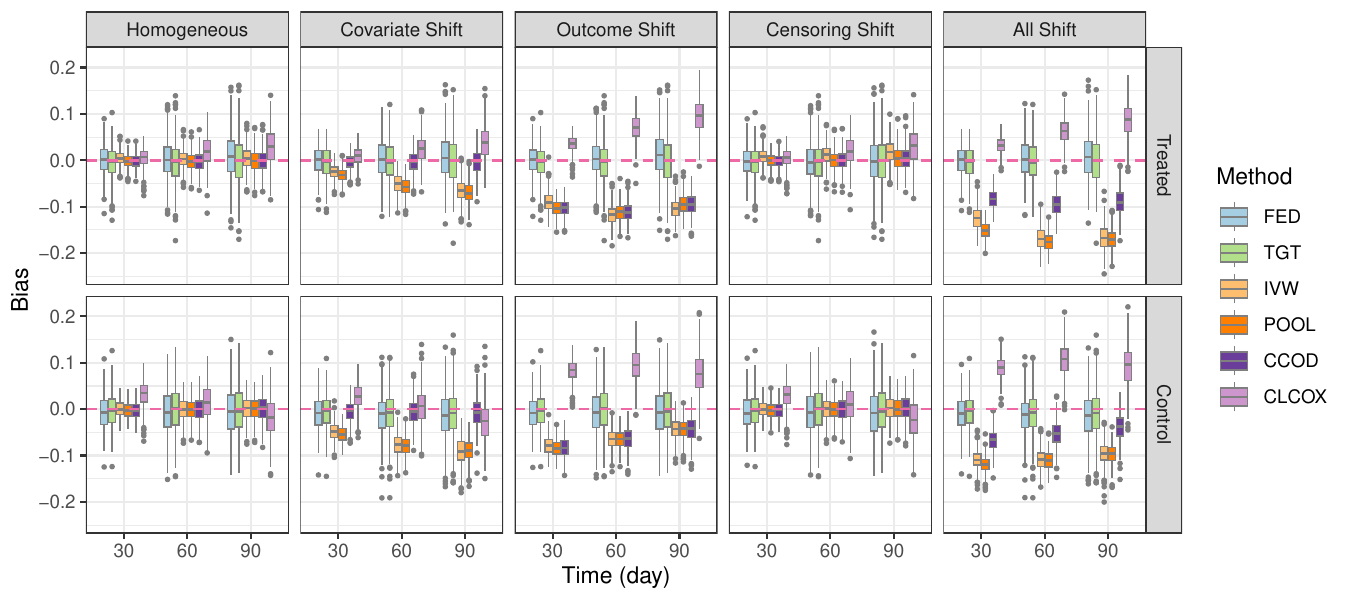}
    \caption{Estimation bias (boxplots) under $n_k=300$ ($k=1,2,3,4$), evaluated at days 30, 60 and 90 in simulation. The true survival probabilities at days 30, 60, and 90 are $(0.86, 0.71, 0.58)$ for the treated group and $(0.76, 0.60, 0.48)$ for the control group. }
    \label{fig:res-small-bias}
\end{figure}

\begin{figure}[H]
    \centering
    \includegraphics[width=\textwidth]{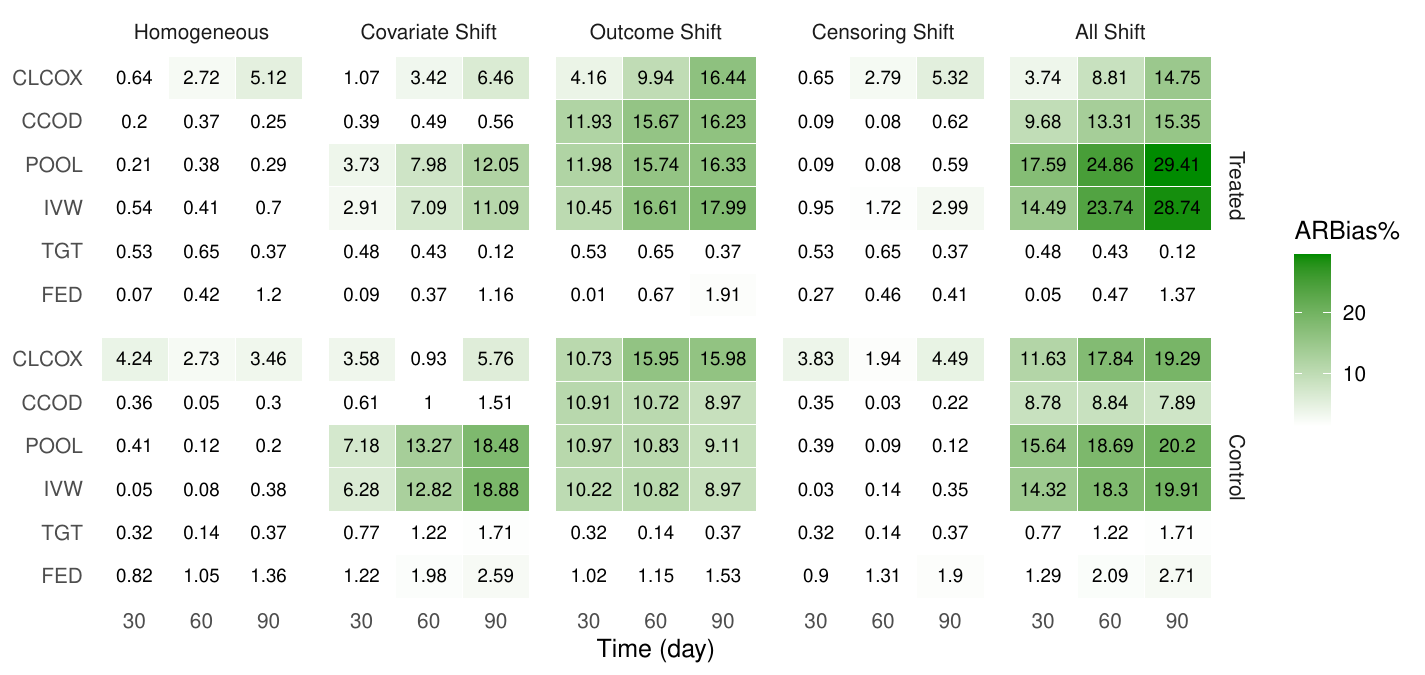}
    \includegraphics[width=\textwidth]{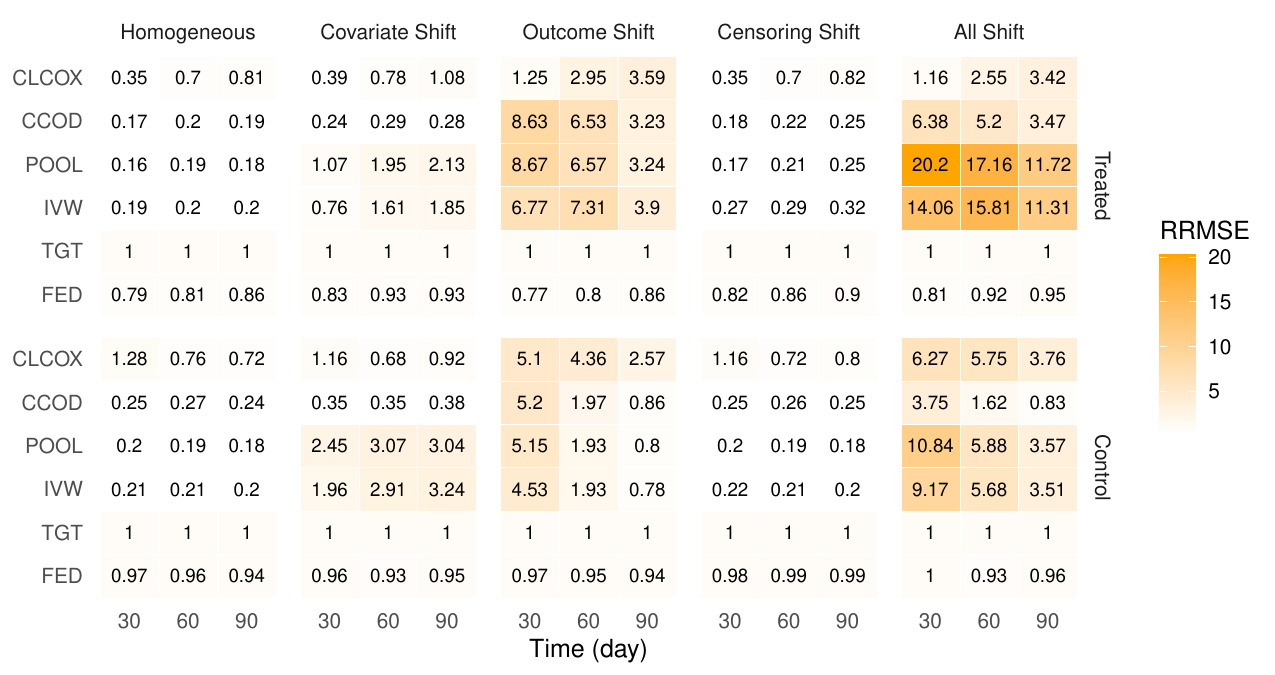}
    \caption{ARBias\% and RRMSE under $n_k=300$ ($k=1,2,3,4$), evaluated at days 30, 60 and 90 in simulation. The true survival probabilities at days 30, 60, and 90 are $(0.86, 0.71, 0.58)$ for the treated group and $(0.76, 0.60, 0.48)$ for the control group. }
    \label{fig:res-small-rrmse}
\end{figure}

\begin{figure}[H]
    \centering
    \includegraphics[width=\textwidth]{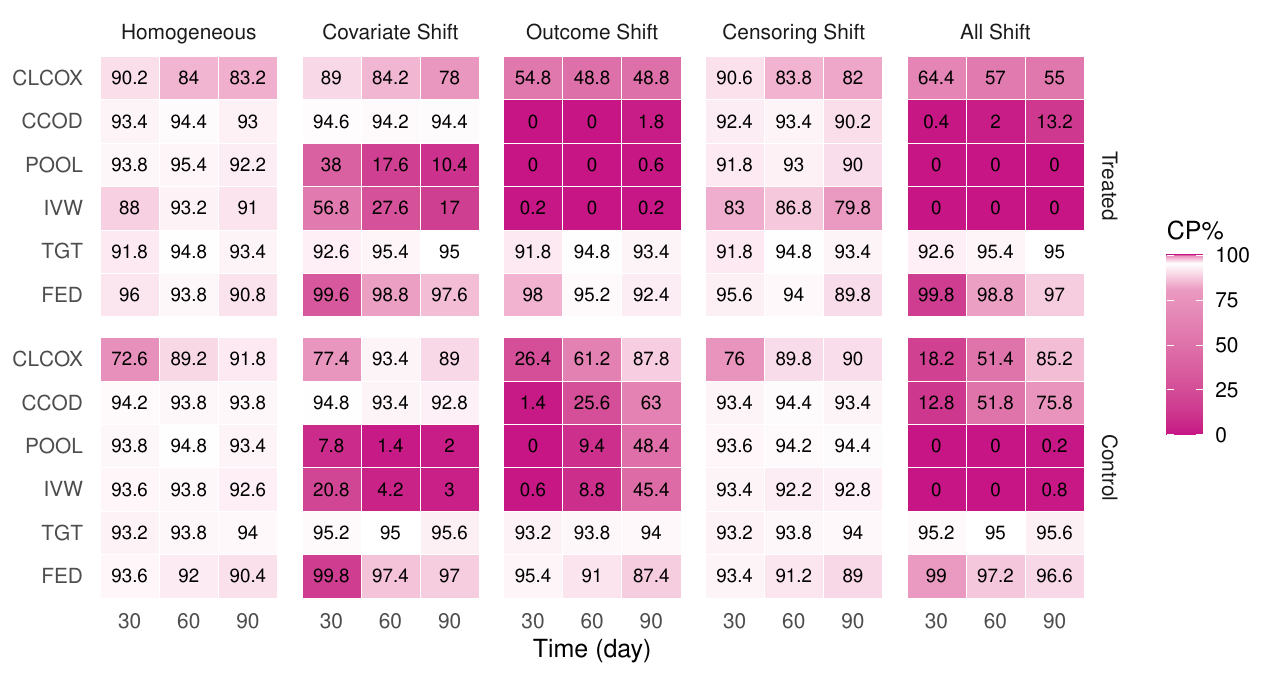}
    \includegraphics[width=\textwidth]{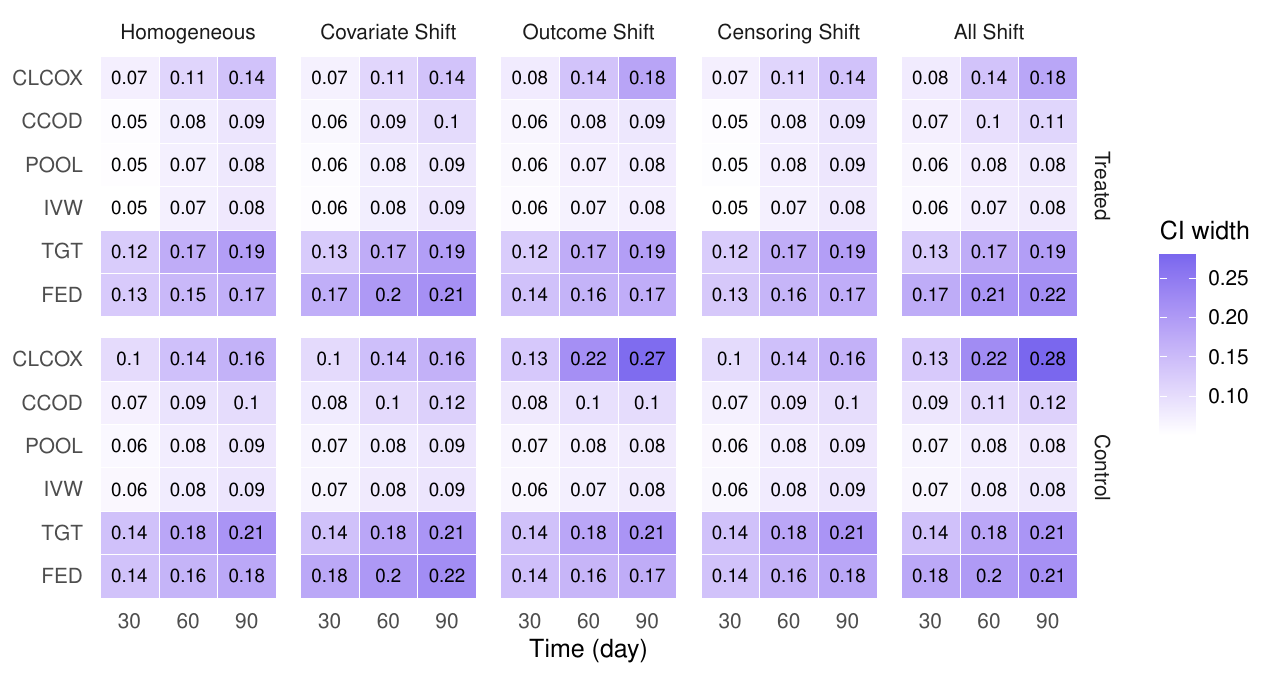}
    \caption{CP\% with 95\% nominal coverage level, and width of 95\% CI under $n_k=300$ ($k=1,2,3,4$), evaluated at days 30, 60 and 90 in simulation. The true survival probabilities at days 30, 60, and 90 are $(0.86, 0.71, 0.58)$ for the treated group and $(0.76, 0.60, 0.48)$ for the control group. }
    \label{fig:res-small-cp}
\end{figure}

\begin{figure}[H]
    \centering
    \includegraphics[width=\textwidth]{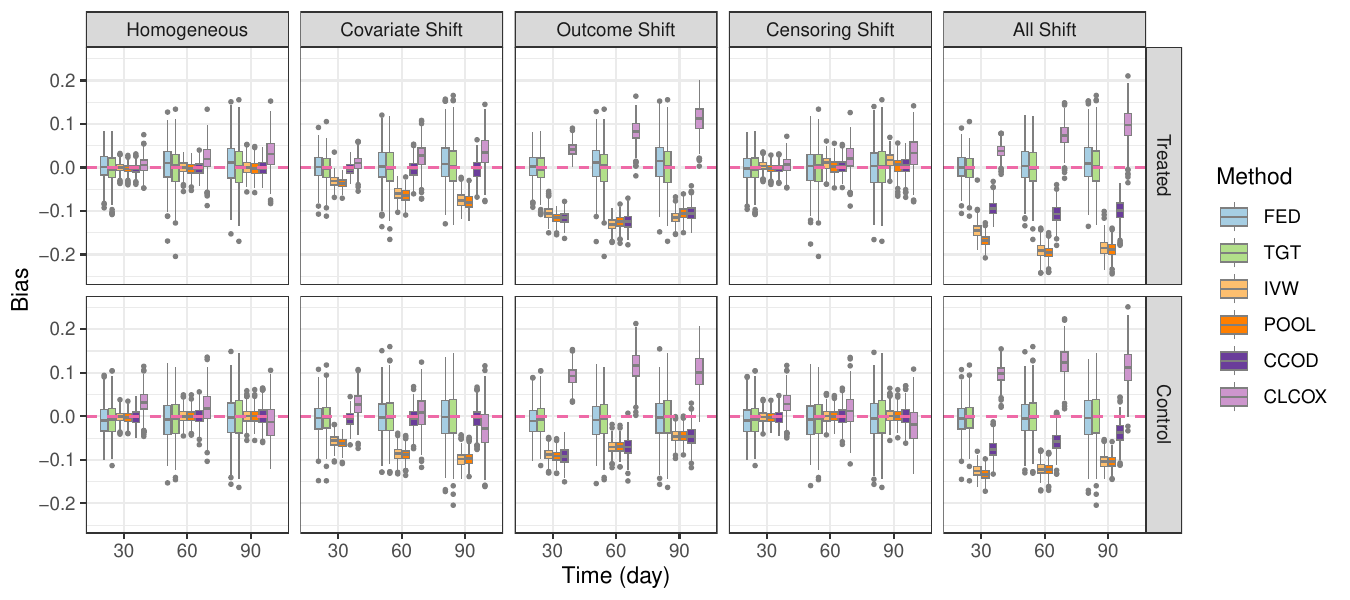}
    \caption{Estimation bias (boxplots) under $n_k=600$ ($k=1,2,3,4$), evaluated at days 30, 60 and 90 in simulation. The true survival probabilities at days 30, 60, and 90 are $(0.86, 0.71, 0.58)$ for the treated group and $(0.76, 0.60, 0.48)$ for the control group. }
    \label{fig:res-med-bias}
\end{figure}

\begin{figure}[H]
    \centering
    \includegraphics[width=\textwidth]{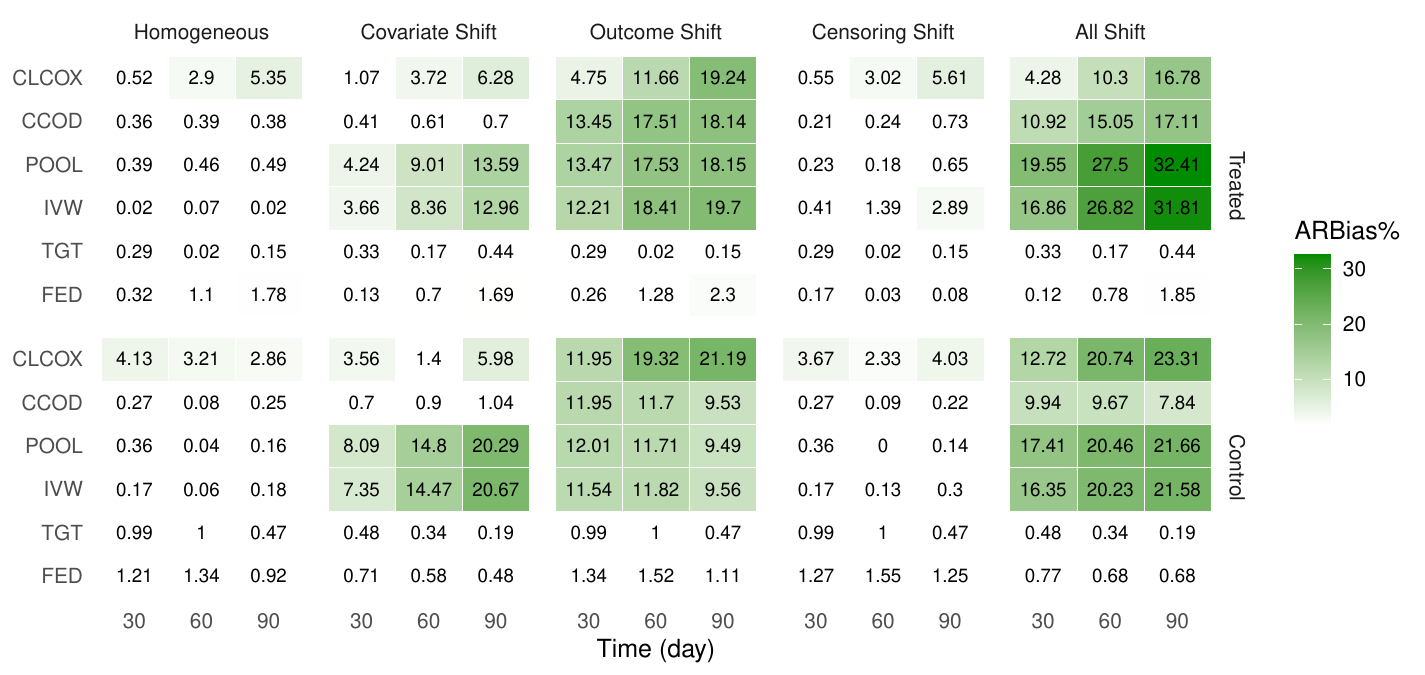}
    \includegraphics[width=\textwidth]{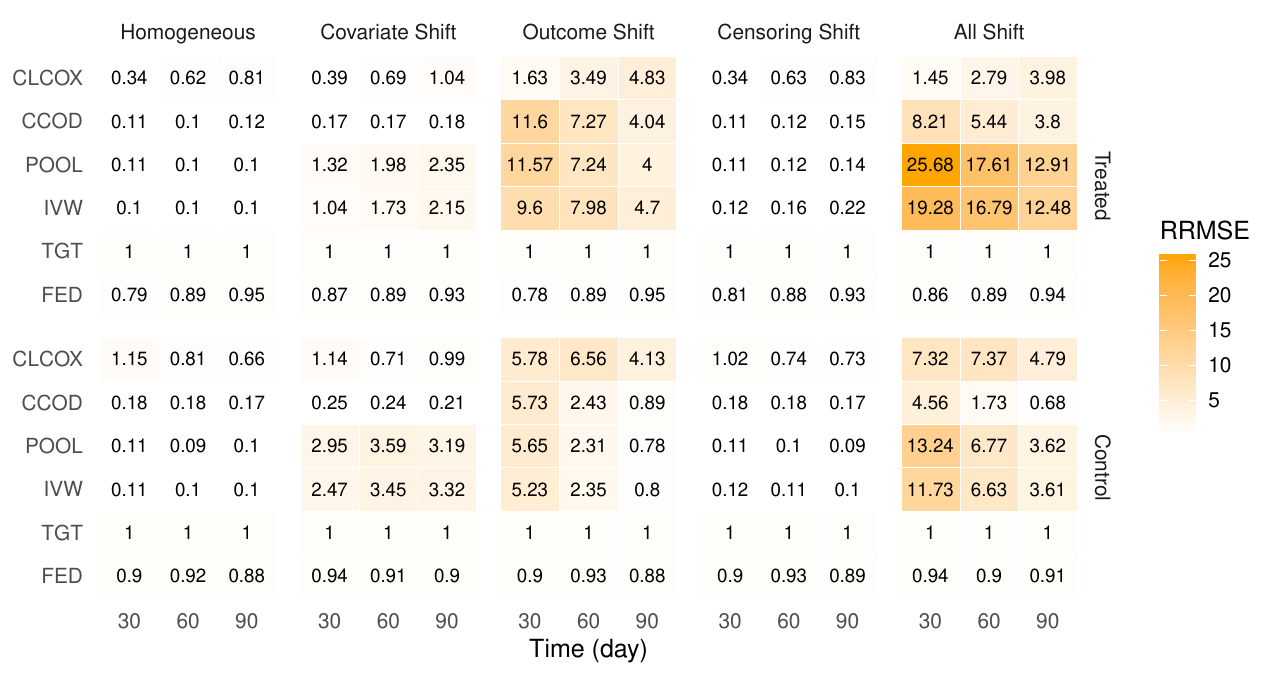}
    \caption{ARBias\% and RRMSE under $n_k=600$ ($k=1,2,3,4$), evaluated at days 30, 60 and 90 in simulation. The true survival probabilities at days 30, 60, and 90 are $(0.86, 0.71, 0.58)$ for the treated group and $(0.76, 0.60, 0.48)$ for the control group. }
    \label{fig:res-med-rrmse}
\end{figure}

\begin{figure}[H]
    \centering
    \includegraphics[width=\textwidth]{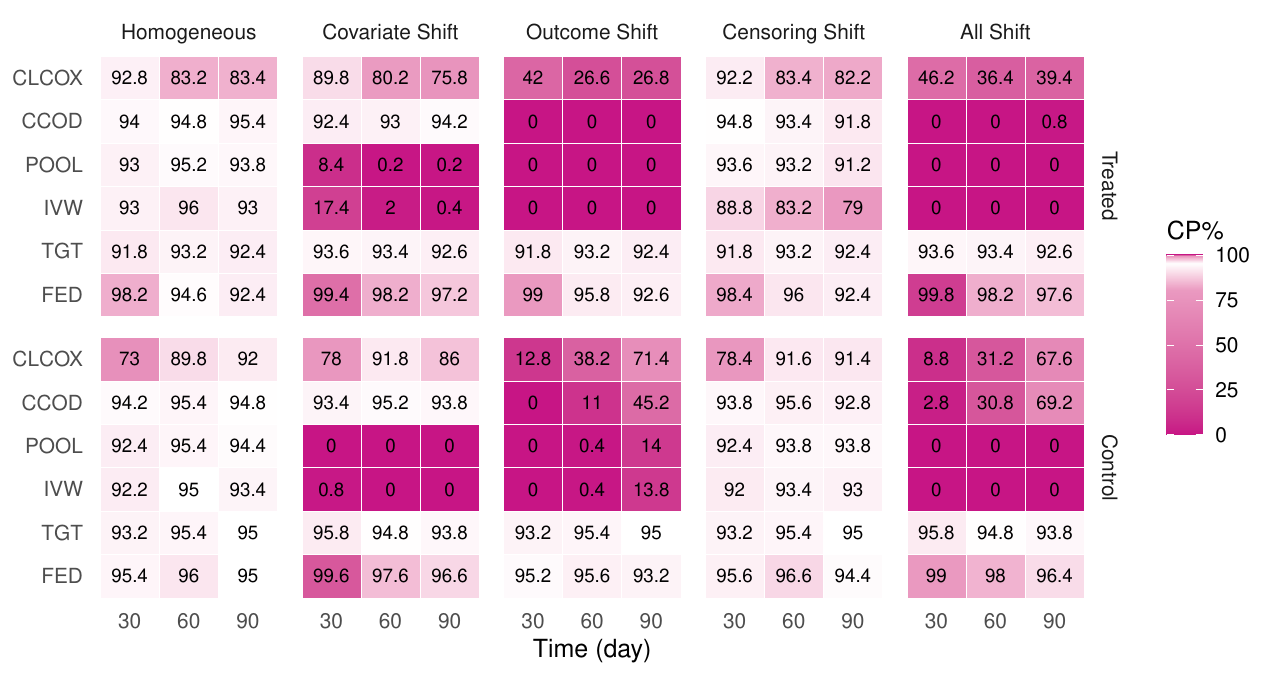}
    \includegraphics[width=\textwidth]{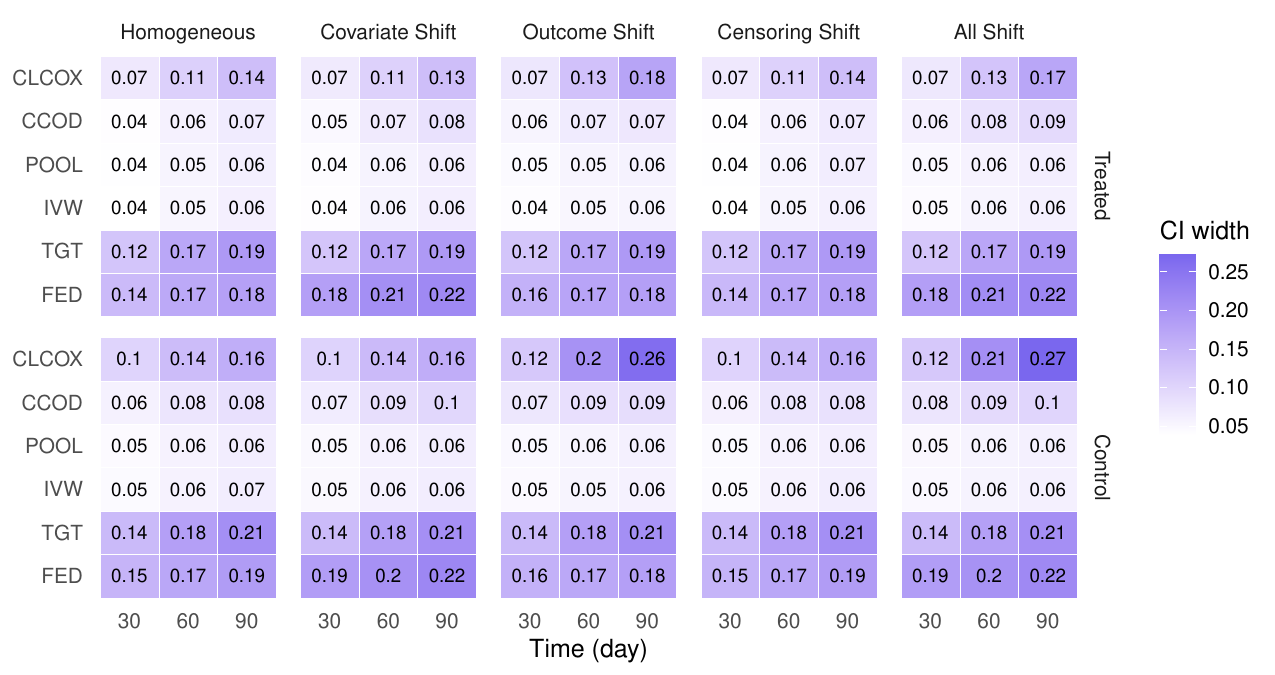}
    \caption{CP\% with 95\% nominal coverage level, and width of 95\% CI under $n_k=600$ ($k=1,2,3,4$), evaluated at days 30, 60 and 90 in simulation. The true survival probabilities at days 30, 60, and 90 are $(0.86, 0.71, 0.58)$ for the treated group and $(0.76, 0.60, 0.48)$ for the control group. }
    \label{fig:res-med-cp}
\end{figure}

\begin{figure}[H]
    \centering
    \includegraphics[width=\textwidth]{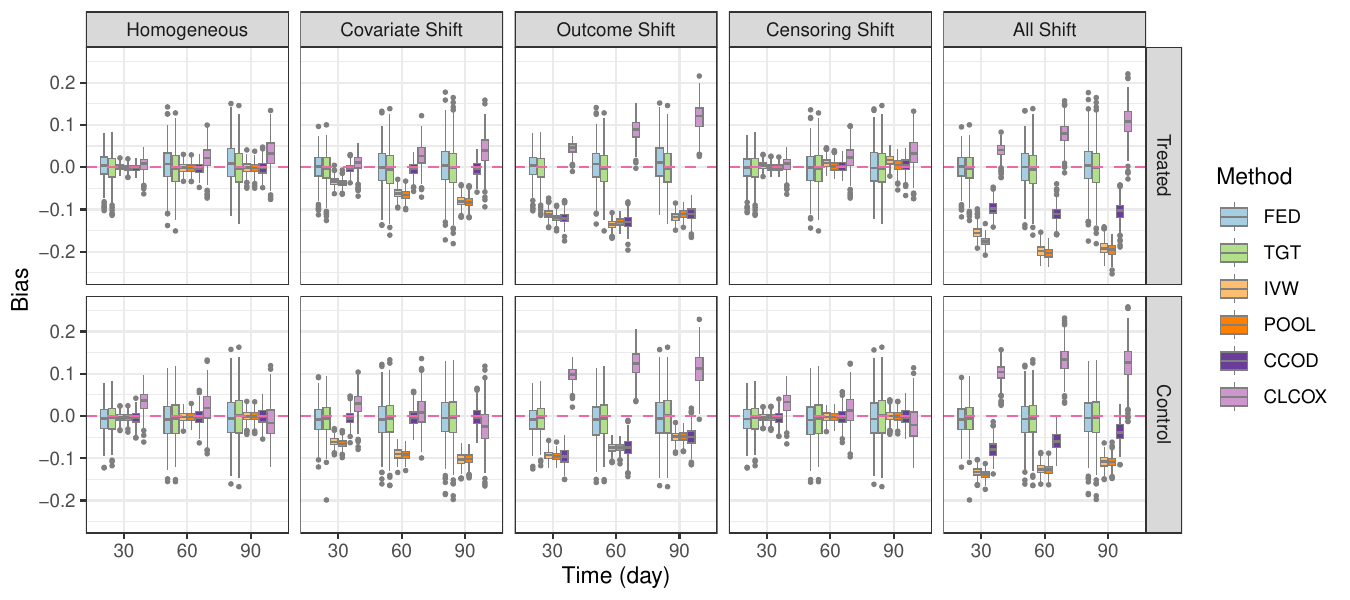}
    \caption{Estimation bias (boxplots) under $n_k=1,000$ ($k=1,2,3,4$), evaluated at days 30, 60 and 90 in simulation. The true survival probabilities at days 30, 60, and 90 are $(0.86, 0.71, 0.58)$ for the treated group and $(0.76, 0.60, 0.48)$ for the control group. }
    \label{fig:res-large-bias}
\end{figure}

\begin{figure}[H]
    \centering
    \includegraphics[width=\textwidth]{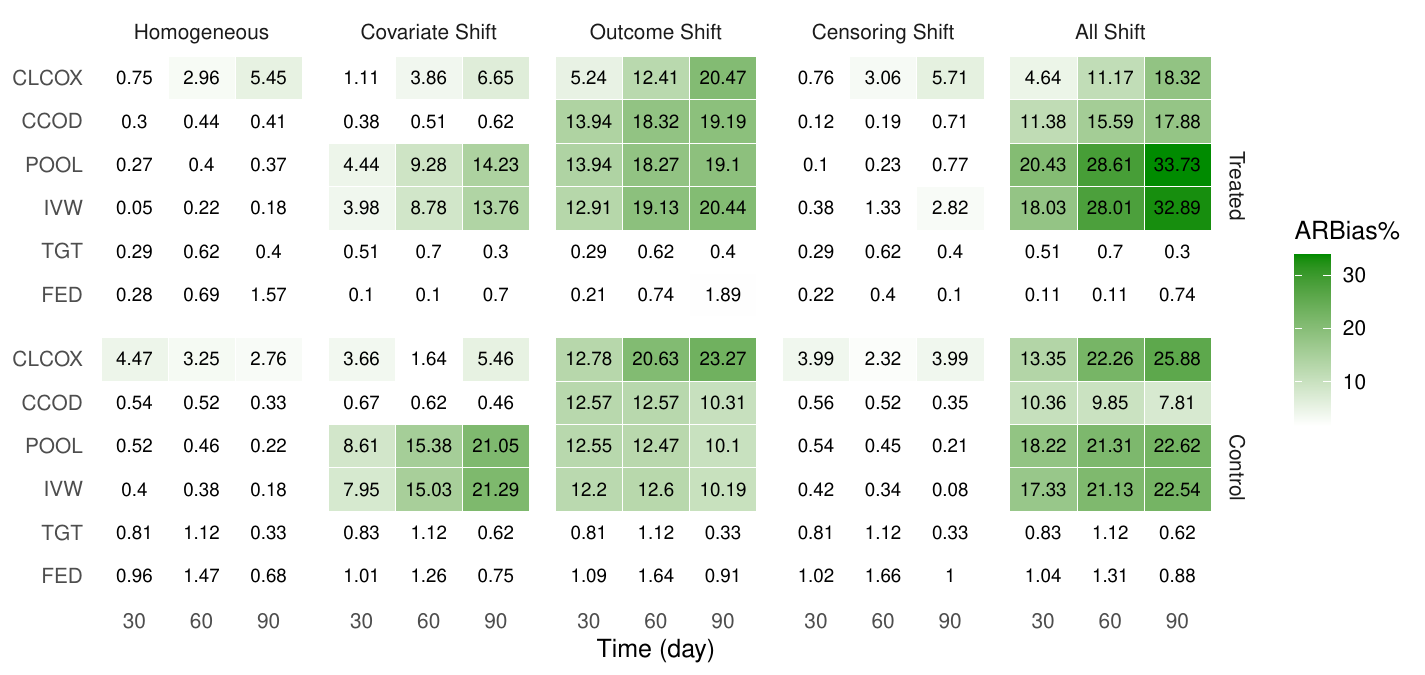}
    \includegraphics[width=\textwidth]{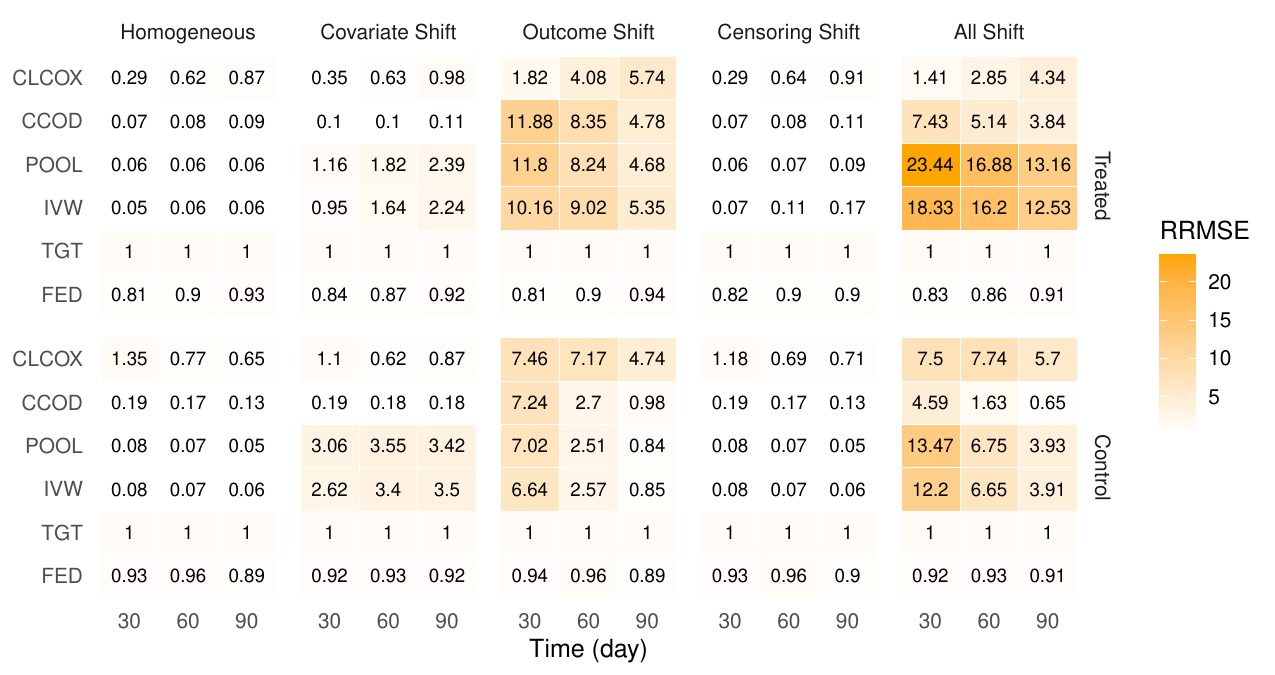}
    \caption{ARBias\% and RRMSE under $n_k=1,000$ ($k=1,2,3,4$), evaluated at days 30, 60 and 90 in simulation. The true survival probabilities at days 30, 60, and 90 are $(0.86, 0.71, 0.58)$ for the treated group and $(0.76, 0.60, 0.48)$ for the control group. }
    \label{fig:res-large-rrmse}
\end{figure}

\begin{figure}[H]
    \centering
    \includegraphics[width=\textwidth]{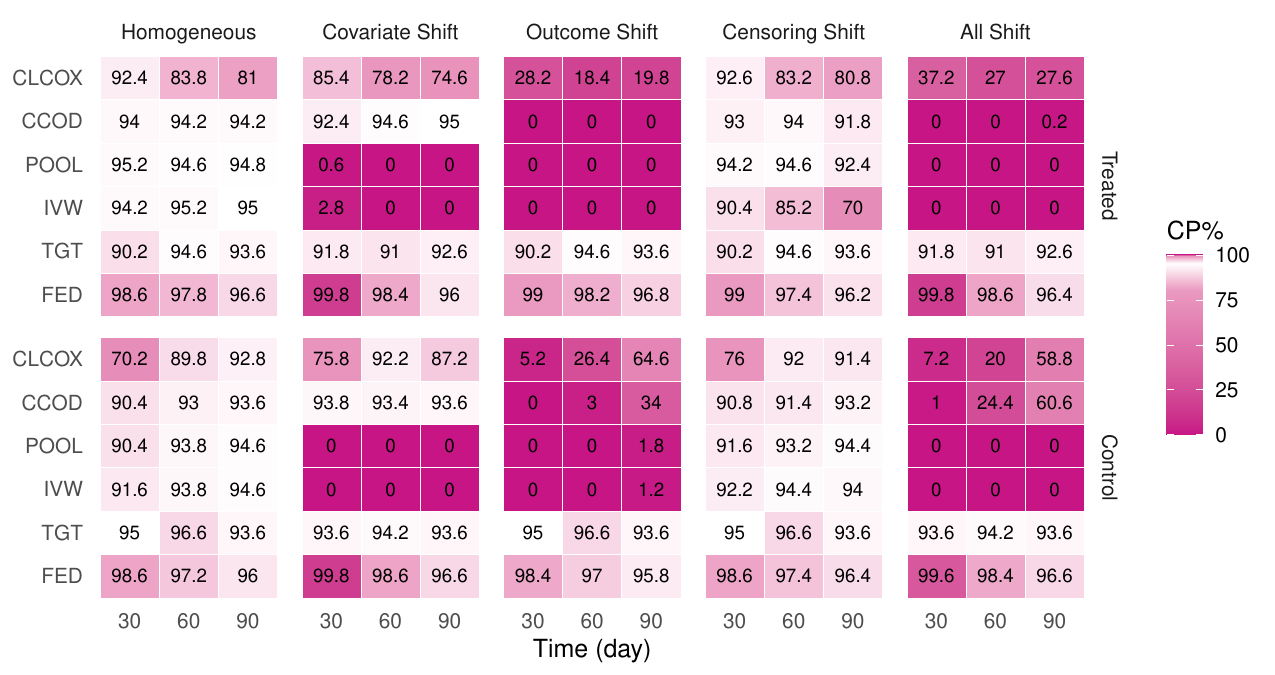}
    \includegraphics[width=\textwidth]{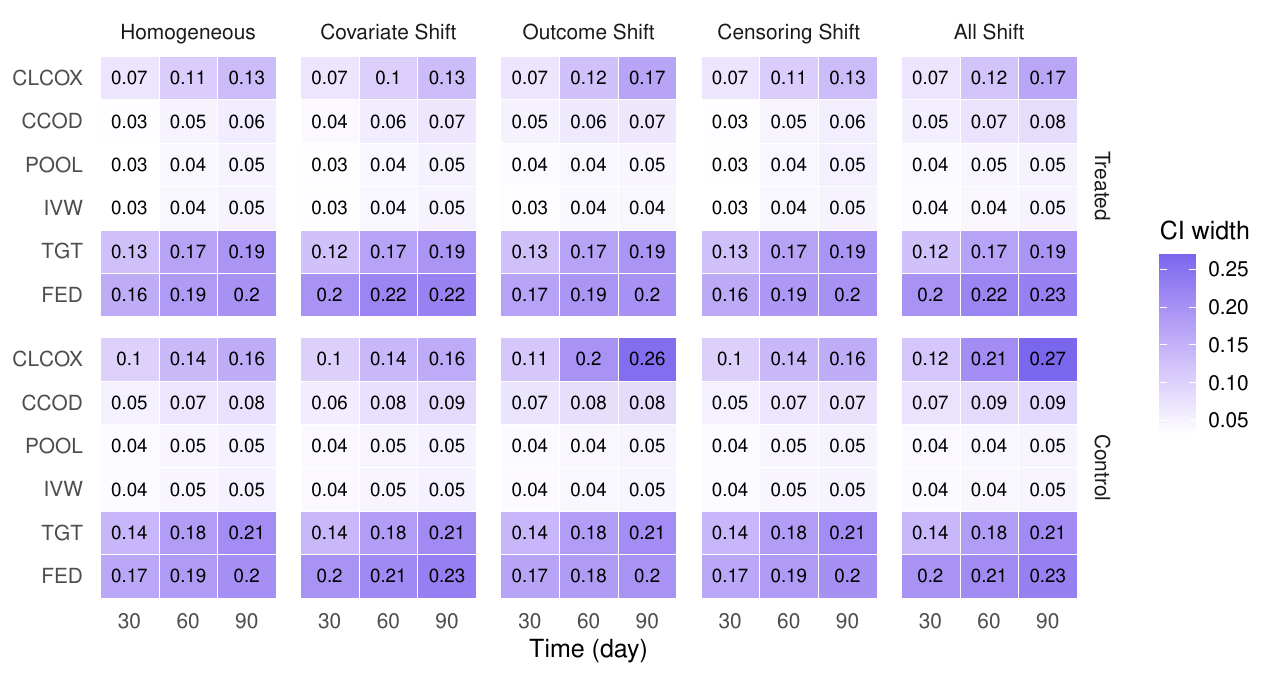}
    \caption{CP\% with 95\% nominal coverage level, and width of 95\% CI under $n_k=1,000$ ($k=1,2,3,4$), evaluated at days 30, 60 and 90 in simulation. The true survival probabilities at days 30, 60, and 90 are $(0.86, 0.71, 0.58)$ for the treated group and $(0.76, 0.60, 0.48)$ for the control group. }
    \label{fig:res-large-cp}
\end{figure}

\subsection{Additional simulation settings and results under the poor overlap setting}\label{subapp:limOverlap}

In this section, we consider a setting where the target site has poorer propensity score (treatment) overlap between treatment groups, whereas the source sites have better overlaps. 

To generate data, we modify the target-site propensity score model $\pi(\mb X)$ in Section \ref{subapp:DGP} to such that
$$
\text{logit}(\pi(\mb X)) = -1.05 + \log\left(0.3 + \exp(-120 + X_1)+\exp(-6+X_2)+\exp(-6 + X_3/4)\right), 
$$
and generate $A \sim \text{Bernoulli}(\pi(\mb X))$ accordingly for target-site samples only. This increases the dependence of $A$ on $\mb X$ and induces reduced overlap. Other sites' DGPs remain unchanged. Full results of this setting are reported below.  

\begin{figure}[H]
    \centering
    \includegraphics[width=\textwidth]{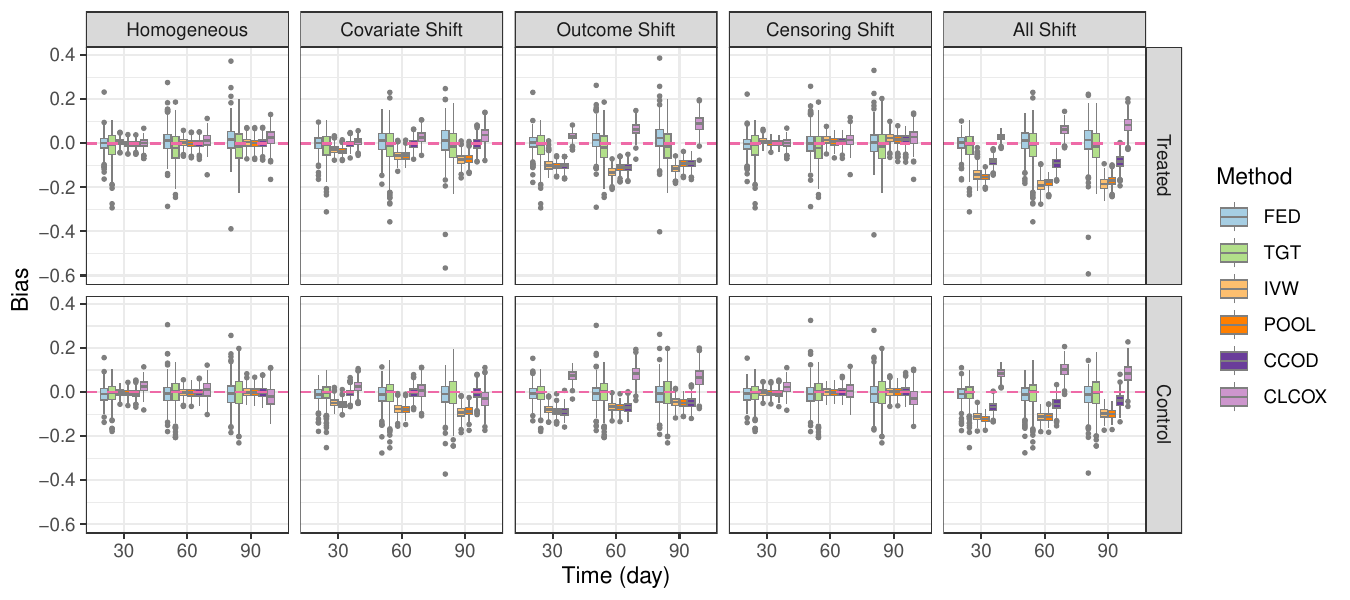}
    \caption{Estimation bias (boxplots) under $n_k=300$ ($k=1,2,3,4$), evaluated at days 30, 60 and 90 in simulation. The true survival probabilities at days 30, 60, and 90 are $(0.86, 0.71, 0.58)$ for the treated group and $(0.76, 0.60, 0.48)$ for the control group. }
    \label{fig:res-limO-bias}
\end{figure}

\begin{figure}[H]
    \centering
    \includegraphics[width=\textwidth]{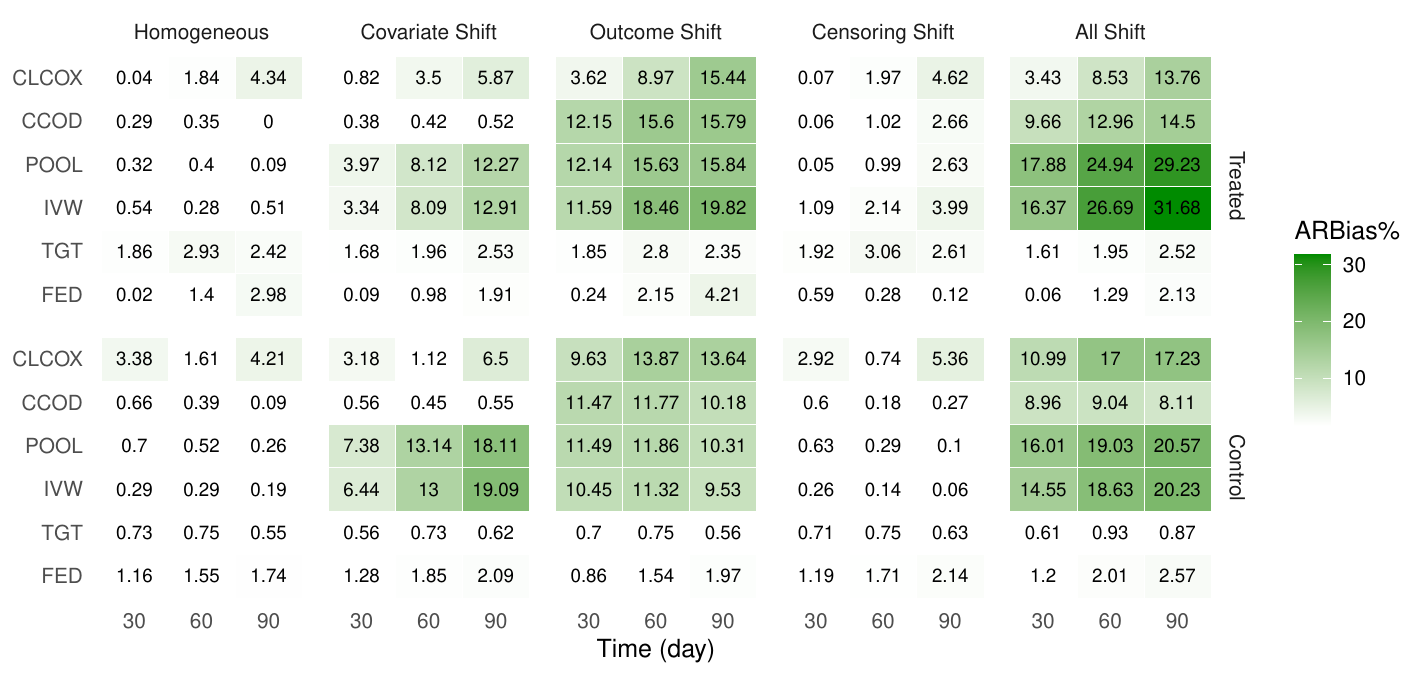}
    \includegraphics[width=\textwidth]{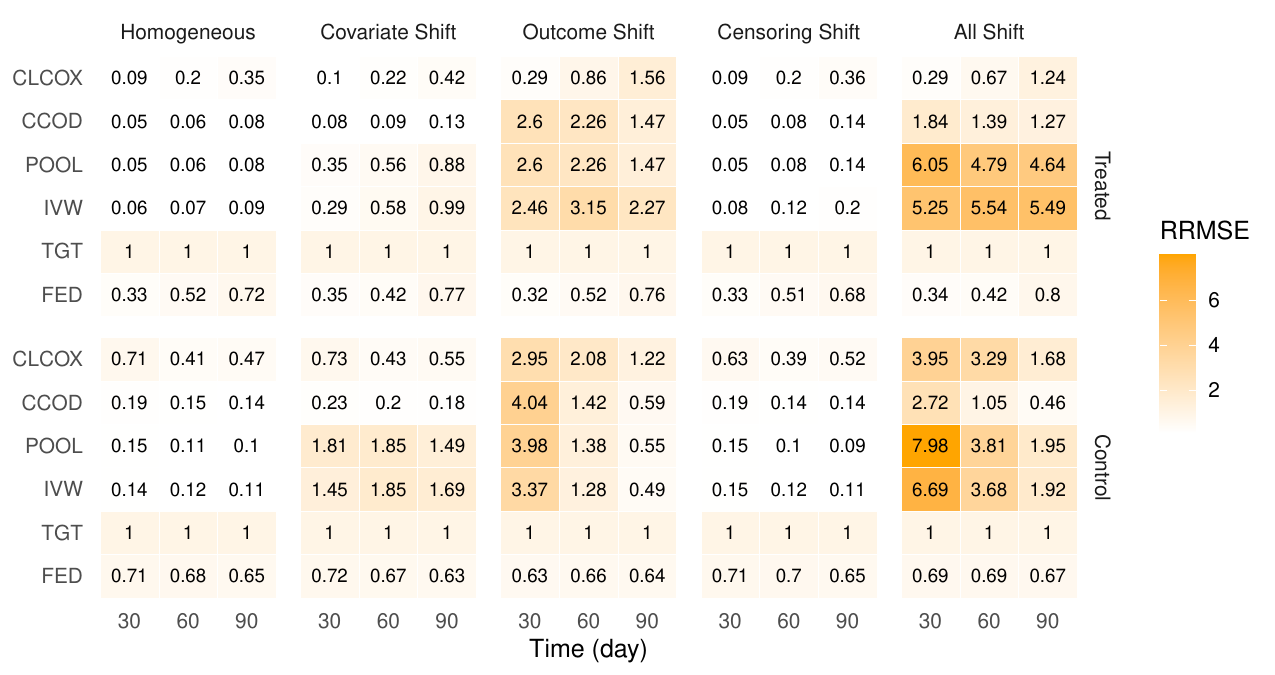}
    \caption{ARBias\% and RRMSE under $n_k=300$ ($k=1,2,3,4$), evaluated at days 30, 60 and 90 in simulation. The true survival probabilities at days 30, 60, and 90 are $(0.86, 0.71, 0.58)$ for the treated group and $(0.76, 0.60, 0.48)$ for the control group. }
    \label{fig:res-limO-rrmse}
\end{figure}

\begin{figure}[H]
    \centering
    \includegraphics[width=\textwidth]{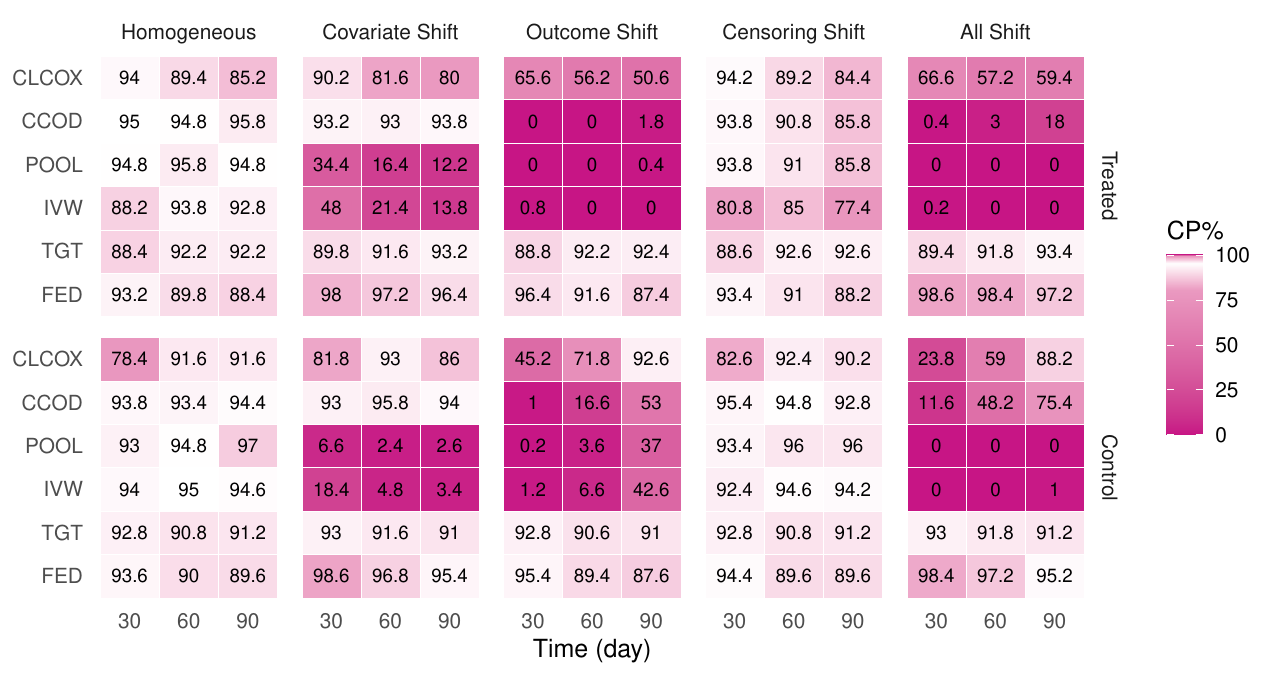}
    \includegraphics[width=\textwidth]{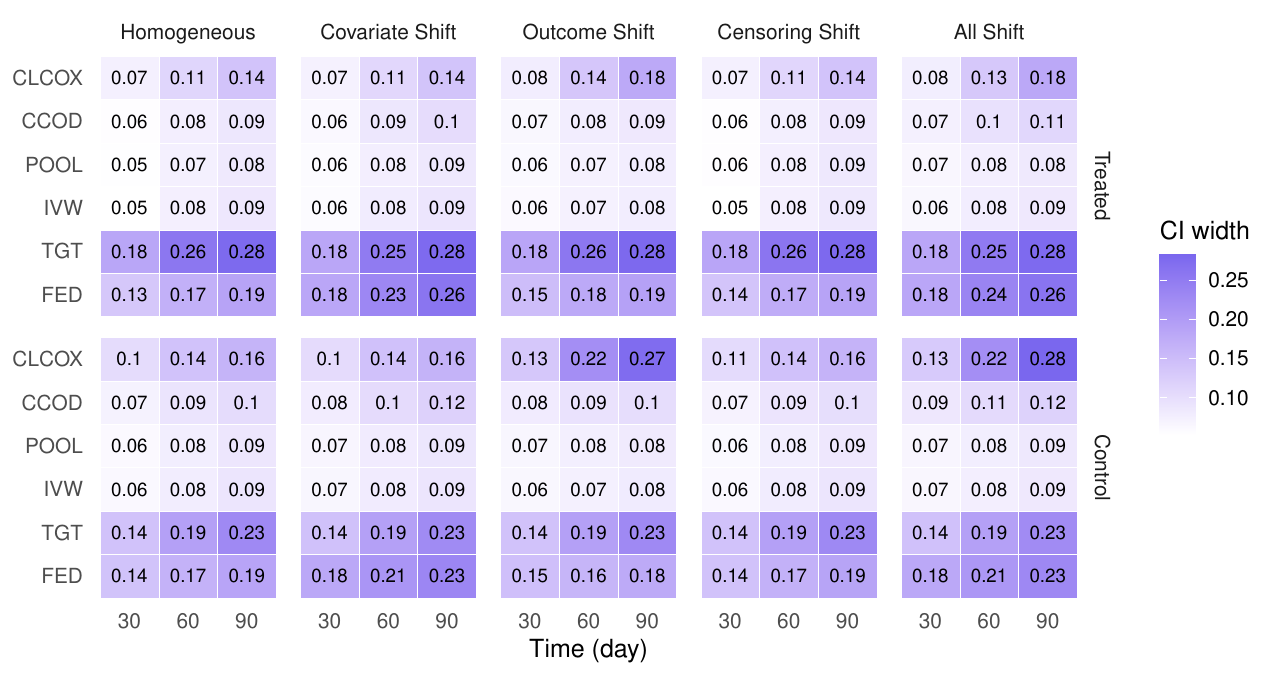}
    \caption{CP\% with 95\% nominal coverage level, and width of 95\% CI under $n_k=300$ ($k=1,2,3,4$), evaluated at days 30, 60 and 90 in simulation. The true survival probabilities at days 30, 60, and 90 are $(0.86, 0.71, 0.58)$ for the treated group and $(0.76, 0.60, 0.48)$ for the control group. }
    \label{fig:res-limO-cp}
\end{figure}




\subsection{Simulation results for other causal contrasts}

\begin{figure}[H]
    \centering
    \includegraphics[width=\linewidth]{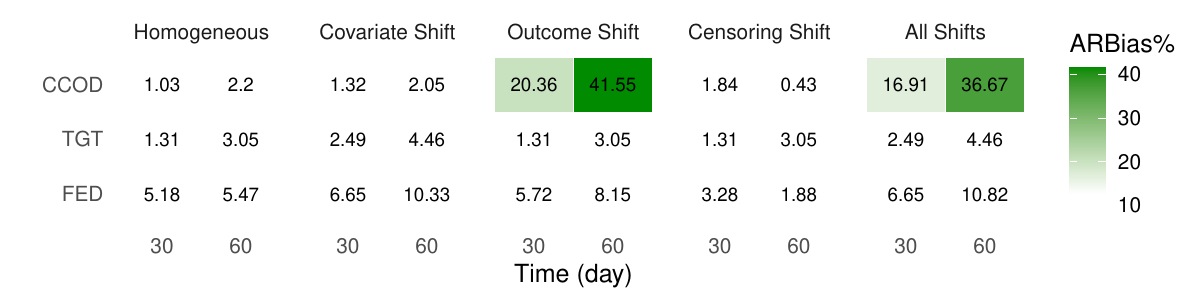}
    \includegraphics[width=\linewidth]{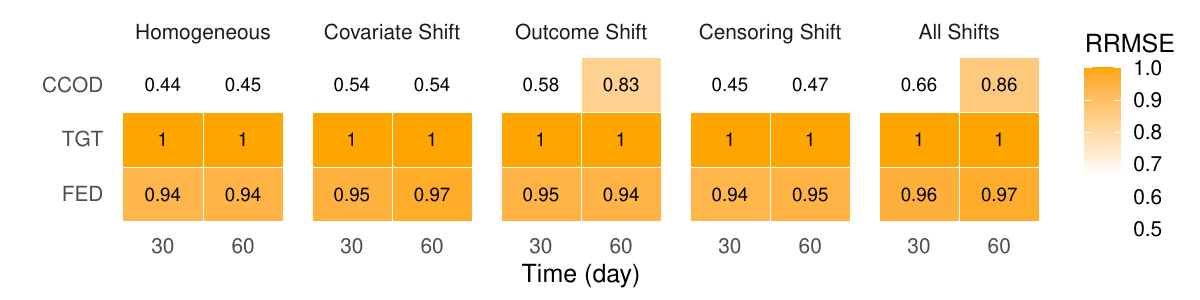}
    \includegraphics[width=\linewidth]{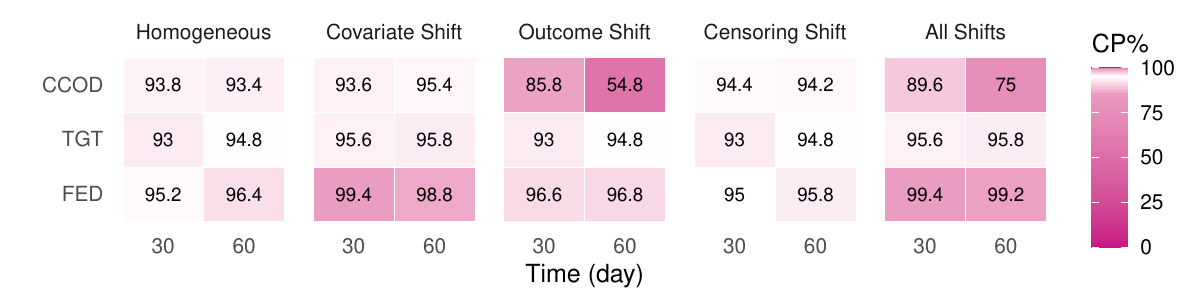}
    \caption{Simulation results for risk difference (RD), with $n_0 = 300$ and $n_k=600$ for $k = 1, 2, 3, 4$. Truths of RD: 0.094 at day 30; 0.115 at day 60. }
    \label{fig:sim-RD}
\end{figure}

\begin{figure}[H]
    \centering
    \includegraphics[width=\linewidth]{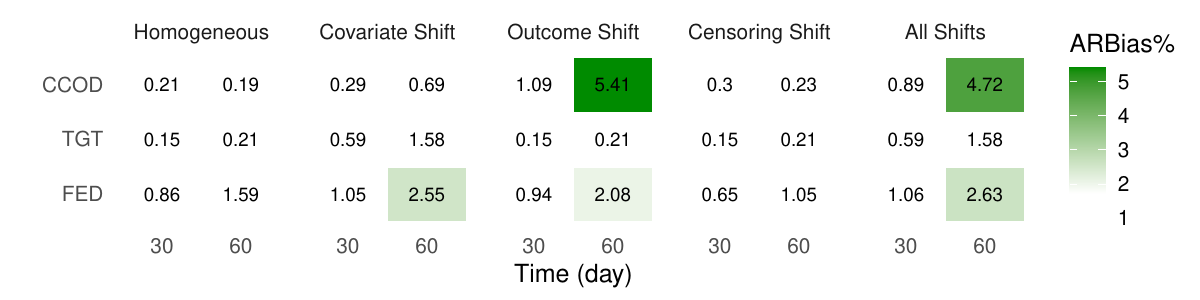}
    \includegraphics[width=\linewidth]{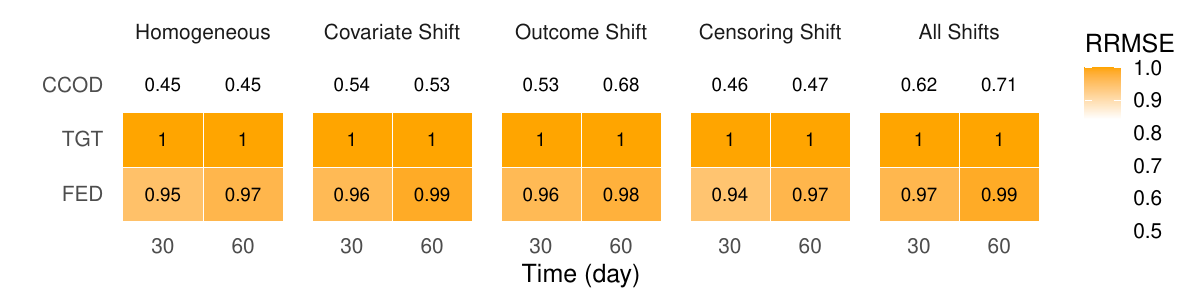}
    \includegraphics[width=\linewidth]{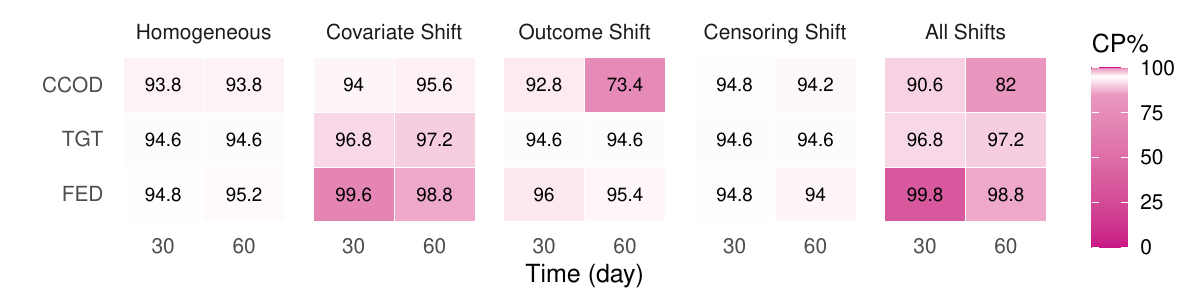}
    \caption{Simulation results for survival ratio (SR), with $n_0 = 300$ and $n_k=600$ for $k = 1, 2, 3, 4$. Truths of SR: 1.123 at day 30; 1.192 at day 60. }
    \label{fig:sim-SR}
\end{figure}

\begin{figure}[H]
    \centering
    \includegraphics[width=\linewidth]{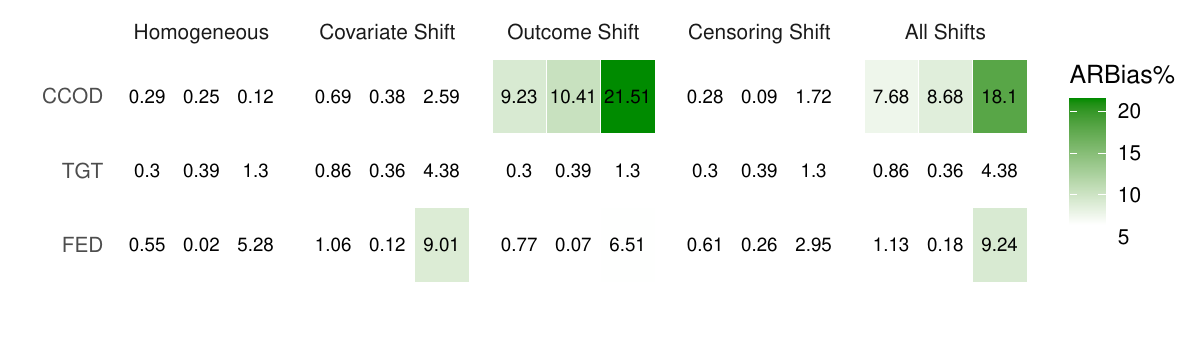}
    \includegraphics[width=\linewidth]{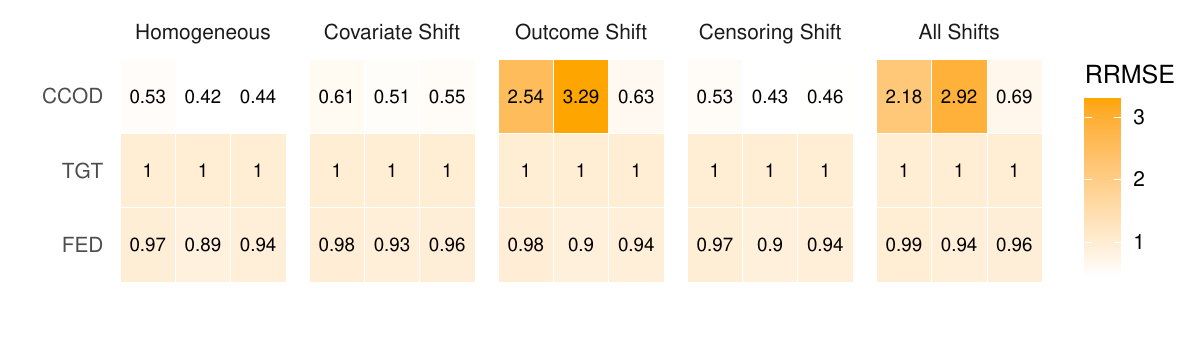}
    \includegraphics[width=\linewidth]{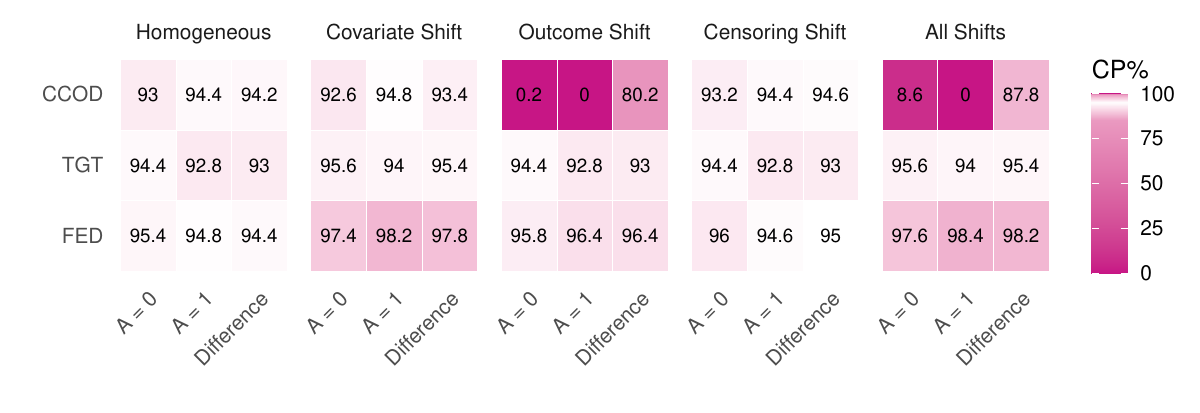}
    \caption{Simulation results for RMST up to day 60 (by treatment group and difference), with $n_0 = 300$ and $n_k=600$ for $k = 1, 2, 3, 4$. Bias is reported as the mean bias across 500 simulation replicates. Truths of RMST: 46.64 for control; 51.51 for treated; 4.88 for difference. }
    \label{fig:sim-RMST}
\end{figure}

\end{document}